\documentclass[12pt,thmsa]{article}
\usepackage{amsthm,amssymb,amsfonts,graphicx,lscape}
\usepackage{booktabs}
\usepackage[T1]{fontenc}
\usepackage[sc]{mathpazo}
\usepackage{epstopdf}
\usepackage{subfig}
\usepackage{enumerate}
\usepackage{booktabs}
\usepackage{setspace}
\usepackage{amsmath}
\usepackage[mathscr]{euscript}
\usepackage[svgnames]{xcolor}
\usepackage{geometry}
\usepackage{rotating}
\usepackage{xargs}
\usepackage{ulem}
\usepackage[authoryear,round,longnamesfirst]{natbib}
\usepackage{hyperref}
\usepackage{csquotes}
\usepackage{sectsty}
\usepackage[capitalize,nameinlink]{cleveref}
\usepackage[font={it},justification=justified]{caption}
\usepackage{dcolumn}
\usepackage{comment}
\usepackage{makecell}
\hypersetup{
    pdftitle={Strategic Complexity Promotes Egalitarianism in Legislative Bargaining},
    pdfauthor={Marina Agranov, S. Nageeb Ali, B. Douglas Bernheim, and Thomas R. Palfrey},
    colorlinks,
    breaklinks,
    naturalnames,
    citecolor=DarkBlue,
    bookmarksnumbered=true,
    urlcolor=Indigo,linkcolor=DarkBlue
}
\renewcommand{\Re}{\mathbb R}
\renewcommand{\epsilon}{\varepsilon}
\newcommand\crossout{\bgroup\markoverwith
      {\textcolor{blue}{\rule[0.5ex]{2pt}{0.8pt}}}\ULon}

\DeclareMathOperator*{\argmax}{arg\,max}

\setlength{\oddsidemargin}{0in}
\setlength{\evensidemargin}{0in}
\setlength{\topmargin}{0in}

\newtheorem{proposition}{Proposition}
\newtheorem{lemma}{Lemma}

\newtheorem{assumption}{Assumption}

\geometry{left=1in,right=1in,top=1in,bottom=1in}

\normalem

\sectionfont{\color{DarkRed}}
\subsectionfont{\color{DarkRed}}
\subsubsectionfont{\color{DarkRed}}\begin{document}

\title{Strategic Complexity Promotes Egalitarianism in Legislative Bargaining\thanks{We thank John Duffy, Michael McBride and the staff of the Experimental Social Science Laboratory (ESSL) at UC Irvine for their support and for granting access to the laboratory and subject pool. We thank the Hacker
Social Science Experimental Laboratory (SSEL) at Caltech for laboratory facilities and for supporting the software development for the experiment. The paper has benefited from comments by participants at various seminar presentations. Palfrey thanks the Department of Economics at Columbia University for their hospitality during his visit as a Mitchell Professor in Fall 2019. We thank Salvador Candelas, Han Ng, and Jeffrey Zeidel for their excellent research assistance. We gratefully acknowledge the financial support of the National Science Foundation (SES-1426560, SES-1530639, SES-2343948). The experiment was approved by the Caltech IRB (Protocol IR17-0271).}}

\author{Marina Agranov\thanks{Division of the Humanities and Social Sciences, California Institute of
Technology, Pasadena, CA} \and S. Nageeb Ali\thanks{Department of Economics, Pennsylvania State University, State College, PA} \and B. Douglas Bernheim\thanks{Department of Economics, Stanford University, Palo Alto, CA} \and Thomas R. Palfrey\thanks{Division of the Humanities and Social Sciences, California Institute of Technology, Pasadena, CA}}

\date{\today}
\maketitle
\thispagestyle{empty}

\begin{abstract}

Strategic models of legislative bargaining predict that proposers can extract high shares of economic surplus by identifying and exploiting weak coalition partners. However, strength and weakness can be difficult to assess even with relatively simple bargaining protocols. We evaluate experimentally how strategic complexity affects the ability to identify weak coalition partners, and for the partners themselves to determine whether their positions are weak or strong. We find that, as strategic complexity progressively obscures bargaining strength, proposers migrate to egalitarianism, in significant part because non-proposers begin placing substantial weight on fairness. Greater analytic skill dampens but does not eliminate these patterns.

\end{abstract}

\vspace{6mm}

Keywords: legislative bargaining, egalitarianism, complexity, laboratory experiment \vspace{6mm}

JEL codes: D72, C92, C73\newline
\onehalfspace
\clearpage

\setcounter{page}{1}

\section{Introduction}\label{Section-Introduction}

In legislative negotiations, legislators formulate successful proposals by making concessions in order to build winning coalitions. According to the applicable theoretical literature, two central principles govern this process. First, a skilled negotiator keeps the coalition as small as possible to minimize the number of concessions; second, the negotiator seeks out weak partners to minimize the sizes of those concessions.\footnote{These principles emerge in the seminal work of \cite{baron1989bargaining} and the subsequent theoretical literature on legislative bargaining.} However, it is also apparent from the pertinent literature that even relatively simple bargaining protocols make it difficult to assess which players are weak. In this paper, our broad objective is to evaluate the degree to which parties to such negotiations identify weak coalition partners and exploit perceived weakness. Our leading hypothesis is that strategic complexity leads people to make and support egalitarian proposals, even in settings where theory predicts that one party should be able to dictate the outcome.

To appreciate the strategic complexity of legislative bargaining, consider the nature of the task facing a proposer who seeks to craft the best deal for herself. The proposer anticipates that another player, if rational, would support a proposal only if its passage makes that player (weakly) better off than its rejection. But what happens following rejection may depend subtly on the dynamics of future play. To identify weak partners, a proposer must therefore understand how the process would subsequently play out. An additional wrinkle is that, even if the proposer can identify players who should, in theory, be willing to acquiesce for a small concession, she may not be entirely confident that weak players see themselves as weak. If weak players incorrectly believe that they are negotiating from positions of strength, proposals that attempt to exploit their objective weakness will fail. In such a scenario, a proposer might be inclined to propose a relatively egalitarian outcome to hedge against the possibility that she or others have incorrectly mapped out the future play.

The following example illustrates some of these subtleties. A three-person legislature consisting of Alice, Bob, and Carol is tasked with dividing one dollar. Bargaining takes place in a series of no more than three rounds. In each round, one legislator makes a proposal, which is then put to a vote. If any proposal receives two or more votes, it is adopted and the process ends. Alice makes the first proposal, Bob makes the second, and Carol makes the third. If all three proposals fail, the parties receive default payoffs (or outside options) $\mathbf{v}\equiv(v_{A},v_{B},v_{C})$. We assume that $v_A+v_B+v_C\leq 1$, and $v_i \geq 0$ for each $i$. Players do not discount future payoffs.

Which partner is weaker from Alice's perspective, Bob or Carol? One might conjecture that the answer hinges on the default payoffs: a lower value of $v_{B}$ (respectively $v_{C}$) makes Bob (respectively Carol) easier to exploit. And yet, barring a single knife-edge exception, in subgame perfect equilibria, Carol is \emph{always} the weaker partner for Alice, even if $v_{C}$ is large. In fact, Carol is so weak that she agrees to let Alice claim the entire dollar, which is the unique subgame perfect equilibrium outcome.\footnote{As is standard, we consider subgame perfect equilibria in which players do not play weakly dominated actions at the voting stage.} 

To understand this result, suppose that $v_A < v_B$, which implies $v_{A} < 0.5$. We analyze this game by backward induction. If the first two offers are rejected and the third round is reached, Carol would form a minimal winning coalition with Alice by proposing $(v_{A},0,1-v_{A})$, given that Bob's default payoff is higher. In the preceding round, Bob (the proposer) and the others expect to receive these continuation payoffs if his proposal is rejected. Because $v_A<(1-v_A)$, Bob will propose $(v_{A},1-v_{A},0)$, which will pass with Alice's support. Thus, Carol should expect to obtain zero if the first proposal is rejected and bargaining reaches the second round. Accordingly, in the first round, Alice will recognize and exploit Carol's extreme weakness and form a minimal winning coalition with her by proposing $(1,0,0)$.\footnote{We construe this proposal as a coalition with Carol because, in equilibrium, Carol resolves her indifference in favor of it. There do not exist equilibria in which Carol rejects with positive probability because she has a strict incentive to accept any proposal that offers her $\epsilon>0$.} Albeit with different dynamics, the same conclusion follows if $v_A > v_B$.\footnote{If the third round is reached, Carol will form a minimal winning coalition with Bob by proposing $(0,v_{B},1-v_{B})$. Consequently, if the second round is reached, Bob will form a minimal winning coalition with Alice by proposing $(0,1,0)$. Accordingly, in the first round, Alice will form a minimal winning coalition with Carol by proposing $(1,0,0)$. The knife-edge exception referenced in the text occurs when $\mathbf{v}=(0.5,0.5,0)$. In that case, there is an equilibrium in which Carol's proposal is $(0.5,0,0.5)$, Bob's proposal is $(0,0.5,0.5)$, and Alice's initial proposal is either $(0.5,0,0.5)$ or $(0.5,0.5,0)$.} Intuitively, Carol's strong position in the third round induces Bob to exclude her, which in turn makes her extremely vulnerable from Alice's first-round perspective.

One might worry that these theoretical predictions require a high degree of strategic sophistication. For instance, if $v_{B} < v_{C}$, Alice might mistakenly infer that Bob is the weaker partner. Alternatively, even if Alice is sophisticated, she might fear that Carol and Bob are naive. If Carol falsely believes that she will benefit in the continuation game because her default payoff is high, she will refuse offers that she ought to accept. Once Alice is uncertain about her own strategic reasoning, or about Bob's and Carol's, she may hedge her bets by treating a coalition partner more generously, or even by making an offer that benefits everyone (i.e., the grand coalition). Thus, even if no player inherently favors egalitarian outcomes, strategic complexity may promote egalitarianism as a hedge against uncertainty concerning the quality of others' or one's own reasoning about the game. 

We formalize our conjecture concerning egalitarianism by studying optimal offers in settings where the proposer does not know how others evaluate the continuation game that would follow rejection of her offer. We also assume that strategic complexity causes the proposer to either narrowly frame the problem by focusing only on the current stage game or hold pessimistic beliefs about her own continuation payoff. We show that, as the proposer's beliefs about others' strategic inferences become increasingly diffuse (in the sense of entropy) and their attention to or hopes for the continuation game dwindle, the optimal offer converges to a minimal winning coalition offer with an equal split. 

We evaluate these issues, including the aforementioned conjecture, by studying legislative bargaining protocols in a laboratory experiment. We have three main objectives: (1) determine the extent to which people identify and exploit weak coalition partners; (2) ascertain the reasons for failures to do so in settings where they occur; and (3) identify the systematic behavioral patterns (potentially including egalitarianism) that arise in those settings.   

With respect to the second objective, we decompose the main theoretical hypothesis into several components. The \emph{exploitative intent hypothesis} holds that proposers wish to extract as much surplus as possible from bargaining partners. The \emph{self-awareness hypothesis} holds that non-proposers understand their own weaknesses or strengths as coalition partners. The \emph{exploitative ability hypothesis} holds that proposers have the ability to exploit weak coalition partners, in terms of both identifying them and having a sufficiently nuanced understanding of continuation play that they can fine-tune their offers for maximal exploitation. If bargaining theory's predictions concerning the exploitation of weakness prove to be inaccurate, the fault could lie in any one of these component hypotheses. In particular, the exploitative intent hypothesis could fail if social motives overwhelm personal interest (e.g., if people are sufficiently fair-minded or altruistic), and the self-awareness and exploitative ability hypotheses could fail if people have limited ability to reason strategically (e.g., if they struggle with backward induction). 

 In the laboratory, we study three-player bargaining games with majority rule so that the ideal minimum winning coalition includes a single weak partner. Our experimental design involves treatments that differ with respect to the \textit{complexity} of the environment, the \emph{predictability} of bargaining power in future rounds, and the \emph{subject pool}. Because strategic reasoning is more complicated in games with larger numbers of bargaining rounds, we manipulate complexity by varying the length of the game from one round to three. We manipulate the predictability of bargaining power by varying the information subjects have about the identity of future proposers. According to the theoretical analysis of \cite{ali2019predictability}, greater predictability ought to allow initial proposers to obtain larger shares by weakening potential coalition partners. We vary the subject pool by recruiting participants at two universities: the University of California at Irvine (UCI) and the California Institute of Technology (Caltech). This variation is of interest because strategic reasoning draws on mathematical skills, and Caltech students score higher in mathematics on standardized tests.\footnote{According to PrepScholar, the average Math SAT scores for UCI and Caltech students are 675 and 795, respectively.} Finally, all of our analysis focuses on the behavior of subjects who have gained experience with the pertinent protocols.

We begin with an analysis of average payoffs. Across our treatments, the first proposers receive smaller shares than standard theory predicts. The discrepancy is particularly striking for three-round games, where these shares are below one-half of the total prize on average. Contrary to theory, the first proposer’s share in three-round games does not vary with the predictability of the proposer recognition order, which ought to affect the identifiability and exploitability of weak partners. Significantly, the first proposer’s share declines from one-round to two-round games, and from two-round to three-round games, despite the fact that every multiple-round game can be reduced to a single-round game (with equilibrium payments for the continuation game playing the role of the default payments). This pattern suggests that, in more complex bargaining environments, proposers face greater difficulty identifying and exploiting players who believe they are in weak positions. Moreover, as the bargaining game becomes more complex, outcomes migrate towards egalitarianism. Although social motives offer a contributory explanation for incomplete extraction of surplus by the first proposers, they cannot account for these comparative statics.

The remainder of our analysis examines proposals and voting behavior to determine how participants arrive at these bargaining outcomes.

First, we study the nature of proposals. We classify a proposal as \emph{dictatorial} if all non-proposing members would receive less than five percent, a \emph{minimal winning coalition} (henceforth MWC) if exactly one non-proposing member would receive at least five percent, and a \emph{grand coalition} if each non-proposing member would receive at least five percent. A proposal is \emph{egalitarian} within a coalition type if the difference in shares between coalition partners (including the proposer) is no more than five percent of the prize. In one-round games, Caltech subjects overwhelmingly make MWC offers (86\%), always to weak partners. Among those offers, only 9\% are egalitarian, and the average proposer share is 77\%. Thus, the proposers' behavior in these strategically simple settings shows clear indications of exploitative intent and exploitative ability. In two-round games, the fraction of MWC offers is about the same, but 12\% of those offers involve partners who are strong according to theory; these partners appear to be weak based on default payoffs. The average proposer share in MWC offers to weak partners dips slightly to 69\%, but egalitarianism remains relatively rare (7\%). These patterns point to a small decline in exploitative ability. In three-round games for which theory unambiguously identifies a weak partner, the fraction of MWC offers is 77\%, and 25\% of those involve theoretically strong partners. Moreover, the average proposer share for MWC offers to weak partners is just over 50\%, and 62\% of these proposals are egalitarian. All the other proposals in these games are grand coalition offers, two-thirds of which are egalitarian. We interpret these patterns as reflecting a diminished ability to identify weak partners as well as a dramatically reduced inclination to try to exploit them. 

To the extent exploitative intent is universal but exploitative ability is correlated with analytical skill, we might expect to see similar behavior at UCI in one-round games, along with a more immediate and dramatic shift toward egalitarianism as complexity rises. Our findings are generally consistent with that prediction. In one-round games, UCI subjects are significantly less likely to make MWC offers (68\%), but nearly all of them are to weak partners (99\%); of the MWC offers to weak partners, only 12\% are egalitarian, and the average proposer share is 70\%. Hence, we confirm exploitative intent, as well as a reasonably high degree of exploitative ability in this simple setting. In two-round games, the fraction of MWC offers increases to 74\% but nearly two-thirds of them are to partners who are strong according to theory; such players appear weak because of their smaller default payoffs. Among those offers, the egalitarian fraction rises to 63\%, and the average proposer share drops to 53\%. UCI subjects also exhibit no ability to identify weak partners in the three-round games for which theory unambiguously identifies a weak partner: with perfect predictability, only 54\% of the MWC offers include the weak partner, and with partial predictability, this fraction drops to 31\%. Moreover, among MWC offers to weak partners in the perfect-predictability games, egalitarianism runs as high as 90\% (with an average proposer payoff of 52\%). Among grand coalition offers in games with perfect predictability, egalitarianism rises from 18-19\% in one- and two-round games to 75\% in three-round games. Thus, we see egalitarianism emerging to a striking degree as a response to strategic complexity.

The next step in our analysis is to examine votes (that is, responses to proposals). We relate voting behavior to three variables: the player’s own share, whether the player is a weak or strong partner, and a measure of inequality. As expected, the likelihood of supporting a proposal always increases along with the player’s own share. 

We evaluate the self-awareness hypothesis by examining the relationship between voting and a player’s theoretical bargaining strength. Specifically, we ask whether strong partners are more difficult to recruit as coalition partners. In one-round games, behavior is consistent with this implication. The same is true for two-round games, but the strong-partner effect weakens in the UCI sample. In three-round games, the effect weakens in the Caltech sample and disappears entirely in the UCI sample. Thus, complexity reduces self-awareness of weakness, much as it reduces the ability of proposers to identify and exploit weakness. Interestingly, we find some evidence that the ability to identify weakness declines more rapidly with complexity for proposers than for non-proposers.\footnote{Specifically, in two-round games at UCI, non-proposers identify weakness more successfully than proposers.} A possible explanation is that non-proposers are more likely to think ahead to the future proposals they may make.

We provide further evidence concerning the hypothesis that complexity promotes egalitarianism by examining the relationship between voting and fairness. There is no indication that fairness matters to voters in one- and two-round games at Caltech, or in one-round games at UCI, but it does matter to them in three-round games at Caltech, and in two- and three-round games at UCI. These results imply that, to some degree, the observed relationship between the frequency of egalitarian proposals and strategic complexity is a rational response to voters' strategies. 

The final portion of our analysis evaluates the optimality of proposals given the voting behavior. We ask whether the migration towards egalitarian outcomes in more complex games is an optimization failure on the part of proposers or approximately optimal given the voting behavior. Our evidence supports the latter possibility: conditional on observed voting behavior, the optimal proposal migrates toward within-coalition egalitarianism as we increase the number of rounds. Consequently, actual offers to MWCs with weak partners are nearly optimal. Egalitarian offers to grand coalitions also become progressively less suboptimal as complexity rises. Accordingly, the nature of voting responses by non-proposing players accounts for much of the observed migration to egalitarianism as complexity rises.

The paper is organized as follows. \Cref{Section-Literature} relates our work to the pertinent literature. \Cref{Section-Games} explains the theoretical principles that guide our analysis and provides a conceptual foundation for the hypothesis that greater complexity leads to egalitarianism. \Cref{Section-Design} details the experiment. \Cref{Section-Results} presents our experimental findings. \Cref{Section-Conclusion} concludes. The appendices include tables supporting all bar graphs, additional tables and figures, and sample instructions.

\section{Related Literature}
\label{Section-Literature}

This paper draws on three important but largely separate areas of study: one on coalition formation and legislative bargaining, another on social preferences, and a third concerning the effects of complexity on decision making. The primary contribution of our paper is that it connects these ideas, offering a new rationale for fairness in legislative bargaining. 

The literature on legislative bargaining experiments, surveyed in \cite{palfrey2016experiments} and \cite{AgranovSurvey}, has generally focused on the degree to which proposers form and exploit minimal winning coalitions, whether institutional details such as the use of a closed-rule or open-rule bargaining protocol affects this behavior, and whether the comparative statics highlighted in \cite{baron1989bargaining} emerge in practice. Because most of the experimental treatments involve symmetrically positioned respondents and do not vary the complexity of the game, the questions at the core of this paper---namely, whether players can identify weak partners and how offers change with varying complexity---have received little attention. In this vein, some work studies three-player coalitional bargaining games with heterogeneous outside options. \cite{diermeier2006self} consider bargaining with a single-round, which resembles a three-player ultimatum game with simple majority rule, and find that the proposer has a slight tendency to target the player with the lowest disagreement value.\footnote{\cite{knez1995outside} were the first to study an ultimatum game with three players. Their experiment involved a proposer playing two separate ultimatum games in which the respondents had unequal outside options. The role of the third party was primarily to facilitate social comparisons.\ \cite{guth1998information} consider a setting in which one of two respondents has veto power, while the other is a dummy player with no ability to influence the outcome. Contrary to the predictions obtained from models with social preferences, they find that the proposer targets the respondent with veto power, and the dummy player receives a very small share.} \cite{miller2018legislative} study an environment with an indefinite horizon, and find that proposers are more likely to exclude the player with the higher disagreement payoff. These studies anticipate our finding concerning the validity of the exploitative intent hypothesis by showing that proposers take advantage of obviously weak players.

Several early experiments on two-person offer/counteroffer bargaining with a shrinking pie
present results that bear on some of the questions we ask here. Those studies interrogate the conditions under which negotiators act as "fairmen" or "gamesmen" and the extent to which such behavior depends on the length of the bargaining horizon and the rate of pie shrinkage. The results are mixed. In the first such study, \cite{binmore1985} compares one- and two-round bargaining games with a shrinking pie. The outcomes they observe in the more complex two-round game are \textit{less} egalitarian than in the original one-stage ultimatum game. First movers in the two-stage game are gamesmen, while first movers in the ultimatum game are fairmen. 

The \cite{binmore1985} study triggered a series of follow-ups that involved different variations on the shrinkage rate, the number of periods, and other design features. \cite{guthtietz1988} increased the shrinkage factor and raised the monetary stakes. Their results strongly rejected the game-theoretic solutions in all treatments and suggest that such theoretical failures will inevitably arise in situations where the equilibrium is socially unacceptable. \cite{neelin1988} compared behavior in two-, three-, and five-round bargaining games, with mixed results about the degree to which proposers follow the subgame perfect equilibrium strategy. The findings of \cite{spiegel1994} suggest that proposers try to achieve egalitarian outcomes when they are at a bargaining disadvantage. \cite{binmore1991} identify a potential role for simple rules of thumb and social norms, such as fairness, that bargainers might apply instinctively in some environments. However, the authors are agnostic about the conditions that trigger the application of such rules of thumb and cause proposers to abandon more strategic approaches. 

Because these studies focus on bilateral bargaining, negotiators do not need to identify  weak partners as in the more complex multi-person bargaining problems that we examine here. Our analysis highlights how, in the latter settings, people are willing to act as ``gamesmen'' when the bargaining protocol is easy to game, but switch to a hybrid ``fairmen/gamesmen'' strategy when it is complex, splitting the prize with a single participant while leaving the other with nothing.

Our work is also related to a vast literature on fairness and social preferences; see \cite{fehr-charness} for an exhaustive survey. In the settings we study, people do not exhibit much intrinsic concern for others' outcomes. In simple settings, they clearly exhibit exploitative intent, and even in complex settings, most proposals completely exclude one member of the group. Thus, it appears that our subjects are not intrinsically motivated either by fairness or, as in \cite{Andreoni2009}, by the desire to appear fair. Nevertheless, egalitarianism \emph{within} coalitions appears to emerge when strategic complexity makes it difficult to understand and anticipate the outcome of subsequent negotiations. 

Finally, our work contributes to the broad research agenda on how complexity influences decision making both in general and in strategic settings. Within economics, the study of behavioral responses to complexity traces its roots to Herbert Simon's seminal work on bounded rationality \citep{simon1956rational}. Over the last few years, interest in understanding how people make complex decisions has exploded  \citep[e.g.,][]{bernheim2020empirical,enke2023cognitive,oprea2024decisions,babohrenimas,puri2025simplicity}; see \cite{OpreaSurvey} for a survey. The inherent complexity of strategic reasoning has inspired the formulation of theories that seek to capture essential aspects of boundedly rational play, such as level-k reasoning \citep[e.g.,][]{nagel1995unraveling,stahl1995players,crawford2013survey}, cognitive hierarchies \citep[e.g.,][]{camerer2004cognitive,alaoui2016endogenous}, and finite automata  \citep[e.g.,][]{neyman1985bounded,Rubinstein1988,Kalai1990,Chatterjee2009}, as well as the search for strategically-simple mechanisms (surveyed in \citealt{li2024designing}). There is also evidence that people favor simple strategies in the infinitely repeated prisoner's dilemmas, and that simplicity can make particular strategies focal \citep{guanoprea2024faces}. Our measure of complexity---the number of rounds in a bargaining game---is ordinally equivalent to the number of subgames, and is therefore consistent with the measure used in work on finite automata. Our findings complement the complexity agenda by identifying a new effect of strategic complexity on behavior, namely that it promotes egalitarianism in bargaining.

\section{Theoretical Principles}\label{Section-Games}

In this section, we describe the theoretical principles that guide our experimental inquiry. \Cref{Subsection-equilibrium} describes equilibrium predictions. \Cref{Subsection-egalitarianism} explains why strategic complexity may lead to egalitarianism.

\subsection{Equilibrium predictions}\label{Subsection-equilibrium}

Our analysis concerns multi-party finite-round bargaining protocols. We focus on settings with three parties but the key principles discussed in this subsection generalize. In each round $t \in \{1,...,T\}$, one of the three players, $p_t \in \{A,B,C\}$, proposes a division of a fixed prize which, without loss of generality, we take to be one dollar. We write a proposal as $(x^A,x^B,x^C)$, where $\sum_{i \in \{A,B,C\}} x^i = 1$. All players then vote in favor or against this proposal. A proposal passes if a simple majority votes in favor, in which case negotiations end. If no proposal passes by the end of round $T$, the parties receive default payoffs $(v^A_T,v^B_T,v^C_T)$, where $\sum_{i \in \{A,B,C\}} v^i_T \leq 1$. We assume each player cares only about their own share of the prize. We also assume no discounting and risk neutrality, but these assumptions are not essential.

The sequence of proposers (or \emph{recognition order}) $P = (p_1,...,p_T)$ is assigned at random according to some probability distribution $\mu$. The parties learn about the recognition order as negotiations proceed. We consider three informational protocols.
For the \emph{Perfect} protocol, the parties learn proposers' identities one period in advance; in other words, $p_{t}$ is disclosed between rounds $t-2$ and $t-1$, and the parties start out with knowledge of $p_1$ and $p_2$. For the \emph{Partial} protocol, they learn the identity of one randomly selected \emph{non-proposer} one period in advance: the parties start out knowing $p_1$  and that some $j$ will not be the period-$2$ proposer; $j \neq p_{t}$ is then disclosed between rounds $t-2$ and $t-1$. For the \emph{None} protocol, the parties have no advance knowledge of proposers' identities. In other words, they start out knowing only $p_1$, and $p_{t}$ is disclosed between rounds $t-1$ and $t$. 

This framework encompasses a wide range of possibilities. For example, most theoretical and experimental extensions of \citet{baron1989bargaining} study the \emph{None} protocol in settings where the selection of each round's proposer is independent and, typically, the probabilities are the same for all three parties. 
\cite{ali2019predictability} model the effects of predictability in legislative bargaining, considering a range of informational protocols that include the partial and perfect protocols described above.

As is standard in this literature, we study pure-strategy subgame-perfect equilibria with the following features. First, in each round, strategies prescribe choices that do not depend on the features of previously rejected proposals or the associated profile of votes. Second, when casting votes, non-pivotal voters resolve indifference as if they are pivotal.\footnote{Equivalently, they do not make weakly dominated choices at the voting stage.}  Conditional on the recognition order $P$, any such equilibrium induces a vector of expected continuation payoffs, $(v^A_t, v^B_t, v^C_t)$, that the parties receive if they reject the round-$t$ proposal. If $v^i_t<v^j_t$ for $i,j \neq p_t$, we say that $i$ is the \emph{weak round-$t$ partner}, and that $j$ is the \emph{strong round-$t$ partner}. 

Some portions of our empirical analysis focus on the match between observed behavior and specific quantitative predictions for individual bargaining games, while other portions test a small collection of less precise predictions that are arguably easier for negotiators to grasp. Next, we describe three theoretical principles that encompass those implications. 

\bigskip

\noindent\textit{\textbf{Theoretical Result 1}: In equilibrium, proposals always assign the proposer a share of at least 0.5 and at least one other party a share of 0. Strong partners never receive positive shares. All equilibrium proposals pass.}

\bigskip

The logic of Theoretical Result 1 is straightforward and extremely general. A round-$t$ proposal passes as long as $x^j_t \geq v_t^j$ for at least two parties. The best successful proposal is plainly one that secures passage while conceding no more than necessary. The proposer accomplishes this objective by offering $x^i_t = v^i_t$ for $i = \arg \min_{i \ne p_t} v^i_t$, $x^j_t = 0$ for $j \notin \{p_t,i\}$, and $x^{p_t}_t = 1 - v^i_t$ for themselves. Because $v_t^i \leq v_t^j$, we know that $v_t^i \leq 0.5$, so $x_t^{p_t} \ge 0.5$. Furthermore, because $1-v^i_t \geq v^{p_t}_t$, the best successful proposal delivers a higher share for the proposer than any unsuccessful proposal.    

A common interpretation of this first result is that proposers form minimum winning coalitions (MWCs) with weak partners, to the extent they exist. If $v_t^i = 0$, the proposer offers nothing to the other parties, which somewhat obscures the MWC interpretation. In the associated equilibrium, party $i$ resolves their indifference in favor of the proposal, thereby joining the MWC. Were $i$ to resolve her indifference differently, the proposer would offer $i$ a small token share, thereby securing almost all of the prize.

\bigskip

\textit{\textbf{Theoretical Result 2}: Suppose $p_{t+1} \neq p_t$. When the identity of the next proposer is known one round in advance (as is the case under the Perfect protocol), $p_{t+1}$ is never the weak round-$t$ partner for the proposer $p_t$.}

\bigskip

Theoretical Result 2 is also general and intuitive. From Theoretical Result 1, we know that $p_{t+1}$ will obtain a payoff of at least 0.5 if round $t$ is reached. Because the identity of $p_{t+1}$ is known throughout round $t$, it follows that $v^p_{t+1} \geq 0.5$. But $v_t^i + v_t^{p_{t+1}} \le 1$ for $i \notin \{p_t,p_{t+1}\}$, so $v_t^i \leq v_t^{p_{t+1}}$, which means $p_{t+1}$ is not a weak partner in round $t$.

An important and subtle implication of this result is that the identity of the weak partner can be completely unrelated to the default payoffs $(v^A_T,v^B_T,v^C_T)$, and solely determined by the recognition order. We conjecture that, in practice, failures to properly identify weak partners may be related to this distinction.

\bigskip

\textit{\textbf{Theoretical Result 3}: Predictability of the recognition order strengthens the proposer's bargaining position by weakening at least one of her potential partners. Perfect predictability of the next proposer enables her to secure the entire prize.}\footnote{For the illustration in the introduction, we made the stronger assumption that the \emph{entire} recognition order is known in advance.}

\bigskip

\citet*{ali2019predictability} make the second point with considerable generality. Here we illustrate it for the special case in which (i) the selection of each round's proposer is independent and based on a distribution that assigns equal probability to all three parties, and (ii) all players' default payoffs are zero.
We focus on symmetric equilibria in which, whenever a proposer is indifferent between two potential coalition partners, she chooses each with equal probability.\footnote{In an infinite-round setting, \citet*{baron1989bargaining} show that this property follows from stationarity.}

First imagine that the recognition order is entirely unpredictable (the \emph{None} protocol). In that case, the problem's symmetry and our focus on symmetric equilibria ensures that $(v_t^A,v_t^B,v_t^C)=(\frac{1}{3}, \frac{1}{3}, \frac{1}{3})$ for all $t < T$. It follows that, for such {t}, the round-{t} proposal assigns $\frac{2}{3}$ to the proposer and $\frac{1}{3}$ to a randomly selected partner.

Next imagine that the identity of each proposer is known one round in advance (the \emph{Perfect} protocol). With default payoffs of $(0,0,0)$, the proposer in the final round claims the entire prize. The proposer in the penultimate round can therefore always identify a potential partner who will receive zero in the continuation---specifically, a party who will not make the last offer. Consequently, the penultimate proposer also claims the entire prize. Through a repeated application of the same argument, we see that the proposer claims the entire prize in every round with the support of a party who does not propose next. Comparing this outcome to the one that emerges under the \emph{None} protocol, we see that predictability of the recognition order augments proposer power, effectively turning each proposer into a dictator.

According to \citet*{ali2019predictability} (Theorem 1), the preceding conclusion holds even when information about the next proposer is imperfect, as long as it always rules out more than half of the parties. An ability to rule out only one party (the \emph{Partial} protocol) does not satisfy this requirement. Consequently, theory predicts that the proposer should fare substantially better with the \emph{Perfect} protocol than with the \emph{Partial} protocol.

To refine this prediction, we solve for the equilibrium of a three-round bargaining game under the \emph{Partial} protocol. The round-3 proposer always obtains the entire prize. There are two possibilities for the round-2 proposer. If she learns she will not be the final round-3 proposer (a case we henceforth label \emph{Excl}), then her two potential coalition partners each have an expected continuation payoff of $\frac{1}{2}$. Therefore, her optimal strategy is to offer $\frac{1}{2}$ to a randomly selected partner, leaving a share of $\frac{1}{2}$ for herself, in which case each non-proposer receives an expected payoff of $\frac{1}{2} \times \frac{1}{2}=\frac{1}{4}$. If she learns she might be the next proposer (a case we henceforth label \emph{Incl}), then her best option is to claim the entire prize by forming a coalition with the non-proposer who will not be the round-3 proposer. Consequently, the expected payoff for the round-2 proposer is $\frac{1}{3} \times \frac{1}{2} + \frac{2}{3} = \frac{5}{6}$, and the expected payoff for each round-2 non-proposer is $\frac{1}{12}$. Likewise, there are two possibilities for the round-1 proposer. If she learns she will not be the next proposer (the \emph{Excl} case), then her potential partners each have an expected continuation share of $\frac{1}{2} \times \frac{5}{6} + \frac{1}{2} \times \frac{1}{12} = \frac{11}{24}$. Her best option is to claim a share of $\frac{13}{24}$ and offer $\frac{11}{24}$ to a randomly selected partner. If she learns she might be the next proposer (the \emph{Incl} case), then one of her potential partners has an expected continuation fraction of $\frac{1}{12}$, which means she can claim $\frac{11}{12}$. Overall, her expected payoff is $\frac{1}{3} \times \frac{13}{24} + \frac{2}{3}  \times \frac{11}{12} = \frac{55}{72}$, which substantially exceeds $\frac{2}{3}$. 

The preceding reasoning demonstrates that, even though a small degree of recognition-order predictability does not render the first proposer a dictator, she still obtains a higher expected payoff with the \emph{Partial} protocol than with the \emph{None} protocol. The higher payoff materializes only when she learns she might be the second proposer (the \emph{Incl} case); in fact, she is worse off relative to the \emph{None} protocol when she learns she will not be the second proposer (the \emph{Excl} case), but she gains on average. In contrast, with the \emph{Perfect} protocol, her payoff does not depend on whether she learns she definitely will be the next proposer (which serves as the \emph{Incl} case for the \emph{Perfect} protocol), or definitely will not be the next proposer (which serves as the \emph{Excl} case for the \emph{Perfect} protocol). This difference is attributable to the fact that she can identify a weak partner even in the \emph{Excl} case for the \emph{Perfect} protocol (i.e., the other non-proposer for the next round), but not with the \emph{None} protocol.    

\subsection{Why Complexity May Lead To Egalitarianism}\label{Subsection-egalitarianism}

The strategic reasoning underlying equilibrium predictions becomes increasingly complex as the number of stages and subcases proliferate and may be too subtle for typical negotiators to grasp. Even savvy negotiators who appreciate the logic of equilibrium may lack confidence that others share their strategic sophistication. For instance, they may worry that potential partners are excessively optimistic or pessimistic about their own continuation payoffs.  Consequently, a key issue that we investigate experimentally is whether variations in the complexity of the negotiating environment affect the accuracy of the theoretical predictions, leading systematically to alternative behavioral patterns. In particular, if complexity increases a proposer's uncertainty about her ability to exploit coalition partners and makes her more concerned about her vulnerability to exploitation by others, offering more egalitarian proposals may be a tempting safe alternative. In that case, changes to the environment that add complexity could encourage more egalitarian behavior.

We formalize this point using a simple framework in which a proposer is uncertain about the offers other players would support. Imagine the current-round proposer (Player A) believes that player $i\in \{B,C\}$ has an acceptance threshold $\tau_{i}$ such that $i$ votes for the proposal iff $x_{i}$ is at least $\tau_{i}$.\footnote{Thus, players resolve indifference in favor of supporting the proposal. In addition, each player's acceptance decision depends only on her own share.} One interpretation of $\tau_i$ is that it reflects player $i$'s perceived continuation value. In complex settings, the proposer (player $A$) may be highly uncertain about $\tau_B$ and $\tau_C$ because she recognizes that each player's perceptions of continuation values may stray far from equilibrium predictions. We model this uncertainty by assuming that the proposer's beliefs about others' thresholds correspond to some joint CDF that allows for arbitrary correlations. The proposer also believes that she will receive an expected payoff of $d<1$ in the continuation game if her offer is rejected. Finally, we assume the proposer
is risk neutral.

We are interested in behavior near a boundary case in which the game's
complexity has two effects on the proposer. First, it causes the proposer to hold diffuse beliefs reflecting
deep uncertainty about the other players' thresholds. Second, it leads the proposer to either (i)
narrowly frame the problem by focusing on the stage game in isolation,
or (ii) hold pessimistic beliefs about the continuation game. We
model diffuse beliefs as involving a bivariate CDF $H$ for which the marginal
distributions of each $\tau_i$ ($i=B,C)$, written $H_{i}$, are uniform
on $[0,\overline{\tau}]$, where $\overline{\tau}\geq\frac{1}{2}$. The assumption
concerning the magnitude of $\overline{\tau}$ allows for the possibility that a coalition partner might insist on equal division within an MWC but does not necessarily require a belief that the partner might insist on a share close to unity. In addition to having the intuitive interpretation that all possibilities are equally likely, the uniform distribution reflects ``maximal uncertainty'' because it achieves the highest entropy among all continuous distributions on its support. Consequently, other studies have also used it to represent minimally informed beliefs.\footnote{For example, in the literature on levels of strategic thinking, the beliefs held by higher level players about a level-$0$ player's strategy are often assumed to be uniform. See, for example, \cite{camerer2004cognitive} and \cite{crawford2013survey}.} Notably, this boundary case allows for correlation, positive or negative, between the two thresholds. The following assumption places a modest restriction on this correlation:
\begin{assumption}\label{Assumption-Thresholds}
    For every $x>0$, $H(x,x)>0$. 
\end{assumption}
\Cref{Assumption-Thresholds} states that for every strictly positive value $x$, there is a strictly positive probability that players $B$ and $C$ would both support the offer $(1-2x,x,x)$. Notably, this assumption rules out perfect negative correlation between players' thresholds. It is always satisfied when $H$ has full support on $[0,\overline\tau]\times[0,\overline\tau]$. 

We examine sequences $(H_n,d_n) \rightarrow (H,d)$ where, for the purpose of defining convergence, we endow the space of CDFs with the weak topology. In stating our proposition, we follow the convention that player $B$ is the individual to whom
the proposer offers the weakly larger share ($x_{B}\geq x_{C})$. The following proposition shows that optimal offers in the neighborhood of $(H,d)$ involve minimum winning coalitions.

\begin{proposition}\label{Proposition-Egalitarian}
As $(H_n,d_n) \rightarrow (H,d)$, the optimal offers converge to $\left(\dfrac{1+d}{2},\dfrac{1-d}{2},0\right)$.
\end{proposition}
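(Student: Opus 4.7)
My plan is to (i) write the proposer's expected payoff in closed form, (ii) show that in the limit the unique maximizer is the claimed MWC offer, and (iii) upgrade weak convergence of $H_n$ to uniform convergence of payoffs to deduce convergence of the maximizers.

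Setting $x_A = 1 - x_B - x_C$ and applying inclusion--exclusion to the acceptance event $\{\tau_B \leq x_B\} \cup \{\tau_C \leq x_C\}$, the proposer's expected payoff is
\[
U(x_B, x_C; H, d) = d + (1 - x_B - x_C - d)\bigl[H_B(x_B) + H_C(x_C) - H(x_B, x_C)\bigr],
\]
to be maximized over the compact domain $K = \{(x_B, x_C): x_B \geq x_C \geq 0,\ x_B + x_C \leq 1\}$. For the limit $H$, the MWC slice $x_C = 0$ gives $U(x_B, 0; H, d) = d + (1-x_B-d)(x_B/\overline{\tau})$ (since $H_B$ is uniform and $H(x_B, 0) = 0$), a strictly concave quadratic uniquely maximized at $x_B^\star = (1-d)/2 \in [0, \overline{\tau}]$ (using $\overline{\tau} \geq 1/2$ and $d \in [0,1)$) with value $U^\star = d + (1-d)^2/(4\overline{\tau})$. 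For any $(x_B, x_C) \in K$ with $x_C > 0$, Assumption~\ref{Assumption-Thresholds} and monotonicity of $H$ yield $H(x_B, x_C) \geq H(x_C, x_C) > 0$; a short case analysis on the sign of $(1 - x_B - x_C - d)$ then gives $U(x_B, x_C; H, d) < U^\star$ (when positive, strict positivity of $H$ strictly reduces $U$ below the envelope $d + (1-s-d)(s/\overline{\tau})$ with $s = x_B + x_C$, which is itself bounded by $U^\star$; when nonpositive, $U \leq d < U^\star$). Thus $((1-d)/2, 0)$ is the unique maximizer on $K$.

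For the convergence step, since $H$ has continuous (uniform) marginals its joint CDF is jointly continuous on $[0,1]^2$. Weak convergence $H_n \Rightarrow H$ to a continuous limit upgrades to uniform convergence of CDFs on the compact $K$ (a two-dimensional P\'olya-type argument exploiting coordinatewise monotonicity of the $H_n$), and combined with $d_n \to d$ this gives $U(\cdot; H_n, d_n) \to U(\cdot; H, d)$ uniformly on $K$. A standard compactness argument then closes the proof: any sequence of maximizers $(x_B^n, x_C^n)$ has a convergent subsequence with limit $(x_B^\infty, x_C^\infty) \in K$, and uniform convergence yields
\[
U(x_B^\infty, x_C^\infty; H, d) \;\geq\; \lim_n U\bigl((1-d)/2,\,0;\,H_n, d_n\bigr) = U^\star,
\]
so $(x_B^\infty, x_C^\infty) = ((1-d)/2, 0)$ by uniqueness in the limit. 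Because every subsequential limit coincides, the full sequence of optimal offers converges to $((1+d)/2, (1-d)/2, 0)$.

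The main obstacle is the strict-inequality step ruling out grand coalitions in the limit. Without Assumption~\ref{Assumption-Thresholds}, perfectly negatively correlated thresholds would allow $H(x_B, x_C) = 0$ for some $x_C > 0$, letting a grand coalition match the MWC optimum by covering two disjoint tail events. Verifying that the mild restriction in Assumption~\ref{Assumption-Thresholds} strictly breaks every such tie --- for all $x_C > 0$ and all feasible $s$ --- is the substantive heart of the argument; the remaining uniform-convergence-plus-compactness step is essentially Berge's maximum theorem, enabled by the continuity of $H$ inherited from its uniform marginals.
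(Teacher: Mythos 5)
Your proof is correct in its main line but takes a genuinely different route from the paper on the hard step. For the limit case $(H,d)$ your argument (merge the two shares onto one player; Assumption~\ref{Assumption-Thresholds} plus monotonicity makes $H(x_B,x_C)\geq H(x_C,x_C)>0$ strictly reduce the acceptance probability relative to the MWC offer with share $x_B+x_C$) is the same as the paper's Lemmas on MWC offers. Where you diverge is the convergence step: the paper proves that $\Lambda(\cdot,\cdot,G)$ is only upper semicontinuous for a general $G$, establishes existence of maximizers from u.s.c.\ plus compactness, and then obtains upper hemicontinuity of the argmax via a $\Gamma$-convergence argument using the Portmanteau theorem on open and closed sets; you instead observe that the \emph{limit} $H$ has continuous marginals, hence a jointly continuous CDF, so weak convergence upgrades to uniform convergence of $H_n$, of its marginals, and therefore of the payoff functions (multivariate P\'olya), after which a routine compactness/uniqueness argument closes the proof. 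Your route is more elementary and entirely sidesteps $\Gamma$-convergence; the paper's is more robust in that it does not lean on continuity of the limit for the existence part. Two caveats you should patch. First, you invoke ``any sequence of maximizers'' without showing maximizers exist for each $(H_n,d_n)$: the $H_n$ may have atoms, so $U(\cdot;H_n,d_n)$ is discontinuous (this is exactly why the paper says Berge's theorem fails), and you need the u.s.c.-of-$\Lambda$ argument (probability of a closed lower set is coordinatewise right-continuous and monotone) together with the truncation $\max\{x_A-d,0\}$ to get existence on the compact simplex. Second, your closed forms $H_B(x_B)=x_B/\overline\tau$ and the envelope $d+(1-s-d)(s/\overline\tau)$ are only valid for $s\le\overline\tau$; the offers with $x_B+x_C>\overline\tau$ (where the acceptance probability saturates at $1$) need the separate one-line comparison the paper gives, since the assertion that the envelope is ``bounded by $U^\star$'' requires $\overline\tau\ge 1/2$ and a short calculation in that regime rather than concavity of the quadratic.
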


We are especially interested in the possibility that, in response to the problem's complexity, the proposer either narrowly frames the problem by focusing exclusively on the stage game, or makes highly pessimistic assumptions about the continuation game. Such possibilities correspond to the boundary case of $d=0$. Proposition \ref{Proposition-Egalitarian} implies that as we approach $(H,0)$, 
the optimal offer converges to complete egalitarianism within an MWC, i.e., $\left(1/2,1/2,0\right)$. Even if the proposer is not completely pessimistic about her continuation
payoff $d$, her optimal offer may still involve a reasonably equal split.
For instance, if she expects a share of 20\% in
the continuation game, and she has diffuse beliefs about others'
acceptance thresholds, she will propose a 60-40 split in an MWC.

While \Cref{Proposition-Egalitarian} is intuitive, its proof requires attention to various technical details.\footnote{We benefited from discussing this argument with Andreas Kleiner.} We outline the key steps here, relegating the complete argument to \Cref{Appendix-Theory}. One key step is to show that, for the limit case $(H,d)$, the optimal offer takes the form described in the proposition.\footnote{For this step, the restriction to three-player bargaining games plays a role. With more than three players, one can show that the optimal MWC offer is egalitarian when $d=0$ if players' thresholds are distributed independently. However, offers that seek to form larger coalitions may do better, at least for large $\overline{\tau}$. For example, with five players, $\overline{\tau}=1$, and $d=0$, an offer with equal shares among three players (including the proposer) yields an expected payoff of 0.0370, while offering shares of 0.22 to three non-proposers yields an expected payoff of 0.0420.} In another key step, we show that, for arbitrary beliefs, $(H,d)$, the proposer's payoff function is upper semicontinuous in the offer, which guarantees existence of an optimal offer. Further, we show that, in a neighborhood of $(H,d)$, the optimal offer correspondence is upper hemicontinuous in the proposer's beliefs about other players' thresholds and her own expected payoff in the continuation game following a rejection.\footnote{We establish this result using a variant of the Theorem of the Maximum that holds for upper semicontinuous payoff functions. Berge's Theorem  does not apply because the proposer's payoff function is discontinuous.} Therefore, when beliefs are close to the boundary case of $d=0$, egalitarian offers targeting minimum winning coalitions are approximately optimal.

The preceding analysis shows how deep uncertainty about others' thresholds coupled with narrow framing or pessimism about the continuation game can give rise to egalitarian MWC offers. Given that we also observe grand coalition (GC) offers in our experiment, one may wonder why proposers also find those offers attractive. We highlight three potentially relevant considerations. 

First, strategic uncertainty can substantially reduce the difference in expected payoffs from MWC and GC offers. For example, while the difference between the proposer's payoffs in the egalitarian MWC and GC allocations is $0.17$ ($0.5$ versus $0.33$), the difference in expected payoffs from the corresponding offers with uniform-independent priors, $\overline{\tau}=1$, and $d=0$ is only $0.065$ ($0.25$ versus $0.185$), partly because the GC offer provides a hedge. A preference for fairness is, therefore, more likely to outweigh the expected monetary gain, especially if that preference pertains to the offer itself rather than to the implemented allocation---for example, because acting fairly has deontological force as in \cite{Andreoni2020} or signaling value as in \cite{Andreoni2009}. 

Second, proposers may believe that the fairness of an outcome elevates the probability of acceptance. In other words, there may be a consensus that equal division is reasonable in settings that feel mostly random due to their complexity. Notably, this consideration can also reinforce the attraction of egalitarianism within MWCs.

Third, despite understanding that one potential partner is likely weaker than the other, a proposer may not be sure which is which. Thus, she may believe that $\tau_i$ and $\tau_j$ are negatively correlated. This possibility increases the value of the hedge embedded in a GC offer. To illustrate, suppose a proposer with $d=0$ believes $\tau_i$ is uniformly distributed on $[0,1]$, and that $\tau_j=1-\tau_i$. In that case, when offering $s_i$ and $s_j$, the proposer's expected payoff is $(1-s_i-s_j)(s_i+s_j)$. Because the two first-order conditions are redundant, the optima consist of the locus $s_i=\frac{1}{2} - s_j$. Consequently, offering $\frac{1}{2}$ to one party and nothing to the other is still optimal, but so is offering $\frac{1}{4}$ to each party. Notably, the latter option minimizes the Gini coefficient among all proposals yielding the maximized payoff. Consequently, if the proposer has a tiny preference for fairness or considers such  offers slightly more likely to be accepted, a Grand Coalition offer emerges as the most attractive alternative. Furthermore, the associated difference between the expected payoff from an optimal proposal and an egalitarian GC proposal falls to $0.028$ ($0.250$ versus $0.222$), making the latter option even more attractive for those who care about fairness even to a small degree.

Taken as a whole, the analysis of this section underscores an important conceptual point: with strategic uncertainty, egalitarianism among MWCs, GCs, or both may become increasingly attractive in complex settings.

\section{Experimental Design}\label{Section-Design}

The experiment examines behavior in three-person finite-round bargaining games under majority rule.\footnote{We also ran some small-scale pilots involving indefinite-horizon games with discounting and communication among committee members. The data from these treatments are available upon request.} In each game, participants negotiate the division of a fixed prize, denominated in "points," which are convertible to dollars at a known rate.\footnote{For example, in some treatments the participants divide a prize of 240 points. In that case, a share of $0.90$ corresponds to $216$ points.} The games we consider differ with respect to the number of rounds and the informational protocol.\footnote{There are some other inessential differences in details across the treatments, including the total number of points and the conversion rate from points to dollars.}

For three-round negotiations, we examine all informational protocols mentioned in the preceding section (\emph{Perfect}, \emph{Partial}, and \emph{None}). As in our examples, we set all the default payments equal to zero. 

For the two-round negotiations, all parties know the identities of both proposers at the outset. We set default shares to 0.05 for the first-round and second-round proposers (who always differ), and to 0.90 for the non-proposer. Our objective is to determine how the players would behave in a three-round bargaining problem---one with default payments of zero---if they thought through the third-round outcome by performing one round of backward induction (approximately) correctly. By comparing behavior between these two-round games and the \emph{Perfect} three-round games, we can determine the extent to which departures from theoretical predictions in the latter result from difficulties associated with thinking forward to third-round outcomes.

For the one-round negotiations, there is no distinction between the three informational protocols. We set default shares to 0.05 for the first-round proposer and one of the non-proposers, and to 0.90 for  the other non-proposer. Our objective is to determine how the players would behave in a three-round \emph{Perfect} bargaining problem if they thought through the second-round and third-round outcomes by performing two rounds of backward induction (approximately) correctly. By comparing behavior between the one-round and three-round games, we can determine the extent to which departures from theoretical predictions in the latter result from difficulties associated with thinking forward to both second-round and third-round outcomes.

Notably, in the one-round bargaining problems, it is reasonably obvious that the weak potential partner is the one with the smaller default share (0.05). In contrast, in the two-round bargaining problems, the weak potential partner is the one with the \emph{larger} default share (0.90). The first proposer must reason out that strength and weakness are consequences of the recognition order (specifically, who proposes second) rather than the default payments.

 \Cref{table:AllPredictions} summarizes the equilibrium offers for each experimental treatment. We explained the predictions for three-round bargaining games in the previous section. The logic behind the implications listed for the one-round and two-round bargaining games is straightforward. We attach the \emph{Perfect} label to the one-round and two-round games to underscore their conceptual connection to the three-round \emph{Perfect} games.

\begin{table}[h!]
\caption{Equilibrium Offers in Bargaining Games}
\label{table:AllPredictions}
\begin{center}\vspace{-.15in}
\begin{tabular}{l|cccc}
\hline\hline
Game & Round 1 & Round 2 & Round 3 \\ \hline
1-round, \emph{Perfect} & 0.95$^*$ & NA & NA \\ \hline
2-round, \emph{Perfect} & 1.0$^*$ & 0.95$^*$ & NA \\ \hline
3-round, \emph{Perfect}   & 1.0$^*$ (\emph{Incl}) & 1.0$^*$ (\emph{Incl}) & 1.0 \\
 & 1.0$^*$ (\emph{Excl}) & 1.0$^*$ (\emph{Excl}) &  \\ \hline
3-round, \emph{Partial} & 0.92$^*$ (\emph{Incl}) & 1.0$^*$ (\emph{Incl}) & 1.0 \\
 & 0.54 (\emph{Excl}) & 0.50 (\emph{Excl}) &  \\ \hline
3-round, \emph{None}   & 0.67 & 0.67 &  1.0 \\ \hline\hline
\end{tabular}
\end{center}
\par
\vspace{2mm} {\footnotesize \underline{Notes:} The table shows the share of the prize the proposer seeks to keep for herself. The proposer offers the residual share to a single partner. $^*$ indicates that the proposer targets a weak partner.}
\end{table}

We conducted twelve sessions at UCI and nine at Caltech, each drawing on the standard student population of subjects. Each session consisted of either 15 or 20 repetitions of the same bargaining game. We refer to each bargaining game as a \emph{treatment}, and to each repetition as a \emph{match}.  \Cref{table:Design} summarizes the numbers of matches, sessions, and subjects for each treatment at UCI and Caltech. 

\begin{table}[h!]
\caption{Treatments and Sessions}
\label{table:Design}
\begin{center}\vspace{-.15in}
\begin{tabular}{l|l|ccc}
\hline\hline
 & Label & \# Sessions & \# Matches & \# Subjects \\ \hline
\textbf{UCI}   &  &  &  &  \\
  1-round, \emph{Perfect} & 1-Perfect   & 3 & 15 & 57 \\
  2-round, \emph{Perfect} & 2-Perfect   & 3 & 15 & 57 \\
  3-round, \emph{Perfect} & 3-Perfect  & 2 & 20 & 42 \\
  3-round, \emph{Partial} & 3-Partial  & 2 & 20 & 48 \\
  3-round, \emph{None}    & 3-None    & 2 & 20 & 48 \\
 \hline
\textbf{CALTECH}   &  &  &  &   \\
  1-round, \emph{Perfect} & 1-Perfect   & 3 & 15 & 33 \\
  2-round, \emph{Perfect} & 2-Perfect  & 3 & 15 & 33 \\
  3-round, \emph{Perfect} & 3-Perfect  & 3 & 20 & 39 \\ \hline
\end{tabular}
\end{center}
\par
\vspace{2mm} {\footnotesize \underline{Notes:} The 1-round and 2-round treatments were conducted within the same session at both
locations (Part I and Part II of the same experiment, respectively). }
\end{table}

The number of subjects for each session was a multiple of 3, and ranged from 9 to 24. For the first match of each session, we randomly divided the subjects into $3$-member groups, which we call \emph{committees}. In each subsequent match, we reshuffled the groupings randomly so that the composition of the committees changed unpredictably over the course of a session.

At the beginning of each match, we randomly assigned each of the three committee members an ID number (1, 2, or 3), which they kept for the duration of the match. We randomly assigned new ID numbers in subsequent matches. Committee member 1 always made the first proposal. For the 2-round treatment, we told subjects at the outset of the match that committee member 2 would be the second proposer. For the 3-round \emph{Perfect} treatments, we publicly announced the ID number of the next-round proposer if the current proposal failed. For the 3-round \emph{Partial} treatments, we publicly announced the ID numbers of two players who were eligible to be the next-round proposer if the current proposal failed. For the 3-round \emph{None} treatments, we provided no information about the identity of the next-round proposer.

\section{Results}\label{Section-Results}

In the analyses that follow, we focus on the second half of all treatments, i.e., the last eight matches in the 1-Perfect and 2-Perfect treatments and the last 10 matches in the 3-None, 3-Partial, and 3-Perfect treatments. We impose this filter because our interest is in behavior after subjects have had the opportunity to experience the game and converge to a stable behavior. We refer to these matches as \textit{experienced}.\footnote{Analyses for all matches, both the first and second halves of each treatment, appear in \Cref{Appendix-AllMatches}. The behavior is largely similar.} 
Statistical tests are performed using regression analyses, in which we cluster standard errors at the session level to account for the interdependencies that come from re-matching subjects between bargaining games. To compare average outcomes between treatments, we regress the variable of interest on a constant and an indicator for one of the two treatments. The p-values we report are those associated with the estimated coefficient on the dummy for the treatment variable.

\subsection{Bargaining Outcomes}\label{sec:Outcomes}

We begin by describing, at a coarse level, the observed bargaining outcomes. We first examine the extent to which the first proposer successfully appropriates surplus (\Cref{sec:OutcomesFirstProposer}) and then turn to the distribution of payoffs (\Cref{sec:OutcomesDistr}). 

\subsubsection{How Successful Is The First Proposer?}\label{sec:OutcomesFirstProposer}

 \Cref{fig:FirstProposer} depicts the realized share of resources appropriated by the first proposer in each treatment, focusing on first-round proposals that were accepted.\footnote{We limit the sample to matches in which the first offer is accepted because including payoffs for matches in which the first offer is rejected confounds the continuation values in multi-round games with the exogenously specified default values for one-round games. That said, the qualitative patterns noted below are similar when we include final payoffs for all matches; see \Cref{Appendix:AlternateFigs}.} For comparability, we organize the treatments by location. number of rounds, and bargaining protocol.

\begin{figure}[h!]
    \begin{center}    
    \caption{Realized Proposer Share of Accepted First-round Offers}
    \label{fig:FirstProposer}
    \includegraphics[scale=0.6]{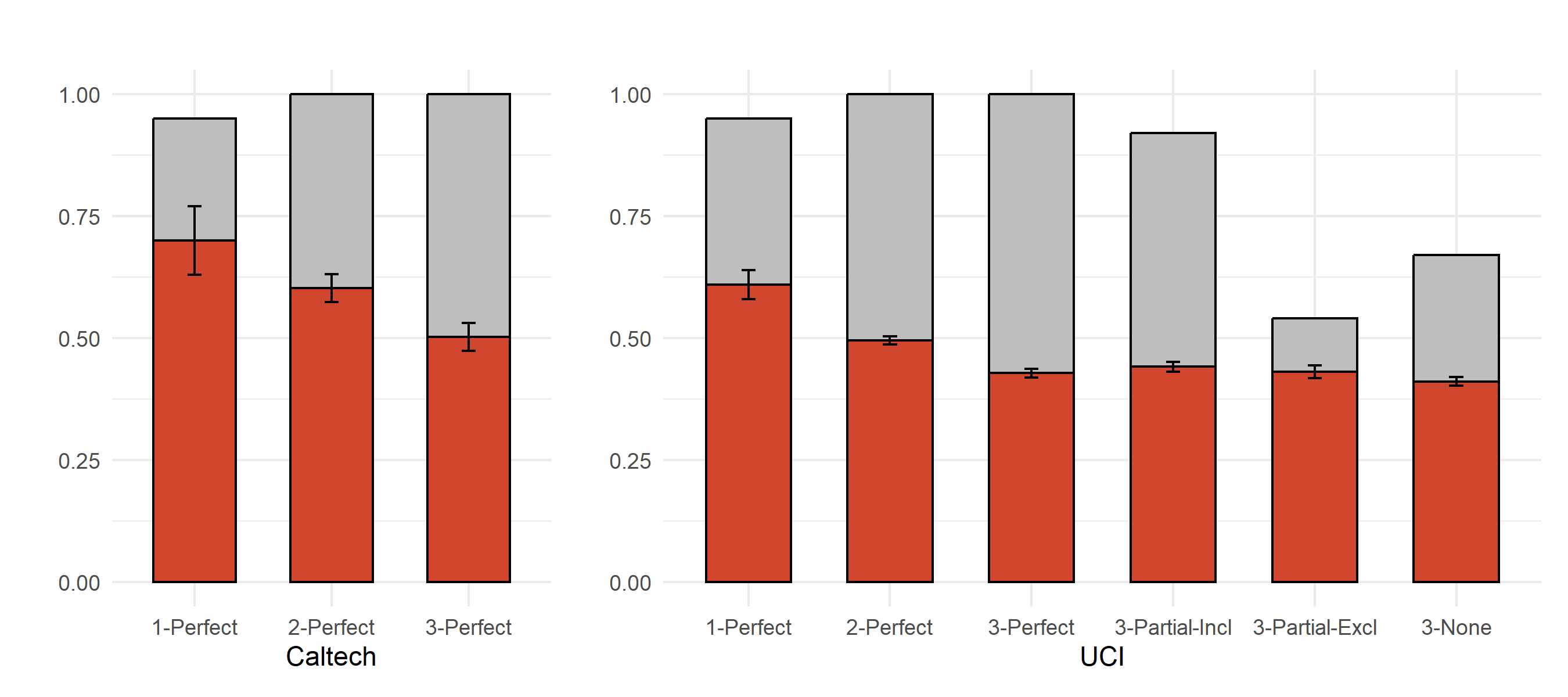}
    \end{center}

    {\footnotesize \underline{Notes:} The total height of each bar represents the predicted equilibrium share for the first proposer while the darker portion is the empirical average appropriated by first proposers in the experiment. The error bars mark the 95\% confidence intervals computed based on robust standard errors (clustered at the session level). The empirical averages and standard errors used in this figure are reported in \Cref{table:fig1-realized-share-first-proposer} of \Cref{Appendix-TablesBar}.}
\end{figure}

Three features of this figure are especially salient. First, standard equilibrium theory predicts much higher first-proposer shares than we observe. For example, theory implies that the first proposer ought to obtain all or nearly all of the surplus in the treatments with perfectly predictable recognition orders. However, at both locations, the average share is always well below 75\%, and often much lower. While it is always higher at Caltech than at UCI, the gap between theory and observation is large at both locations. We explore the reasons for the difference between outcomes at UCI and Caltech in subsequent sections. 

Second, standard equilbirium theory fails to anticipate a striking comparative static: for Perfect games, the first proposer's share declines significantly as the potential number of bargaining rounds rises at both Caltech and UCI. It turns out that this pattern is symptomatic of a more general phenomenon involving inequality of the payoff distribution across committee members. We turn to that topic in the next subsection.

Even if proposers do not fully understand that, in equilibrium, their bargaining positions are very strong, they might nevertheless recognize that they should be able to obtain a share of at least 50\% (Theoretical Result 1 in \Cref{Section-Games}). Consistent with this principle, the average first proposer's share is always at least 50\% in all Caltech treatments, although it does fall to that level in 3-Perfect treatment. However, behavior at UCI is inconsistent with this principle: the first proposer's average share falls below 50\% in all four three-round treatments as well as the 2-Perfect treatment. It nevertheless remains above 40\%, so proposers are able to exploit some advantage on average in all our treatments. 

Third, we find no support for the theoretical implication that greater predictability of the recognition order should strengthen the first proposer's bargaining position and enable them to obtain a larger share (Theoretical Result 3). No such pattern is visible across the last four bars of \Cref{fig:FirstProposer}: the first proposer's share hovers between 41\% and 44\%, and the variations do not align with theoretical predictions. 

\subsubsection{Payoff Inequality Across Committee Members}\label{sec:OutcomesDistr}

We measure the inequality of payoff distributions across the three committee members using the Gini coefficient. Formally, for a division $(x_1,x_2,x_3)$ of a dollar with shares written in ascending order, the Gini coefficient is given by the formula:
\begin{align*}
    G(x_1,x_2,x_3)\equiv\frac{|x_2-x_1|+|x_3-x_1|+|x_3-x_2|}{3(x_1+x_2+x_3)}=\frac{2}{3}(x_3-x_1).
\end{align*}
Higher values indicate greater inequality. Specifically, the Gini coefficient is $0$ for the completely egalitarian prize distribution, $(\frac{1}{3},\frac{1}{3},\frac{1}{3})$. Equality within a minimal winning coalition, $(0,\frac{1}{2},\frac{1}{2})$, results in a Gini coefficient of $\frac{1}{3}$. The dictatorial outcome $(0,0,1)$ yields a Gini coefficient of $\frac{2}{3}$, the theoretical maximum for this environment. 

\begin{figure}[h!]
    \begin{center}    
    \caption{CDFs of Gini Coefficients of Final Payoffs of First-round Accepted Offers}
    \label{fig:Distr}
    \includegraphics[scale=0.7]{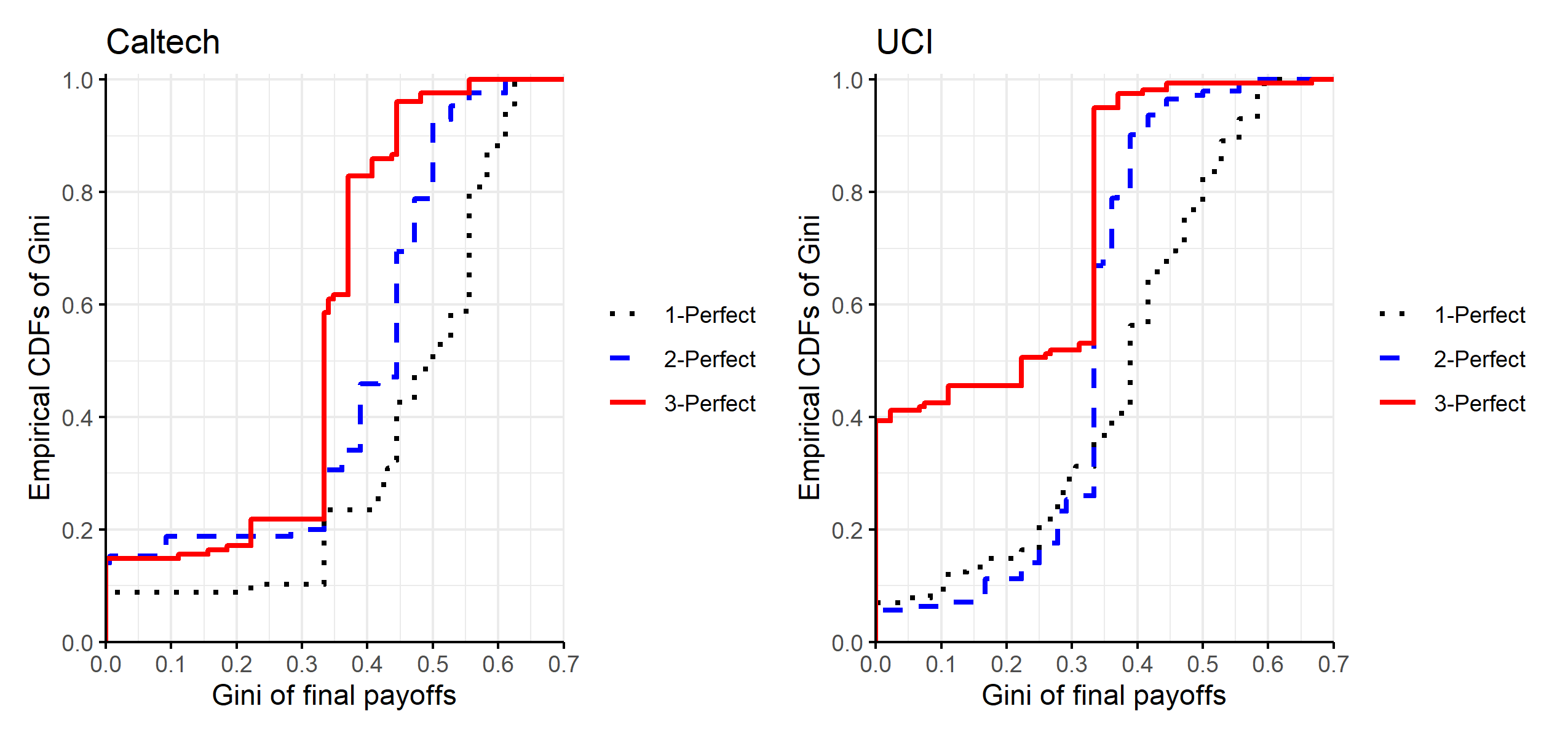}
    \end{center}

    {\footnotesize \underline{Notes:} We present the cumulative distribution functions of Gini coefficients computed based on final payoffs of accepted first-round offers in treatments 1-Perfect, 2-Perfect, and 3-Perfect in both locations. \Cref{fig:fig2-alt-all-offers} in \Cref{Appendix:AlternateFigs} displays qualitatively similar inequality distributions using all (experienced) matches, including matches where the first-round offer failed to pass.}
\end{figure}
    
\Cref{fig:Distr}, which displays the cumulative distribution functions (CDFs) of Gini coefficients separately for 1-Perfect, 2-Perfect, and 3-Perfect bargaining games at Caltech and UCI, allows us to compare the payoff inequality among committee members across different treatments.\footnote{For the same reason as in the previous section, we restrict attention to accepted first-round offers.} The horizon of the game significantly impacts payoff inequality: longer-horizon games generally yield a more equitable distribution of resources among committee members, as indicated by lower GINI coefficients.\footnote{The Kolmogorov-Smirnov statistics for equality of distributions are, for Caltech, 0.4 in 1-Perfect versus 2-Perfect, 0.62 in 1-Perfect versus 3-Perfect, and 0.49 in 2-Perfect versus 3-Perfect; for UCI, 0.39 in 1-Perfect versus 2-Perfect, 0.58 in 1-Perfect versus 3-Perfect, and 0.39 in 2-Perfect versus 3-Perfect; all of these differences are highly significant with p-values below 0.1\%.} 

Strikingly, the CDFs above exhibit some clear mass points (vertical jumps), and the concentrations at these points are plainly related to the length of the game. Looking first at Caltech, equal division within MWCs (i.e., a Gini coefficient of $\frac{1}{3}$) accounts for between from 11\% to 13\% of allocations in one and two-round negotiations versus 37\% of allocations in three-round negotiations. Qualitatively, we observe a similar (albeit less dramatic) pattern for equal division within grand coalitions (i.e., a Gini coefficient of $0$), which accounts for 9\% of allocations in one-round problems, 14\% in two-round problems, and 15\% in three-round problems. At UCI, equal division within MWCs (i.e., a Gini coefficient of $\frac{1}{3}$) accounts for only 5\% of allocations in one-round games versus 41\%-42\% in two-round and three-round games. Furthermore, equal division within grand coalitions (i.e., a Gini coefficient of $0$) accounts for only 6\% to 7\% in one and two round games versus 39\% for the three-round games. Altogether, at Caltech, equal splits within a coalition (either a MWC or a grand coalition) account for 52\% of allocations in three-round games versus 22\% in one-round games; at UCI, the corresponding figures are 81\% and 13\%. Accordingly, moving from one round to three, egalitarianism more than doubles at Caltech and increases by a factor of more than six at UCI. In other words, we see a dramatic migration towards egalitarian splits in games with longer horizon.

The number of rounds in a bargaining game provides a natural proxy for the game's strategic complexity. A longer game requires a subject to think through far more contingencies and to form more elaborate beliefs about how others will reason and behave. Viewed from this perspective, our results offer clear evidence that increased complexity pushes subjects towards more egalitarian allocations. However, these results pertain only to the final outcomes of negotiations. In the next section, we provide more detailed analyses of first-round proposals, voters' responses to these proposals, and the optimality of proposals in light of voters' behavior. 

\subsection{How Do Groups Arrive At These Bargaining Outcomes?}

\subsubsection{First-Round Proposals}\label{sec:Proposals}
In this section, we analyze all first-round proposals, including those that do not pass. This analysis uncovers the extent to which proposers target and exploit weak coalition partners, and how that exploitation depends on the complexity of the game. 

We categorize proposals using the following taxonomy. A \emph{coalition partner} is a non-proposing player to whom the offer assigns at least 5\% of the budget. A \emph{minimal winning coalition} (MWC) proposal is one that includes exactly one coalition partner, a \emph{grand coalition} proposal is one that includes two coalition partners, and a \emph{dictatorial} proposal is one that includes no coalition partners.\footnote{Others have employed similar criteria for categorizing coalitions; see, for example, \cite{frechette2005nominal} and \cite{agranov2014communication}. Our results are not sensitive to reasonable alternatives to the $5\%$ threshold. There is some ambiguity as to how we should classify proposals that offer both non-proposing partners less than 5\%. If either or both non-proposing players have a continuation payoff of zero, then one could construe the proposal that assigns the entire prize to the proposer  as an MWC or grand coalition offer (assuming that either or both non-proposing players resolve indifference favorably). In practice, dictatorial proposals are rare, accounting for fewer than 3\% of proposals in any of our treatments.} We classify an MWC offer as targeting a weak (respectively, strong) coalition partner if, according to theory, the partner has a lower (respectively higher) continuation payoff than the excluded player. If theory implies that both non-proposing players have the same continuation payoff, we classify the coalition as MWC-NA.

\Cref{fig:Coalitions} shows distributions of proposals across these categories for each treatment. The top panel depicts results for Caltech subjects. In one-round games, the vast majority of proposals (86\%) seek to form MWCs, and all of them target weak partners. These findings confirm that Caltech subjects have exploitative intent; they also show that these subjects have exploitative ability in the simplest bargaining contexts. The same fraction of proposals seek to form MWCs in two round games, and 88\% of those target weak partners. In other words, proposers try to behave much as they do in one-round games, but occasionally choose the wrong partner. Behavior changes more significantly in three-round games: 77\% of proposals seek to form MWCs when the first-round proposer is not the second-round proposer (3-Perfect-Excl); when the first-round proposer remains the second-round proposer (3-Perfect-Incl), the figure is 85\%. Equilibrium predictions identify one non-proposing player to be weaker than the other only in the first setting, and 75\% of those proposals target the weak partner. Thus, we see a greater shift toward grand coalition offers in the three-round game alongside a continued decline in exploitative ability.

\begin{figure}[t!]
    \begin{center}
    \caption{Coalition types in first-round proposals}\vspace{-.15in}
    \label{fig:Coalitions} 
    \includegraphics[scale=0.65]{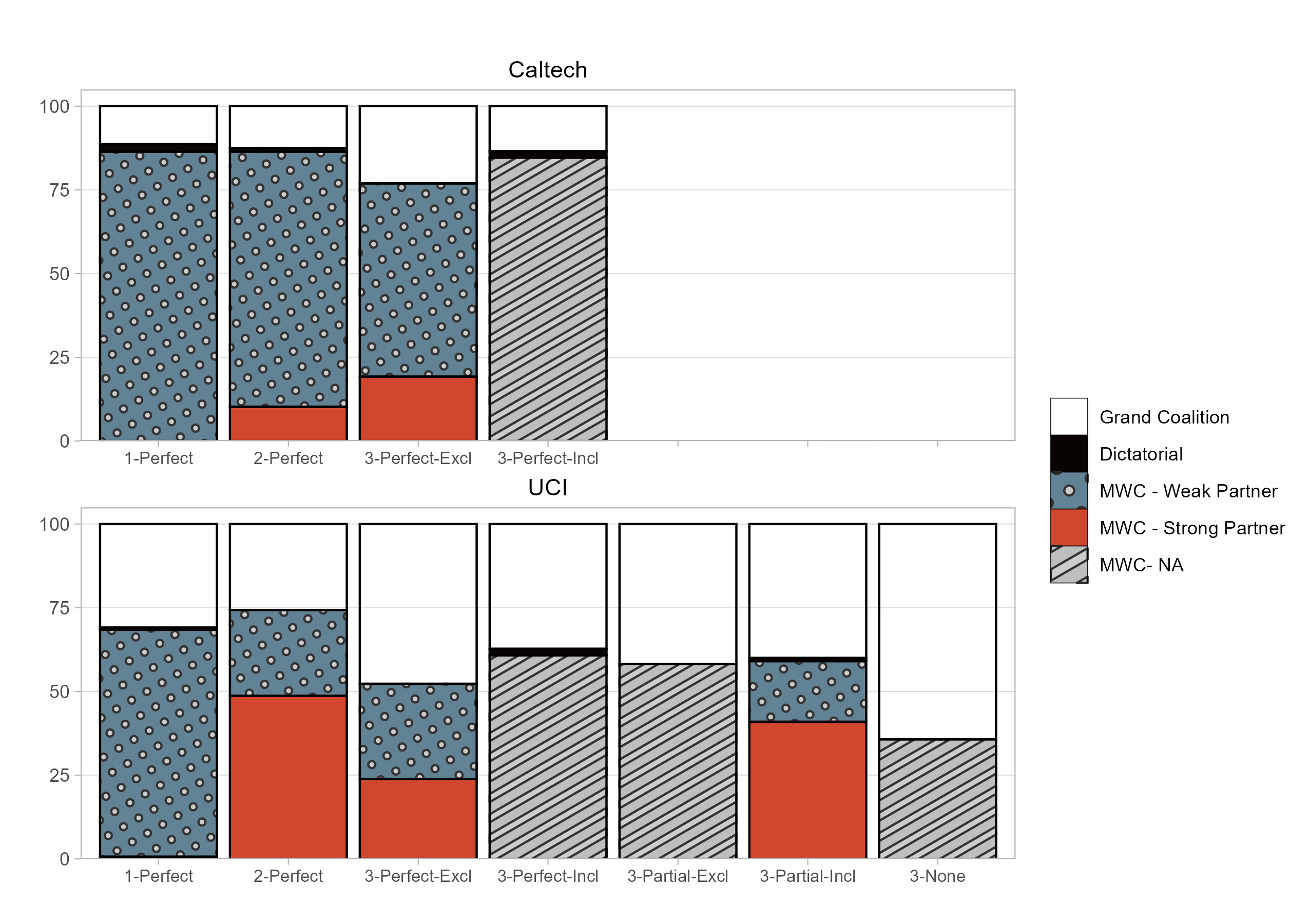} 
    \end{center}

    {\footnotesize \underline{Notes:} The empirical frequencies used in this figure are reported in \Cref{table:fig-3-coaliution-types} of \Cref{Appendix-TablesBar}.}
\end{figure}

We find different patterns at UCI: in relatively simple games, players are much more inclined to make offers that either target grand coalitions, or that mistakenly target strong partners in MWCs. In one-round games, 68\% of the proposals seek to form MWCs, and 99\% of those target weak coalition partners. Consequently, at least half of the UCI subjects have exploitative intent, and they also have exploitative ability in the simplest bargaining contexts. In two-round games, the fraction of MWC offers increases to 74\%, indicating even greater exploitative intent. However, in contrast to Caltech subjects, UCI subjects often fail to realize that the player with the higher default payoff is in the weaker bargaining position because they do not propose next. Consequently, contrary to Theoretical Result 2, almost two-thirds of MWC offers mistakenly target the strong partner. Despite the differences between Caltech and UCI for one-round and two-round games, the patterns of offers exhibit greater similarity for the more complex three-round games. At UCI, 52\% of proposals seek to form MWCs when the first proposer is not the second proposer (3-Perfect-Excl). These proposals target weak and strong partners with roughly equal frequencies, indicating an inability to discern theoretical bargaining strength. When the first proposer is also the second proposer (3-Perfect-Incl), the frequency of MWC offers is slightly higher. The share of proposals that seek to form MWCs is similar for the 3-Partial-Excl and 3-Partial-Incl treatments but falls to 36\% for the 3-None treatment. UCI subjects also tend to select the stronger player, rather than the weaker one, as a coalition partner in 3-Partial-Incl.

\begin{table}[t!]
    \begin{center}
    \caption{Ability to Identify Weak Partners}
    \label{tab:Exploitative}
 {\footnotesize  
\begin{tabular}{@{\extracolsep{5pt}}lcc} 
		\\[-1.8ex]\hline 
		\hline \\[-1.8ex] 
		& P(Strong) - P(Weak) & P(Strong|MWC) \\ 
		\\[-1.8ex] & (1) & (2)\\ 
		\hline \\[-1.8ex] 
		Caltech, 1-Perfect & $-$0.864$^{***}$ & 0.000 \\ 
		& (0.025) & (0.000) \\ 
		& & \\ 
		Caltech, 2-Perfect & $-$0.659$^{***}$ & 0.118 \\ 
		& (0.221) & (0.091) \\ 
		& & \\ 
		Caltech, 3-Perfect-Excl & $-$0.385$^{***}$ & 0.250$^{***}$ \\ 
		& (0.140) & (0.067) \\ 
		& & \\ 
		UCI, 1-Perfect & $-$0.671$^{***}$ & 0.010 \\ 
		& (0.064) & (0.008) \\ 
		& & \\ 
		UCI, 2-Perfect & 0.230 & 0.655$^{***}$ \\ 
		& (0.215) & (0.142) \\ 
		& & \\ 
		UCI, 3-Partial-Incl & 0.229$^{***}$ & 0.694$^{***}$ \\ 
		& (0.034) & (0.027) \\ 
		& & \\ 
		UCI, 3-Perfect-Excl & $-$0.046 & 0.456$^{***}$ \\ 
		& (0.046) & (0.041) \\ 
		& & \\ 
		\hline \\[-1.8ex] 
		\hline 
		\hline \\[-1.8ex] 
		\textit{Note:}  & \multicolumn{2}{r}{$^{*}$p$<$0.1; $^{**}$p$<$0.05; $^{***}$p$<$0.01} \\ 
	\end{tabular} }

\end{center}

    {\footnotesize  \underline{Notes:}  P(strong) (resp. P(weak)) denotes the overall frequency with which first proposers make MWC offers targeting strong (resp. weak) partners. Standard errors are clusterered by session. We omit the treatment 3-Perfect-Incl as both non-proposing players would be weak partners. }
\end{table}

\Cref{tab:Exploitative} summarizes the information presented in \Cref{fig:Coalitions} in a slightly different way, in order to address directly the exploitative ability hypothesis (as well as Theoretical Result 2). For each treatment, column (1) shows the difference between the frequency with which proposers make MWC offers targeting (theoretically) strong players and the frequency with which they make MWC offers targeting (theoretically) weak players. Lower (more negative) numbers indicate greater exploitative ability, zero implies no ability to distinguish between weak and strong players, and numbers greater than 0 reflect a tendency to systematically confuse weak and strong players. At Caltech, subjects exhibit exploitative ability in all treatments, but this ability declines substantially as the environment becomes more complex. At UCI, subjects exhibit exploitative ability only in one-round bargaining games. They commit systematic errors with respect to discerning potential partners’ bargaining strengths in two-round games and some three-round games. Column (2) explores the same issue using a measure of discernment that does not depend on the prevalence of MWC offers: the probability of choosing strong partner conditional on making an MWC offer. The pattern is similar. 

Next, we ask how the tendency to make egalitarian offers evolves as the bargaining environment becomes more complex. Recall that we classify an offer as \emph{egalitarian} if, within a coalition, no two players' shares differ by no more than $5\%$ of the prize. For each treatment, \Cref{fig:egalitarian} shows the distribution of offers across three categories: MWC egalitarian, grand coalition egalitarian, and non-egalitarian. The patterns are striking. At Caltech, the total frequency of egalitarian offers rises from $15\%$ in the one-round treatment to $25\%$ in the two-round treatment, to $60\%$ in the three-round treatment when the first and second proposers differ (3-Perfect-Excl).\footnote{The frequency of egalitarianism is lower ($32\%$) in three-round settings when the first proposer enjoys the advantage of proposing twice in a row (3-Perfect-Incl).} The same qualitative pattern emerges at UCI but the magnitudes are much larger: the total frequency of egalitarian offers rises from $14\%$ in the one-round treatment to $51\%$ in the two-round treatment to $83\%$ in the 3-Perfect-Excl treatment.\footnote{Again, the egalitarian share is a bit lower ($71\%$) when the first-proposer enjoys the advantage of proposing twice in a row (3-Perfect-Incl). The egalitarian share is generally $60\%$ or above for all three-round games in UCI.} Thus, egalitarianism tends to emerge progressively as complexity increases.

\begin{figure}[t!]
    \begin{center}
    \caption{The Frequency of Egalitarian Offers}\vspace{-.15in}
    \label{fig:egalitarian}
    \includegraphics[width=5in]{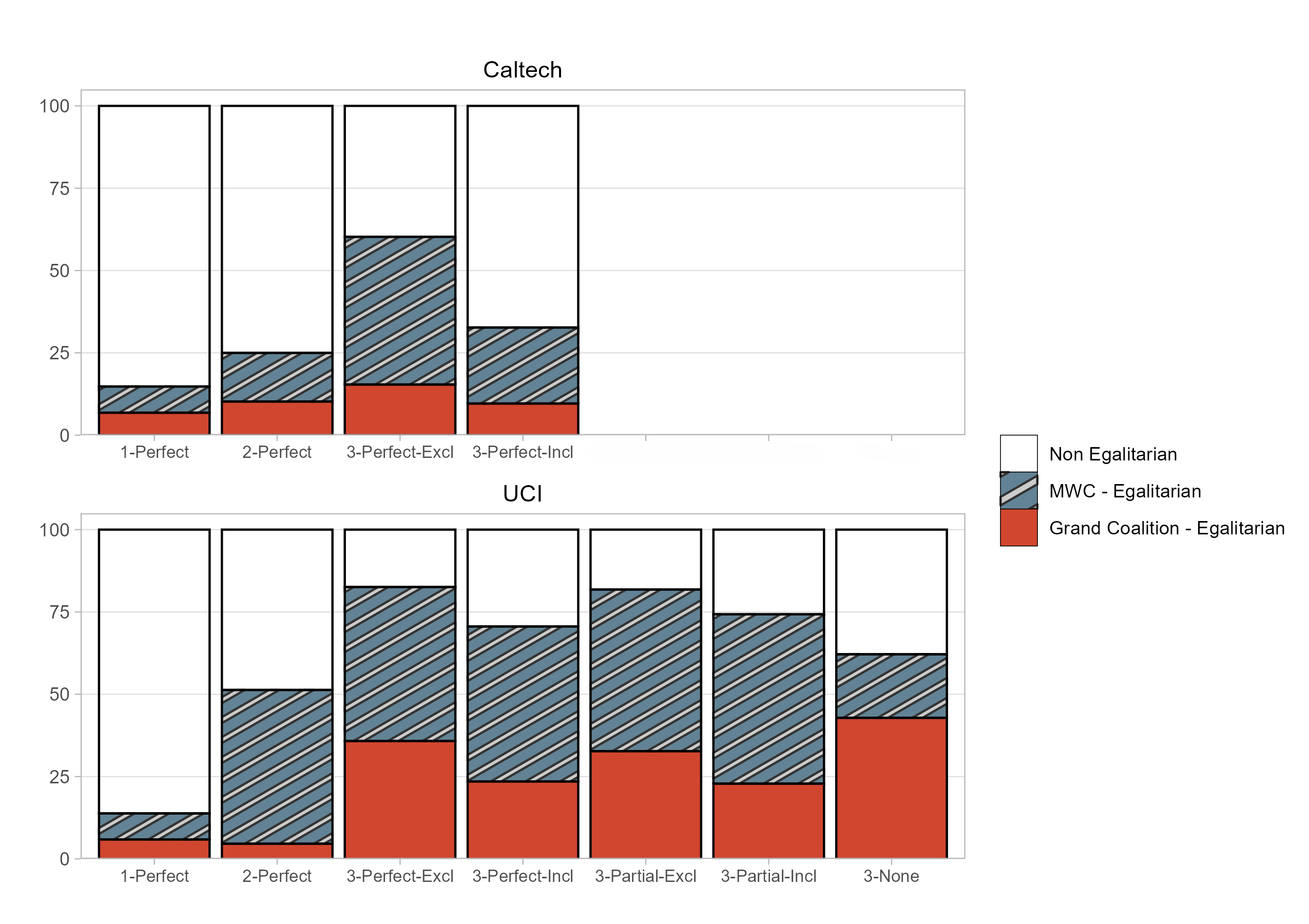}
    \end{center}

    {\footnotesize \underline{Notes:} The empirical frequencies used in this figure are reported in \Cref{table:fig-4-egal-coalition} of \Cref{Appendix-TablesBar}.}
\end{figure}

Notably, we observe the same migration to egalitarianism even among subjects who conform most closely to theory, in the sense that they make MWC offers to weak players. For each pertinent treatment (i.e., those for which non-proposing players are asymmetrically positioned), \Cref{fig:MWCegalitarian} displays the frequency of egalitarianism among MWC offers to weak players, as well as the average share the first proposers hope to keep for themselves. At Caltech, fewer than 10\% of these offers are egalitarian in either one-round or two-round games. Even so, adding the second round leads the first proposers to behave more conservatively, in the sense that the proposer's share declines. With the addition of a third round, the egalitarian-offer share jumps up to $62\%$ and the average proposed share for the first proposer is barely more than $50\%$. Once again the pattern is qualitatively similar at UCI but the quantitative changes are much larger. While 12\% of offers are egalitarian in one-round games, the share jumps to roughly one-third in two-round games and to roughly $90\%$ (resp. $84\%$) in three-round games when the first and second proposers differ (resp. are the same). The proposed share for the first proposer declines from $70\%$ in the one-round game, to $56\%$ in the two-round game, to barely more than $50\%$ in the three-round games. Accordingly, even for the most theory-compliant participants, we see a striking migration toward egalitarianism as the game becomes more complex.

\begin{figure}[t!]
    \begin{center}
    \caption{Frequency of Egalitarianism Among MWC Offers to Weak Partners}
    \label{fig:MWCegalitarian}
    \includegraphics[width=5in]{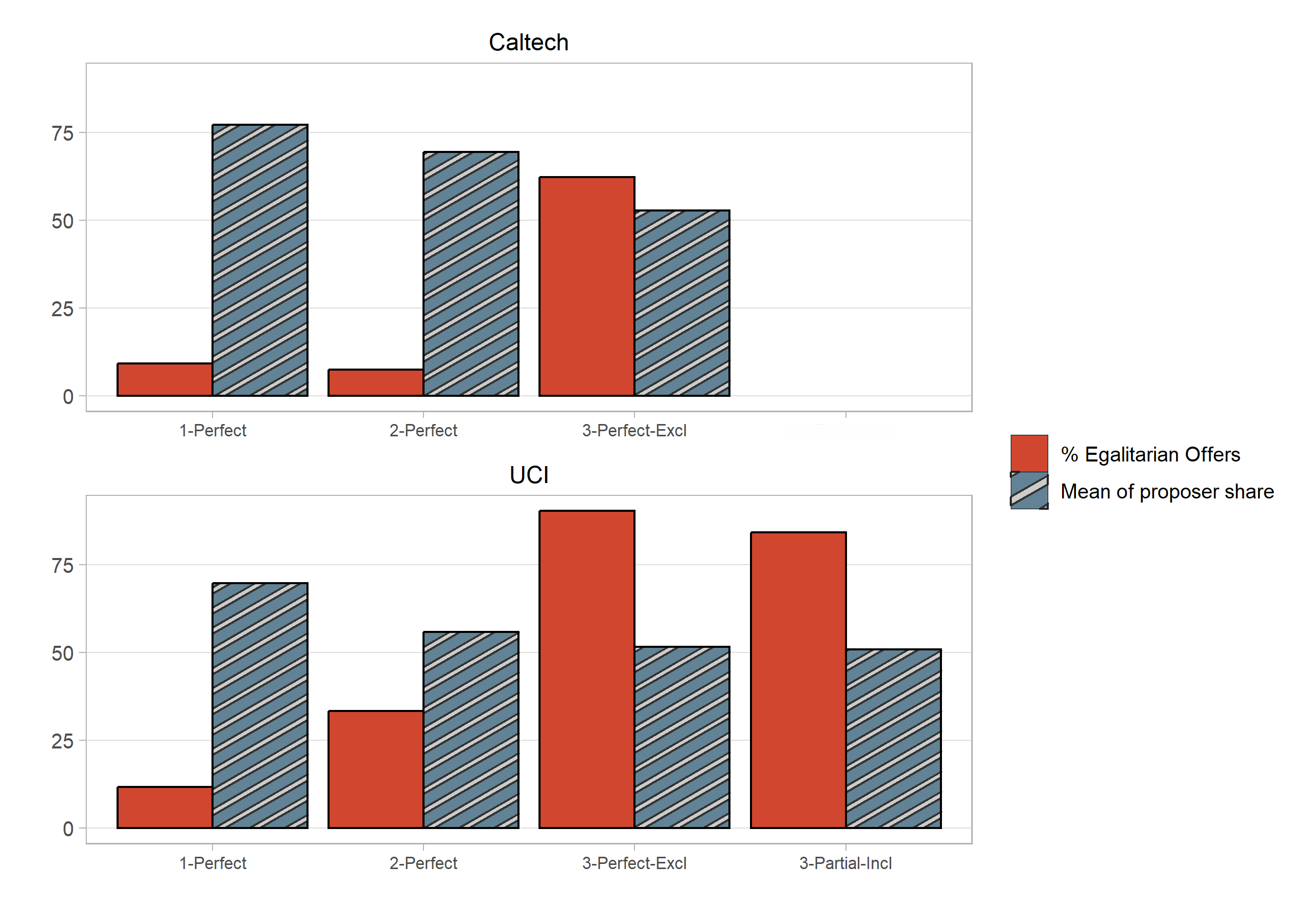}
    \end{center}

    {\footnotesize \underline{Notes:} The empirical frequencies used in this figure are reported in \Cref{table:fig-5-egal-share-offers} of \Cref{Appendix-TablesBar}.}
\end{figure}

\subsubsection{Votes And Coalitions}\label{sec:Responses}

Having completed our analysis of the first proposers' offers, we next examine how others vote on these offers.  \Cref{fig:acceptedrejected} displays the raw data for the one-, two-, and three-round \emph{Perfect} treatments.
In each case, we plot first-round offers on a simplex. The numbers along the right face label the level grid lines for the first proposer's share, which reaches a maximum of $100\%$ at the top vertex. The numbers along the bottom face label the level grid lines for the strong partner's share, which reaches a maximum of $100\%$ at the bottom-right vertex. The numbers along the left face label the level grid lines for the weak proposer's share, which reaches a maximum of $100\%$ at the bottom-left vertex. In each case, a player's share is constant along any of the grid lines emanating from the face that shows their share. The figure distinguishes between accepted proposals (shown as black circles) and rejected proposals (gray circles). The size of each circle reflects the number of observations at that point.

\begin{figure}[p]
    \begin{center}
    \includegraphics[width=5.5in]{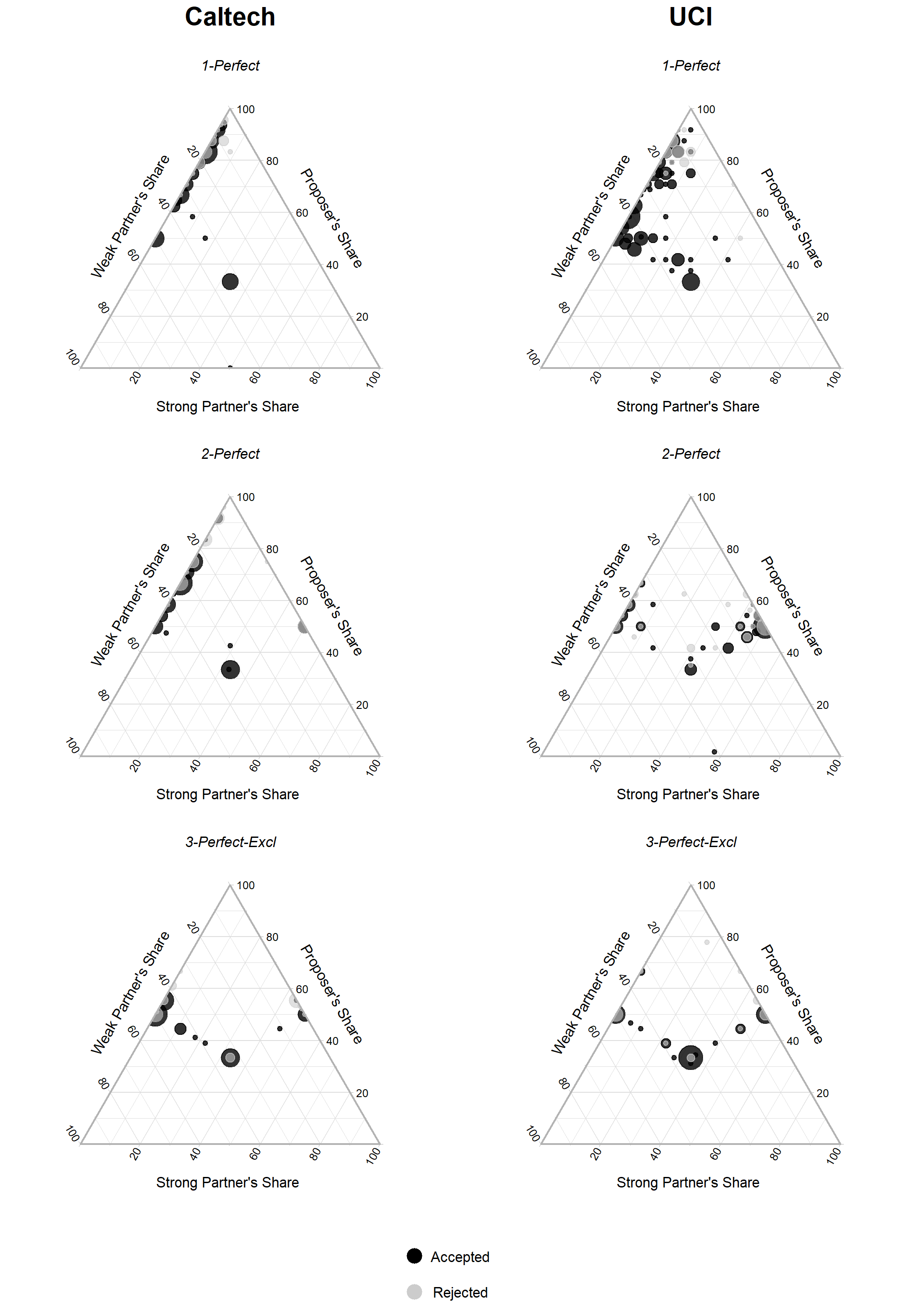}
    \caption{Distributions of Offers and Acceptances}
    \label{fig:acceptedrejected}
    \end{center}
    
    {\footnotesize \underline{Notes:} For the 3-Perfect treatment, we show results for cases in which the first and second proposers differ (3-Perfect-Excl). \Cref{fig:appendixacceptreject} in \Cref{Appendix-AdditionalTables1} shows similar plots for 3-Perfect-Incl, as well as the other three-round treatments at UCI with less than perfect predictability.}
\end{figure}
For Caltech, we see similar patterns for the one-round and two-round treatments. In both cases, there is a cluster of MWC proposals that target the weak partner by offering shares between zero and one-half, and a cluster of egalitarian offers targeting the grand coalition. Rejections are concentrated among the most selfish MWC offers (close to the top vertex). In the two-round treatment, we see some egalitarian MWC offers to the strong partner, who rejects them. Moving from the two-round to the three-round games, we see the migration to egalitarian offers, both MWC and grand coalition, noted previously. We also see the emergence of ``hybrid egalitarian" offers which allocate less than one-third to one of the non-proposers and splits the remainder equally between the proposer and the other player. These are the allocations in the figure that lie between the centroid (equal-split) and the midpoint of either the left or right face of the simplex. Rejections are still more concentrated among most selfish offers, particularly for those that target strong partners, but the pattern is less striking than in the one-round and two-round games.

For UCI, the pattern of offers and responses in one-round games is broadly similar to the one observed at Caltech, although we see more variation in offers, as well as more hybrid egalitarian offers, which are not observed at Caltech in either one- or two-round settings. Rejections are once again concentrated among the most selfish offers, but also among those that target strong partners. For two-round games, the patterns at UCI and Caltech are much less similar. For UCI, we see a high concentration of egalitarian (or nearly egalitarian) MWC offers to strong partners, a smaller number of egalitarian (or nearly egalitarian) MWC offers to weak partners, as well as a smattering of egalitarian or hybrid egalitarian grand coalition offers. Committees reject the most selfish offers more frequently, but it is not obvious that the strong partners remain less accommodating than the weak partners. For three-round games, the UCI and Caltech patterns are once again broadly similar, in that we see concentrations of egalitarian MWC offers, egalitarian grand coalition offers, and hybrid egalitarian offers. Egalitarian grand coalition offers pass with the highest frequency, and committees systematically reject the most selfish proposals. The striking symmetry of the three-round UCI figure suggests that neither proposers nor their potential partners fully understand how to differentiate weak and strong coalition partners.

\Cref{table:LogitNew} presents estimates of logit models relating the binary voting outcome (yes or no) for a non-proposing players to a constant and three variables: (1) an indicator specifying whether the player is weak or strong according to theory ("Strong Partner"), (2) the share proposed for that player ("Own Share"), and (3) the inequality of the offer (``Gini Coefficient'').\footnote{We experimented with more complicated models, including ones with MWC indicators, quadratic terms, and alternative measures of inequality, but found that they did not meaningfully improve the fit or change the results.}
As expected, in every treatment, the likelihood of voting in favor of a proposal is increasing in the player's own share. Moreover, the magnitude of this effect is similar across treatments and locations (Caltech versus UCI).

\begin{table}[t!]
\caption{Determinants of Proposal Acceptance}
\label{table:LogitNew}
\begin{center}\vspace{-.15in}
{\footnotesize \scalebox{0.7}{
\begin{tabular}{@{\extracolsep{5pt}}lD{.}{.}{-3} D{.}{.}{-3} D{.}{.}{-3} D{.}{.}{-3} D{.}{.}{-3} D{.}{.}{-3} D{.}{.}{-3} } 
\\[-1.8ex]\hline 
\hline \\[-1.8ex] 
 & \multicolumn{1}{c}{\shortstack{Caltech:\\1-Perf}} & \multicolumn{1}{c}{\shortstack{Caltech:\\2-Perf}} & \multicolumn{1}{c}{\shortstack{Caltech:\\ 3-Perf-Excl}} & \multicolumn{1}{c}{\shortstack{UCI:\\1-Perf}} & \multicolumn{1}{c}{\shortstack{UCI:\\ 2-Perf}} & \multicolumn{1}{c}{\shortstack{UCI:\\ 3-Part-Incl}} & \multicolumn{1}{c}{\shortstack{UCI:\\ 3-Perf-Excl}} \\ 
\\[-1.8ex] & \multicolumn{1}{c}{(1)} & \multicolumn{1}{c}{(2)} & \multicolumn{1}{c}{(3)} & \multicolumn{1}{c}{(4)} & \multicolumn{1}{c}{(5)} & \multicolumn{1}{c}{(6)} & \multicolumn{1}{c}{(7)}\\ 
\hline \\[-1.8ex] 
 Strong Partner & -2.472^{***} & -2.800^{***} & -1.358^{***} & -2.183^{***} & -1.028^{***} & -0.074 & -0.195 \\ 
  & (0.647) & (0.707) & (0.501) & (0.641) & (0.382) & (0.386) & (0.403) \\ 
  & & & & & & & \\ 
 Own Share & 8.271^{***} & 9.211^{***} & 10.851^{***} & 12.561^{***} & 9.522^{***} & 9.224^{***} & 15.478^{***} \\ 
  & (2.968) & (1.883) & (3.434) & (2.662) & (1.241) & (1.963) & (4.560) \\ 
  & & & & & & & \\ 
 Gini Coefficient & -0.040 & -1.628 & -4.519^{**} & 0.514 & -1.626 & -3.542^{***} & -8.610^{***} \\ 
  & (2.022) & (1.293) & (1.903) & (1.884) & (1.348) & (1.363) & (2.800) \\ 
  & & & & & & & \\ 
 Constant & -0.420 & -0.895 & -2.428^{*} & -1.596 & -2.435^{***} & -2.239^{***} & -3.707^{**} \\ 
  & (1.506) & (0.717) & (1.345) & (1.328) & (0.515) & (0.622) & (1.546) \\ 
  & & & & & & & \\ 
\hline \\[-1.8ex] 
Pseudo R2 & 0.48 & 0.51 & 0.46 & 0.58 & 0.36 & 0.35 & 0.46 \\ 
Observations & \multicolumn{1}{c}{176} & \multicolumn{1}{c}{176} & \multicolumn{1}{c}{156} & \multicolumn{1}{c}{304} & \multicolumn{1}{c}{304} & \multicolumn{1}{c}{210} & \multicolumn{1}{c}{218} \\ 
\hline 
\hline \\[-1.8ex] 
\textit{Note:}  & \multicolumn{7}{r}{$^{*}$p$<$0.1; $^{**}$p$<$0.05; $^{***}$p$<$0.01} \\ 
\end{tabular}
}}
\end{center}
\par
\vspace{2mm} {\footnotesize \underline{Notes:} This table reports estimates of logit models relating the binary voting outcomes (yes or no) for individual non-proposing players to an indicator specifying whether the player is weak or strong according to theory, the share proposed for that player, and a measure of inequality (the Gini coefficient). Standard errors are clustered by session. }
\end{table}

The coefficients of "Strong Partner" speak to the self-awareness hypothesis. At Caltech, theoretically strong partners are much more likely to reject offers than theoretically weak partners in one-round games. This finding indicates a high degree of awareness concerning their own bargaining strength. For two-round games, the corresponding coefficient indicates a similar degree of self-awareness. For three-round games, the magnitude of the coefficient attenuates substantially, indicating a lower level of self-awareness. However, even in three-round games, the coefficient remains negative and statistically significant. Accordingly, Caltech subjects do have some ability to discern their bargaining strengths even in the most complex settings we consider. At UCI, we similarly find that theoretically strong partners are much more likely to reject offers than theoretically weak partners in one-round games. However, in contrast to the Caltech pattern, this tendency is substantially attenuated in two-round games. Surprisingly, UCI voters retain some ability to distinguish their own strength or weakness in the 2-Perfect game even though, as shown above, proposers target strong players, apparently believing them to be weak. Possibly voters feel greater motivation to think through their own prospects than proposers. Significantly, in three-round games, there is no relationship between voting responses and theoretical bargaining strength. These findings imply that UCI subjects reliably self-identify as weak or strong only in the simplest bargaining settings. Their ability to do so declines rapidly with the game's strategic complexity, and vanishes entirely in the most complex settings we consider.

 The pattern of coefficients for the ``Gini Coefficient'' shows that, as subjects become less able to determine their own bargaining strength, they place progressively greater weight on fairness. Strikingly, fairness has no impact on voting responses at either Caltech or UCI in one-round games, where players understand their bargaining strengths. Accordingly, there is no indication that they care about fairness intrinsically in the settings we consider. In two-round games, the coefficients of our inequality measure turn negative (meaning that greater inequality reduces the odds of passage), but the effect lacks statistical significance in both locations. In three-round games, wherein players demonstrably struggle to determine their own bargaining strengths, the corresponding coefficients become more negative and highly statistically significant, especially at UCI. Thus, the tendency to place substantial weight on equity appears to emerge in this setting as a response to strategic complexity (which obscures which players are in weak or strong positions), rather than as an expression of innate preference.

\subsubsection{Optimality Of Proposals}\label{sec:Optimality}

Next, we ask whether proposals are approximately optimal given the voting behavior documented in the previous subsection. In other words, do actual proposals differ from equilibrium proposals because proposers fail to optimize correctly, or because they respond rationally to empirical voting patterns that differ from theoretical predictions?

To evaluate optimization failures, we determine the proposals that would maximize the proposer's expected payoff given the observed mapping from proposals to votes. To make this calculation, we assume that the continuation outcome in the event of rejection does not depend on the terms of the rejected proposal. This assumption is consistent with the implications of backward induction, assuming players vote as if they are pivotal (a common premise) and that indifference is insufficiently prevalent to support multiple equilibria. It also has empirical support, in that continuation outcomes are not systematically related to rejected offers. Under this assumption, for each treatment, we can estimate the first proposer's expected payoff conditional on rejection of her offer by calculating the average payoff for first proposers when voters reject their offers (the \emph{mean rejection payoff}, henceforth MRP). We list these average payoffs in \Cref{table:MRP}.

We use the logit regression models shown in  \Cref{table:LogitNew}, which describe non-proposing players' voting responses to proposals, along with the MRP, to determine the mapping from proposals to expected first-proposer payoffs. Specifically, we use the regression models to compute the probability a proposal passes as a function of its terms. We then multiply the proposer's share by this probability and add the product of the MRP and the estimated rejection probability. \Cref{fig:expectedpayoff} uses contour heat maps to exhibit the first proposers' expected payoff mappings for the one-, two-, and three-round \emph{Perfect} treatments.\footnote{For the 3-Perfect treatment, we show results for cases in which the first and second proposers differ (3-Perfect-Excl). In \Cref{Appendix-AdditionalTables}, we show similar plots for 3-Perfect-Incl, as well as the other three-round treatments at UCI.} In each case, we associate offers with points on a simplex, exactly as in \Cref{fig:acceptedrejected}. 

\begin{figure}[p]
    \begin{center}
    \includegraphics[width=6in]{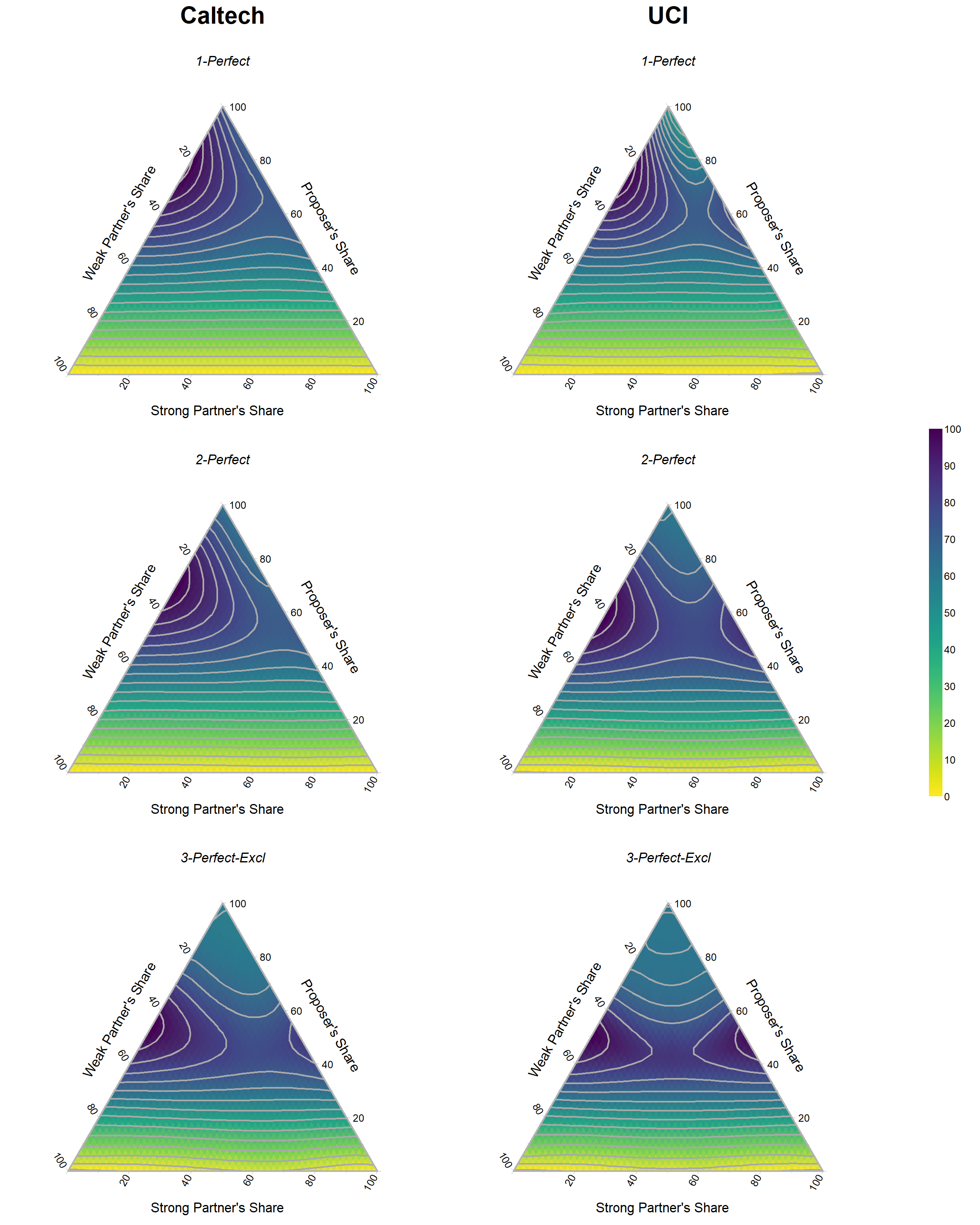}
    \caption{Mappings from First Proposals to Expected Payoffs}
    \label{fig:expectedpayoff}
    \end{center}
    
    {\footnotesize \underline{Notes:} For the 3-Perfect treatment, we show results for cases in which the first and second proposers differ (3-Perfect-Excl). \Cref{fig:appendixheatmap} in \Cref{Appendix-AdditionalTables2} shows similar plots for 3-Perfect-Incl, as well as the other three-round treatments at UCI with less than perfect predictability.}
\end{figure}

In all cases, the offer that maximizes the proposer's expected payoff is an MWC offer. Interestingly, it sometimes delivers an expected payoff below 50\% of the prize, contrary to Theoretical Result 1, because observed voting patterns are less favorable to the proposer than equilibrium voting behavior. At Caltech, the ideal proposal targets the weak player and seeks to retain 76\% of the prize in one-round games, 69\% of the prize in two-round games, and $53\%$ in three-round games (for which there is a weak partner). At UCI, the ideal proposal targets the weak player in one- and two-round games, and seeks to retain 73\% of the prize in the first instance and $59\%$ in the second. For three-round games, the ideal proposal at UCI seeks to form an MWC and retains only 51\% of the prize, but whether it targets the weak player or the strong player is nearly inconsequential. Thus, in both locations,  being egalitarian within an MWC becomes increasingly attractive from a purely selfish perspective as strategic complexity increases. Accordingly, the nature of voting responses by non-proposing players accounts for much of the observed migration to egalitarianism.

Next, we evaluate the quality of first proposers' offers by comparing them to optimized proposals. We perform two versions of this calculation. For the first, we calculate the first proposer's expected payoff based on the average offer for each coalition type and then divide by the optimized expected payoff; to gauge overall performance, we compute the average of these measures weighted by the frequencies of the coalition types. We depict this measure in \Cref{table:optimization}. For the second, we calculate the average ratio between the first proposer's actual payoff and the optimized expected payoff, both in the aggregate and for each coalition type chosen by the first proposer. We provide results based on this measure in \cref{Appendix-Optimality}. The relative merits of these two approaches depend on how one thinks about the observed variation in offers within treatments. If the variation mainly reflects observational noise, the first approach is more relevant, while if it arises mainly from true heterogeneity with respect to preferences or strategic thinking, the second approach is more appropriate.

\begin{table}[t!] \label{table:optimization}
\begin{center}
\begin{tabular}[t]{>{}llccccccc}
\toprule
Location & Treatment & \makecell{Optimal\\Payoff}  & \makecell{MWC\\Weak} & \makecell{MWC\\Strong} & \makecell{MWC\\NA} & \makecell{Grand\\Coalition} & Dictatorial & Agg \\
\midrule
\textbf{Caltech} & 1-Perfect & 64 & 100 & -- & -- & 58 & 80 & 95\\
\textbf{Caltech} & 2-Perfect & 60 & 100 & 72 & -- & 59 & 67 & 92\\
\textbf{Caltech} & 3-Perfect-Excl & 46 & 100 & 82 & -- & 75 & -- & 91\\
\textbf{Caltech} & 3-Perfect-Incl & 55 & -- & -- & 87 & 78 & 80 & 85\\
\textbf{UCI} & 1-Perfect & 65 & 98 & 71 & -- & 73 & 54 & 90\\
\textbf{UCI} & 2-Perfect & 49 & 99 & 87 & -- & 75 & -- & 87\\
\textbf{UCI} & 3-Perfect-Excl & 44 & 100 & 98 & -- & 80 & -- & 90\\
\textbf{UCI} & 3-Perfect-Incl & 47 & -- & -- & 81 & 79 & -- & 79\\
\textbf{UCI} & 3-None & 45 & -- & -- & 76 & 77 & -- & 77\\
\textbf{UCI} & 3-Partial-Excl & 41 & -- & -- & 90 & 84 & -- & 88\\
\textbf{UCI} & 3-Partial-Incl & 42 & 100 & 99 & -- & 79 & 41 & 91\\
\bottomrule
\end{tabular}
\caption{Proposer Optimization Rates}
\end{center}     
    {\footnotesize \underline{Notes:} This table displays the optimization rates of the first proposer using our first measure, which captures how well the average offer within a coalition type performs relative to the empirically optimal payoffs, reporting this ratio as a percentage. The final ``Agg'' column reflects an average of the performance across coalition types weighted by the frequency of each coalition type.}
\end{table}

Several lessons emerge from this analysis. First, proposers perform reasonably well: across all protocols and locations, they achieve 77\% to 95\% of optimized payoffs according to the first measure. Table 5 provides a detailed breakdown of these optimization rates.\footnote {The results using the second measure are nearly identical, with optimized payoff rates ranging from 80\% to 92\%. See \Cref{table:measure2} in \Cref{Appendix-Optimality} for details.} To put these ranges into perspective, in settings where the first proposer is also not the second proposer, our calculations imply that first proposers would achieve between 8\% and 61\% of the optimized payoff by making a rejected proposal, between 40\% and 80\% if they made dictatorial offers, and between 50\% and 80\% if they offered an equal split within a grand coalition.\footnote{See \Cref{table:GCEgalitarian,tb:MRPoptim} in \Cref{Appendix-Optimality}.} Relative to these measures, the performance of proposers indicates a meaningful degree of sophistication.

Second, Caltech subjects perform slightly better than UCI subjects: averaged across the four treatments, the Caltech subjects achieved 91\% of the optimized payoff according to the first measure and 88\% according to the second; for UCI subjects, the corresponding figures are 88\% and 84\% respectively. This comparison is consistent with other evidence of greater sophistication on the part of Caltech subjects discussed elsewhere in this paper. 

Third, we also see no clear relationship between complexity and performance. In other words, these measures of performance do not decline systematically as we move from one-round to two-round to three-round games. Apparently, the tendency for voting players to evaluate proposals based on simple egalitarian criteria, along with proposers' intuitive understanding of those criteria, offsets the greater strategic complexity of the environment.

Finally, according to the first measure, the average offer that seeks to form an MWC with a weak player always comes close to achieving expected-payoff maximization. This finding is of particular interest because, in discussing \Cref{fig:MWCegalitarian}, we noted the sharp progression toward egalitarianism among MWC offers targeting weak players that coincides with increases in strategic complexity. As it turns out, despite being contrary to equilibrium implications, this pattern is nearly optimal for the proposers.\footnote{We noted in \Cref{sec:Proposals} that in two and three round games, first proposers at UCI frequently target the strong partner rather than the weak one, which is sub-optimal given the predicted voting behavior. That said, according to either measure, these offers do not reduce their payoffs substantially: in the two-round games, such proposals accrue 86\%-87\% of the optimized payoff, and more than 97\% in the three-round game. }

\section{Conclusion}\label{Section-Conclusion}

In this study, we examined legislative bargaining protocols experimentally to determine (1) the extent to which people identify and exploit weak coalition partners, (2) the reasons for failures to do so in settings where they occur, and (3) the systematic behavioral patterns that arise in those settings. Our experiment focuses on the role of strategic complexity by varying the number of bargaining rounds from one to three in a setting with asymmetrically positioned players, which makes strength and weakness progressively more difficult to ascertain. We
employed two distinct subject pools, one of which (Caltech) displays greater
quantitative aptitude and skill than the other (UCI). Our primary finding is that, regardless of analytic sophistication, players gravitate toward egalitarianism as strategic complexity progressively obscures relative negotiating strengths. Analytic sophistication only affects the speed with which this migration occurs. More specifically, we reach the following seven main conclusions.

First, in all settings, the first proposers receive far smaller shares than predicted by standard bargaining theory, and this discrepancy increases with the game's degree of strategic complexity. 
For three-round games, theoretical predictions concerning comparative statics (involving variations that ought to affect the identity and exploitability of the weak partner) fail entirely.

Second, as the number of rounds increases, there is a dramatic increase in the frequency with which the selected allocations involve equal division, either within minimum winning coalitions or grand coalitions. 

Third, results for relatively simple (one- and two- round) settings confirm that subjects nevertheless have exploitative intent, in that they make MWC offers with high frequency, and seek to extract relatively high shares. 

Fourth, as strategic complexity increases, exploitative ability declines: proposers have greater difficulty identifying weak partners, and increasingly adhere to within-coalition egalitarianism. Even the non-egalitarian offers exhibit a hybrid form of egalitarianism---i.e., a small share for one player and an equal split between the other two. A comparison of results for Caltech and UCI suggests that analytic sophistication dampens but does not eliminate these patterns.

Fifth, increasing strategic complexity reduces self-awareness of bargaining power among non-proposers. In one-round games, strong partners are significantly less accommodating than weak players. This differential attenuates substantially as strategic complexity increases at Caltech, and vanishes entirely at UCI. 

Sixth, in the simplest settings, non-proposing players place no weight on fairness when casting their votes, but fairness begins to play a much larger role in their evaluations once strategic complexity obscures their own bargaining power. 

Seventh, the migration toward egalitarian that accompanies rising complexity (our second main result) is, in substantial part, a rational response to non-proposers' increasing emphasis on fairness (our sixth main result). Most notably, egalitarian offers to MWCs become nearly optimal.

Taken together, our findings suggest a new rationale for fairness in legislative bargaining: when it is difficult to discern the strength of others' bargaining positions, a proposer may offer a fair split (within a coalition type) to hedge against her own strategic uncertainty or that of others. The proposer's motivation is fundamentally strategic---she offers a fair split not from a desire for fairness or to appear fair, but because such a division is likely to pass, even if she cannot identify weak players or weak players mistakenly believe they are strong. An important direction for future research is to formalize this strategic theory, building on the framework in \Cref{Subsection-egalitarianism}, and to better understand its implications for organizations and politics.

\begin{singlespace}
{\small
	\addcontentsline{toc}{section}{References}
	\setlength{\bibsep}{.25\baselineskip}
	\bibliographystyle{aer}
	\bibliography{AABP}
}
\end{singlespace}

\newpage
\appendix

\section{Appendix: Proof of \Cref{Proposition-Egalitarian}}\label{Appendix-Theory}

We begin with notation used in the argument below. Let $G$ denote a generic bivariate CDF on $\Re^2$ and $d$ denote a generic expected payoff for the proposer in the continuation game following rejection of her offer. We restrict attention to distributions $G$ such that $G(x,x)>0$ for some  $x<(1-d)/2$; we later show that this case is relevant for our analysis as our limit belief satisfies this property (\Cref{Assumption-Thresholds}) and we consider a sequence of beliefs that converge  to this limit belief (in the weak topology). 

Let $s\equiv (s_A,s_B,s_C)$ denote a generic division of the dollar and $S$ be the set of all divisions. The probability that offer $s=(s_A,s_B,s_C)$ is accepted by at least one non-proposing player is \begin{align*}
	    \Lambda(s_B,s_C,G)&\equiv \int_{\tau_B\times \tau_C} \mathbb{1}_{\max\{s_B-\tau_B,s_C-\tau_C\}\geq 0}\,dG.
\end{align*}
 This term is the probability of the event $\left[(-\infty,s_B]\times \Re\right]\cup\left[\Re\times (-\infty,s_C]\right]$. We denote the proposer's expected payoff from offer $s$ by  $\pi(s,G,d)\equiv d+(s_A-d)\Lambda(s_B,s_C,G)$.\medskip

We first argue that the proposer's optimal offer gives her at least her disagreement payoff $d$.
\begin{lemma}\label{Lemma-BoundingOfferSpace}
    For every $s$ in which $s_A\leq d$, there exists $s'$ in which $s'_A> d$ such that $\pi(s',G,d)>\pi(s,G,d)$.
\end{lemma}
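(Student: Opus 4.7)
The plan is to exploit the structure of $\pi(s,G,d)=d+(s_A-d)\Lambda(s_B,s_C,G)$ and show that any offer with $s_A\leq d$ yields at most $d$, while an explicit egalitarian-style offer can be constructed that strictly exceeds $d$. The conclusion then follows by transitivity.

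First, I would observe that since $\Lambda(s_B,s_C,G)\in[0,1]$ by construction, whenever $s_A\leq d$ we have $(s_A-d)\Lambda(s_B,s_C,G)\leq 0$, and hence $\pi(s,G,d)\leq d$. This establishes an upper bound on the payoff from any offer in the set under consideration.

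Next, I would construct an explicit offer $s'$ that strictly beats $d$. Consider offers of the form $s'=(1-2x,x,x)$ for $x>0$. By the standing hypothesis on $G$ stated just before the lemma, there exists some $x<(1-d)/2$ such that $G(x,x)>0$. Fix such an $x$ and note two things: (i) $s'_A=1-2x>d$, so the factor $(s'_A-d)$ is strictly positive; and (ii) because the event $\{\tau_B\leq x,\tau_C\leq x\}$ is contained in the acceptance event $\{\max\{x-\tau_B,x-\tau_C\}\geq 0\}$, we obtain
\begin{equation*}
\Lambda(x,x,G)\;\geq\; G(x,x)\;>\;0.
\end{equation*}
Therefore $\pi(s',G,d)=d+(1-2x-d)\,\Lambda(x,x,G)>d\geq \pi(s,G,d)$, which is the desired inequality.

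I do not anticipate a genuine obstacle here; the lemma is essentially a bookkeeping step that restricts the optimization domain to offers with $s_A>d$ before the more delicate upper semicontinuity and Maximum-Theorem-style arguments are invoked later. The only point requiring care is that the existence of $x$ with both $x<(1-d)/2$ and $G(x,x)>0$ is not an assumption on an arbitrary $G$ but is imposed on the class of CDFs considered in this section (and is guaranteed in the limit by \Cref{Assumption-Thresholds} together with weak convergence, which is why the restriction is natural for the subsequent analysis).
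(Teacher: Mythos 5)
Your proof is correct and follows essentially the same route as the paper's: bound $\pi(s,G,d)\leq d$ when $s_A\leq d$, then use the standing restriction on $G$ to pick $x<(1-d)/2$ with $G(x,x)>0$ and note that the symmetric offer $(1-2x,x,x)$ satisfies $\Lambda(x,x,G)\geq G(x,x)>0$, hence yields a payoff strictly above $d$. No gaps.
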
 
\begin{proof}
Observe that $\pi(s,G,d)\leq d$. Consider an offer $s'$ in which $s'_B=s'_C=x\in (0,\frac{1-d}{2})$ and $G(x,x)>0$. Such an $x$ exists given the restriction in the opening paragraph. By construction, $s'_A>d$. Moreover, $\Lambda(x,x,G)\geq G(x,x)>0$, where the  first inequality follows from $(-\infty,s_B]\times (-\infty,s_C]$ being a subset of $\left[(-\infty,s_B]\times \Re\right]\cup\left[\Re\times (-\infty,s_C]\right]$. Therefore, $\pi(s',G,d)>d$.
\end{proof}

We now consider the limit case $(H,d)$, where $H$ is a joint distribution on $[0,\overline\tau]\times[0,\overline\tau]$ where $\overline\tau \geq 1/2$ that has uniform marginals and satisfies \Cref{Assumption-Thresholds}. We use $H_{\tau_i}$ to denote the marginal CDF for $\tau_i$.

\begin{lemma}\label{Lemma-MWCOffers}
    Under distribution $H$, every optimal offer is an MWC offer in which the highest share offered to a non-proposing player, $s_B$, is no more than $\overline\tau$.
\end{lemma}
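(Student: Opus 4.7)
The plan is to prove the two claims in the lemma—$s_C=0$ (MWC) and $s_B\leq\overline\tau$—by case analysis that, for any candidate violating offer, exhibits a feasible alternative yielding strictly higher expected payoff. Throughout, I restrict to offers with $s_A>d$, which is without loss by \Cref{Lemma-BoundingOfferSpace}, so that $s_A-d>0$.

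For the MWC claim, take an offer $s=(s_A,s_B,s_C)$ with (by convention) $s_B\geq s_C$ and suppose $s_C>0$. I split into three regimes according to how $s_B$ and $s_B+s_C$ sit relative to $\overline\tau$. First, if $s_B\geq\overline\tau$, then $\Lambda(s_B,s_C,H)=1$ since $B$ accepts with probability one, and the MWC alternative $s'=(s_A+s_C,s_B,0)$ retains acceptance one, giving $\pi(s',H,d)-\pi(s,H,d)=s_C>0$. Second, if $s_B+s_C\leq\overline\tau$, then using uniform marginals and inclusion–exclusion, direct computation gives $\Lambda(s_B+s_C,0,H)-\Lambda(s_B,s_C,H)=H(s_B,s_C)$; CDF monotonicity and \Cref{Assumption-Thresholds} yield $H(s_B,s_C)\geq H(s_C,s_C)>0$, so the MWC alternative $s'=(s_A,s_B+s_C,0)$ strictly improves the payoff by $(s_A-d)H(s_B,s_C)>0$. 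Third, if $s_B<\overline\tau$ but $s_B+s_C>\overline\tau$, then $s_A=1-s_B-s_C<1-\overline\tau$; comparing to the saturated MWC offer $s''=(1-\overline\tau,\overline\tau,0)$, which passes with probability one, I get $\pi(s'',H,d)=1-\overline\tau>s_A\geq\pi(s,H,d)$, where the last inequality uses $\pi(s,H,d)=d+(s_A-d)\Lambda\leq s_A$ for $\Lambda\leq 1$ and $s_A-d>0$.

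For the $s_B\leq\overline\tau$ claim, given any MWC offer $(s_A,s_B,0)$ with $s_B>\overline\tau$, acceptance already equals one, so the offer $(s_A+(s_B-\overline\tau),\overline\tau,0)$ preserves acceptance probability one while strictly raising the proposer's share, and hence strictly raises $\pi$.

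The main obstacle is the third regime of the MWC argument, where simply reallocating $s_C$ to $B$ need not strictly improve acceptance (e.g., under strong negative correlation between thresholds, the Fréchet bound leaves no slack). The resolution is to bypass that local reallocation and compare to the globally saturated MWC offer $(1-\overline\tau,\overline\tau,0)$; the observation $s_A<1-\overline\tau$ in this regime makes that comparison favorable. One also has to verify $s''_A=1-\overline\tau>d$ so that the comparison is with an offer eligible under \Cref{Lemma-BoundingOfferSpace}; but this follows because, at any $s$ with $s_A>d$ in this regime, $d<s_A<1-\overline\tau$.
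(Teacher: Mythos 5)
Your proof is correct and follows essentially the same strategy as the paper's: when $s_B+s_C\leq\overline\tau$ you consolidate the smaller share into the larger one and use inclusion--exclusion with the uniform marginals plus \Cref{Assumption-Thresholds}, and otherwise you compare against the saturated MWC offer $(1-\overline\tau,\overline\tau,0)$, which is exactly the paper's argument. Your three-way split (separating $s_B\geq\overline\tau$ from $s_B<\overline\tau<s_B+s_C$) is only a cosmetic refinement of the paper's two cases, since the saturated-offer comparison already covers all of $s_B+s_C>\overline\tau$.
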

\begin{proof}
    \Cref{Lemma-BoundingOfferSpace} establishes that every offer $s$ in which $s_A\leq d$ is suboptimal. Therefore, it suffices to consider only offers $s$ in which $s_A>d$. 
    
    We show that every offer $s$ in which $s_B+s_C>\overline\tau$ is suboptimal.\footnote{This case is relevant only if $d+\overline\tau< 1$.} Consider an alternative offer $s'$ in which $s_A'=1-\overline\tau$, $s_B=\overline\tau$, and $s_C=0$. Observe that $s_A'>s_A>d$ and $\Lambda(s_B,s_C,H)\leq 1 = \Lambda(s'_B,s'_C,H)$. Therefore, $\pi(s',H,d)>\pi(s,H,d)$. 
   
    We now consider an offer $s$ in which $s_B+s_C\leq\overline\tau$ and $s_B,s_C$ are each strictly positive. Let $s'$ be the offer in which $s'_A=s_A$, $s'_B = s_B+s_C$, and $s'_C=0$. Observe that
    \begin{align*}
        \Lambda(s_B,s_C)&=H_{\tau_B}(s_B)+H_{\tau_C}(s_C)-H(s_B,s_C)\\
        &=\frac{s_B}{\overline\tau}+\frac{s_C}{\overline\tau}-H(s_B,s_C)\\
        &\geq \frac{s_B}{\overline\tau}+\frac{s_C}{\overline\tau}-H(\min\{s_B,s_C\},\min\{s_B,s_C\})\\
        &>\frac{s_B}{\overline\tau}+\frac{s_C}{\overline\tau}\\
        &= \Lambda(s'_B,s'_C),
    \end{align*}
  where the first equality uses the inclusion-exclusion principle; the second equality uses the fact that $H$ has uniform marginals on $[0,\overline\tau]$; the first inequality uses the fact that $H$ is non-decreasing in each argument; the second inequality uses the fact that $\min\{s_B,s_C\}>0$ and therefore \Cref{Assumption-Thresholds} implies that $H(\min\{s_B,s_C\},\min\{s_B,s_C\})>0$, and the final equality follows by construction. 
  As $s'_A=s_A>d$, we reach the conclusion that $\pi(s',H,d)>\pi(s,H,d)$. 
\end{proof}
We now characterize the optimal MWC offer in the limit case $(H,d)$. 
\begin{lemma}
    The optimal offer in the bargaining problem $(H,d)$ is an MWC offer of the form $\left(\frac{1+d}{2},\frac{1-d}{2},0\right)$.
\end{lemma}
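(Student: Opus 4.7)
The plan is to combine the two preceding lemmas to reduce the maximization problem to a one-dimensional concave problem, then solve it by the first-order condition.

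First, I would invoke \Cref{Lemma-MWCOffers} to restrict attention to MWC offers of the form $(s_A, s_B, 0)$ with $s_A + s_B = 1$ and $s_B \in [0, \overline\tau]$. \Cref{Lemma-BoundingOfferSpace} further allows me to restrict to offers with $s_A > d$. The next step is to compute $\Lambda(s_B, 0, H)$ using the inclusion--exclusion principle: since $H$ has uniform marginals on $[0,\overline\tau]$ and $\tau_C$ has no mass at zero, $\Lambda(s_B,0,H) = H_{\tau_B}(s_B) + H_{\tau_C}(0) - H(s_B,0) = s_B/\overline\tau$. This reduces the proposer's payoff to
\begin{equation*}
\pi((s_A,s_B,0), H, d) = d + (s_A - d)\,\frac{1 - s_A}{\overline\tau},
\end{equation*}
using $s_B = 1 - s_A$.

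Next, I would optimize this expression over $s_A \in (d, 1]$. Differentiating with respect to $s_A$ yields $(1 + d - 2s_A)/\overline\tau$, and the second derivative is $-2/\overline\tau < 0$, so the objective is strictly concave in $s_A$ and the unique interior maximum is $s_A^* = (1+d)/2$, giving $s_B^* = (1-d)/2$.

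Finally, I would verify feasibility and uniqueness: since $\overline\tau \geq 1/2$ and $d \geq 0$, we have $s_B^* = (1-d)/2 \leq 1/2 \leq \overline\tau$, so the candidate lies in the admissible set identified by \Cref{Lemma-MWCOffers}; and since $d < 1$, we have $s_A^* > d$, consistent with \Cref{Lemma-BoundingOfferSpace}. The argument is almost entirely routine; the only mild subtlety is ensuring that the inclusion--exclusion computation of $\Lambda$ genuinely simplifies because the second marginal term and the joint term both vanish at $s_C = 0$, which follows from the continuity of the uniform marginal of $\tau_C$ at zero. This completes the identification of $\left(\tfrac{1+d}{2}, \tfrac{1-d}{2}, 0\right)$ as the unique optimal offer under $(H,d)$.
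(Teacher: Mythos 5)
Your proof is correct and follows essentially the same route as the paper: use \Cref{Lemma-MWCOffers} to reduce to one-dimensional MWC offers, note that $\Lambda(s_B,0,H)=s_B/\overline\tau$ because the uniform marginal of $\tau_C$ places no mass at zero, and solve the resulting strictly concave problem by its first-order condition, checking $(1-d)/2\leq\overline\tau$. Your write-up is if anything slightly more careful than the paper's, which states the optimization domain with an apparent typo ($x\in[1-d,\overline\tau]$ rather than the feasible range of $s_B$).
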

\begin{proof}
  \Cref{Lemma-MWCOffers} establishes that every optimal offer is an MWC offer in which $s_B\leq \overline\tau$. Therefore,
  \begin{align*}
      \argmax_{s\in S}\pi(s,H,d)&=\argmax_{x\in [1-d,\overline\tau]}\left\{(1-x-d) \Lambda(x,0,H) +d\right\}\\
      &=\argmax_{x\in [1-d,\overline\tau]} (1-x-d)(x/\overline\tau).
  \end{align*}
  The optimum to this strictly concave problem is characterized by its first-order condition, which delivers the optimal $x=(1-d)/2$. 
\end{proof}

Having considered the limit case, we establish some results for an arbitrary bivariate CDF $G$ such that $G(x,x)>0$ for some $x< (1-d)/2$. In light of \Cref{Lemma-BoundingOfferSpace}, an offer optimizes $\pi(\cdot,G,d)$ if and only if it optimizes the modified payoff function $\hat\pi(s,G,d)\equiv d+\min\{s_A-d,0\}\Lambda(s,G)$. We argue that $\hat\pi$ is upper semicontinuous in $s$. 
\begin{lemma}\label{Lemma-USC}
    The payoff function $\hat\pi(s,G,d)$ is upper semicontinuous in $s$.
\end{lemma}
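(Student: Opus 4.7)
The plan is to reduce upper semicontinuity of $\hat\pi$ to upper semicontinuity of $\Lambda$. The coefficient of $\Lambda$ appearing in $\hat\pi$ is a continuous, non-negative, bounded function of $s_A$, and $\Lambda$ itself takes values in $[0,1]$. Because the product of a continuous non-negative function and a bounded non-negative upper semicontinuous function is USC, and adding the constant $d$ preserves USC, the whole task reduces to showing that $\Lambda(\cdot,\cdot,G)$ is USC on $\Re^2$.

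For that step I would use a pointwise-USC plus reverse-Fatou argument. For each fixed realization $(\tau_B,\tau_C)$, the integrand
\[
g(s_B,s_C;\tau_B,\tau_C)\equiv \mathbb{1}_{\{s_B\geq\tau_B\}\cup\{s_C\geq\tau_C\}}
\]
is the indicator of a closed subset of $\Re^2$, hence upper semicontinuous in $(s_B,s_C)$. Because $g$ is dominated by the $G$-integrable constant $1$, the reverse Fatou lemma gives, for every sequence $(s_B^n,s_C^n)\to(s_B^0,s_C^0)$,
\[
\limsup_{n\to\infty}\Lambda(s_B^n,s_C^n,G)\;\leq\;\int\limsup_{n\to\infty}g(s_B^n,s_C^n;\tau)\,dG(\tau)\;\leq\;\Lambda(s_B^0,s_C^0,G),
\]
which is exactly the USC property for $\Lambda$.

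The main conceptual obstacle is getting the direction of semicontinuity right at points where $s_B$ or $s_C$ coincides with an atom of $G$. Because the acceptance event is defined by the \emph{weak} inequalities $s_B\geq\tau_B$ or $s_C\geq\tau_C$, the set of accepting types is closed, so $\Lambda$ jumps upward at such points---making it USC but not LSC, and in particular discontinuous. This same asymmetry is what motivates the passage from $\pi$ to $\hat\pi$: on the region where the coefficient of $\Lambda$ in $\pi$ would be negative, the upward jumps of $\Lambda$ would become downward jumps of the payoff and break USC, while \Cref{Lemma-BoundingOfferSpace} guarantees that all relevant optima of $\pi$ survive the modification.
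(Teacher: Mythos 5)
Your proof is correct. The outer reduction is the same as the paper's---both arguments absorb the constant $d$ and use the fact that the product of a nonnegative continuous function of $s_A$ with the nonnegative, bounded function $\Lambda$ is upper semicontinuous once $\Lambda$ is---but you handle the key step, upper semicontinuity of $\Lambda(\cdot,\cdot,G)$, by a genuinely different mechanism. The paper exploits the monotone, CDF-like structure of $\Lambda$: it first proves right-continuity in each argument via continuity from above of probability measures along a decreasing sequence of acceptance sets, and then combines right-continuity with the fact that $\Lambda$ is non-decreasing in each coordinate to conclude upper semicontinuity. You instead note that for each fixed $(\tau_B,\tau_C)$ the integrand is the indicator of a closed subset of $\Re^2$, hence pointwise upper semicontinuous in $(s_B,s_C)$, and push the $\limsup$ through the integral with the reverse Fatou lemma, the dominating function being the constant $1$. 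Your route is shorter and more robust, since it never uses monotonicity of $\Lambda$ and would survive any modification of the acceptance event that keeps it closed in $(s_B,s_C)$ for each type realization; the paper's route makes the one-sided nature of the discontinuities (upward jumps where $s_B$ or $s_C$ hits an atom of $G$) explicit, which is precisely the structural point your closing remark identifies as the reason the passage from $\pi$ to $\hat\pi$, justified by \Cref{Lemma-BoundingOfferSpace}, is needed in the first place. Both proofs are complete.
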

\begin{proof}
    Recall that the product of two nonnegative upper semicontinuous functions is upper semicontinuous. Observe that $\min\{s_A-d,0\}$ is nonnegative and continuous in $s$, and $\Lambda(s_B,s_C,G)$ is, by definition, nonnegative. We establish that $\Lambda(s_B,s_C,G)$ is upper semicontinuous in $s$ by first showing that $\Lambda$ is right-continuous in its first two arguments; to minimize notation, we suppress $G$ in the argument below.

    We establish right-continuity in $s_B$ as a symmetric argument applies for $s_C$. Consider a sequence of sets $A_n$ where $A_n \equiv \left[(-\infty,s_B+\frac{1}{n}]\times \Re\right]\cup\left[\Re\times (-\infty,s_C]\right]$. Observe that $A_1\supseteq A_2 \supseteq \ldots$, and $\lim_{n\rightarrow\infty} A_n = \left[(-\infty,s_B]\times \Re\right]\cup\left[\Re\times (-\infty,s_C]\right]$. Therefore, by Theorem 2.1(ii) of \cite{billingsley1995}, $\Lambda(s_B,s_C)=\lim_{n\rightarrow\infty} \Lambda\left(s_B+\frac{1}{n},s_C\right)$, which establishes that $\Lambda$ is right-continuous in its first argument. 

    We now prove that $\Lambda$ is upper semicontinuous in $(s_B,s_C)$. Fix $\epsilon>0$. By right continuity of $\Lambda$ in its first argument, there exists $\delta_B>0$ such that $$\Lambda(s_B+\delta_B,s_C)-\Lambda(s_B,s_C)<\epsilon/2.$$ Analogously, there exists $\delta_C>0$ such that $$\Lambda(s_B+\delta_B,s_C+\delta_C)-\Lambda(s_B+\delta_B,s_C)<\epsilon/2.$$ Therefore, $\Lambda(s_B+\delta_B,s_C+\delta_C)-\Lambda(s_B,s_C)<\epsilon$. Because $\Lambda$ is non-decreasing in each argument, we obtain that for any $(s'_B,s'_C)\leq (s_B+\delta_B,s_C+\delta_C)$, $\Lambda(s_B',s_C')<\Lambda(s_B,s_C)+\epsilon$. Thus, there exists a $\delta$-ball around $(s_B,s_C)$ such that for all $(s'_B,s'_C)$ in that $\delta$-ball, $|\Lambda(s'_B,s'_C)-\Lambda(s_B,s_C)|<\epsilon$.
\end{proof}

\begin{lemma}\label{Lemma-Existence}
    There exists an offer that maximizes $\pi(\cdot,G,d)$.
\end{lemma}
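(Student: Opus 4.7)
The plan is to apply a Weierstrass-type extreme value theorem to the modified payoff $\hat\pi(\cdot,G,d)$ on a compact domain, and then invoke the equivalence---noted in the text just before \Cref{Lemma-USC}---between the maximizers of $\pi(\cdot,G,d)$ and those of $\hat\pi(\cdot,G,d)$. The domain $S = \{s \in \Re^3_+ : s_A + s_B + s_C = 1\}$ of feasible divisions is a closed and bounded subset of $\Re^3$, and hence compact.

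First, I would combine \Cref{Lemma-USC}, which establishes that $\hat\pi(\cdot,G,d)$ is upper semicontinuous in $s$, with the classical fact that an upper semicontinuous real-valued function on a non-empty compact set attains its supremum. This immediately yields some $s^\ast \in \argmax_{s \in S} \hat\pi(s,G,d)$. Next, I would verify that $s^\ast$ also maximizes $\pi(\cdot,G,d)$. By the construction of $\hat\pi$, the two functions agree on the set of offers for which $\pi \geq d$, which by \Cref{Lemma-BoundingOfferSpace} contains every candidate maximizer of $\pi$; moreover, the explicit offer exhibited in the proof of \Cref{Lemma-BoundingOfferSpace}---with $s_B = s_C = x \in (0,(1-d)/2)$ and $G(x,x) > 0$---attains $\pi > d$ and hence witnesses that $\sup_S \hat\pi > d$. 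Therefore $s^\ast$ lies in the region where $\hat\pi = \pi$, which gives $\pi(s^\ast,G,d) = \sup_S \hat\pi \geq \sup_S \pi$, so $s^\ast$ is a maximizer of $\pi$ as claimed.

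The main obstacle has already been absorbed into \Cref{Lemma-USC}: because $\Lambda(\cdot,\cdot,G)$ is only right-continuous in its first two arguments, the original payoff $\pi$ can itself fail to be upper semicontinuous at offers where acceptance-threshold mass sits exactly at the proposed share, so a Weierstrass argument cannot be applied to $\pi$ directly. The passage to $\hat\pi$ is precisely the device that neutralizes the jumps occurring on the irrelevant region $\{\pi < d\}$ without disturbing the optimization problem, so once \Cref{Lemma-USC} and \Cref{Lemma-BoundingOfferSpace} are in hand the existence statement follows essentially as a corollary; the residual work is merely to check compactness of $S$ and to cite the standard usc extreme value theorem.
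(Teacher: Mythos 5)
Your proof is correct and follows essentially the same route as the paper's: both apply the upper semicontinuity from \Cref{Lemma-USC} together with compactness of the simplex to obtain a maximizer of $\hat\pi$, and then transfer it to $\pi$ via \Cref{Lemma-BoundingOfferSpace}. The only difference is that you spell out the transfer step (showing the maximizer lies in the region where $\hat\pi$ and $\pi$ coincide), which the paper compresses into a single citation of \Cref{Lemma-BoundingOfferSpace}.
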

\begin{proof}
    Because $\hat\pi(\cdot,G,d)$ is upper semicontinuous and the unit simplex $S$ is compact, there exists an offer $s$ that maximizes $\hat\pi(\cdot,G,d)$ over $S$. By \Cref{Lemma-BoundingOfferSpace}, any such offer also maximizes $\pi(\cdot,G,d)$. 
\end{proof}

Finally, we complete the argument by establishing that the optimal solution close to this limit case approaches the optimal offer of the limit case. For a bargaining problem $(G,d)$, let $s^*(G,d)\equiv \argmax_{s\in {S}}\hat\pi(s,G,d)$ be the correspondence of optimal proposals and $\pi^*(G,d)\equiv \max_{s\in \overline{S}}\hat\pi(s,G,d)$ be its value. 
\begin{lemma}
    The value function $\pi^*(\cdot)$ is continuous at $(H,d)$ and the correspondence of optimal proposals $s^*(\cdot)$ is upper hemicontinuous at $(H,d)$. 
\end{lemma}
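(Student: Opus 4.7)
The plan is to upgrade Lemma~\ref{Lemma-USC}'s one-sided continuity into full joint continuity of $\hat\pi$ at points of the form $(s,H,d)$, and then run a standard maximum-theorem argument. The key observation that enables the upgrade is that the specific limit $H$ has uniform marginals on $[0,\overline\tau]$, so both marginals are continuous, which forces the bivariate CDF $H$ to be continuous on all of $\Re^2$ (any atom of $H$ along a line $\{\tau_i=x\}$ would contradict continuity of the corresponding marginal). Away from the limit we cannot expect joint continuity, so the argument is genuinely local to $(H,d)$.

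First, I would prove the following lemma: if $(s_n,G_n,d_n)\to(s,H,d)$ with $G_n\Rightarrow H$ weakly, then $\hat\pi(s_n,G_n,d_n)\to\hat\pi(s,H,d)$. Writing $\Lambda(s_B,s_C,G)=G_{\tau_B}(s_B)+G_{\tau_C}(s_C)-G(s_B,s_C)$ via inclusion--exclusion, and noting that $d_n$ and $s_{n,A}$ enter $\hat\pi$ continuously, this reduces to showing
\begin{equation*}
G_{n,\tau_B}(s_{n,B})\to H_{\tau_B}(s_B),\quad G_{n,\tau_C}(s_{n,C})\to H_{\tau_C}(s_C),\quad G_n(s_{n,B},s_{n,C})\to H(s_B,s_C).
\end{equation*}
Each of these is a continuity point of the corresponding limit CDF, since $H$ and its marginals are continuous everywhere. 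The Portmanteau theorem applied to the open sets $(-\infty,s_B\pm\delta)\times(-\infty,s_C\pm\delta)$, combined with monotonicity of $G_n$ to squeeze $G_n(s_{n,B},s_{n,C})$ between $G_n$ evaluated at fixed continuity points, then gives the desired joint convergence.

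Given this joint-continuity lemma, both conclusions are essentially immediate. For upper semicontinuity of $\pi^*$ at $(H,d)$, I would take $(G_n,d_n)\to(H,d)$, pick $s_n\in s^*(G_n,d_n)$ via \Cref{Lemma-Existence}, extract a subsequence with $s_n\to\bar s\in S$ by compactness, and obtain $\limsup_n\pi^*(G_n,d_n)=\lim_n\hat\pi(s_n,G_n,d_n)=\hat\pi(\bar s,H,d)\leq\pi^*(H,d)$. For lower semicontinuity, I would fix any $s^\star\in s^*(H,d)$ and use
\begin{equation*}
\pi^*(G_n,d_n)\geq\hat\pi(s^\star,G_n,d_n)\to\hat\pi(s^\star,H,d)=\pi^*(H,d).
\end{equation*}
For upper hemicontinuity of $s^*$ at $(H,d)$, I would invoke the sequential closed-graph criterion: if $s_n\in s^*(G_n,d_n)$ with $s_n\to\bar s$, then joint continuity together with continuity of $\pi^*$ yields $\hat\pi(\bar s,H,d)=\lim\hat\pi(s_n,G_n,d_n)=\lim\pi^*(G_n,d_n)=\pi^*(H,d)$, so $\bar s\in s^*(H,d)$; compactness of $S$ then delivers upper hemicontinuity in the usual sense.

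The main obstacle is the joint convergence $\Lambda(s_{n,B},s_{n,C},G_n)\to\Lambda(s_B,s_C,H)$, because weak convergence alone only gives pointwise CDF convergence at continuity points of $H$ and does not automatically handle a simultaneously moving evaluation point. The crucial input is the continuity of $H$ on all of $\Re^2$---not merely continuity of its marginals in the scalar sense---which must be deduced from \Cref{Assumption-Thresholds} and the uniformity of the marginals. Once continuity of $H$ is secured, the sandwich argument via Portmanteau and monotonicity is routine, and the rest of the proof is a standard exercise in applying Berge-type reasoning to an upper semicontinuous payoff whose only discontinuities are ruled out at the limit by the regularity of $H$.
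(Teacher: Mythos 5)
Your proof is correct, but it takes a genuinely different route from the paper's. The paper proves the lemma via $\Gamma$-convergence: it establishes a limsup inequality for arbitrary sequences $s^n\to s$ (using only upper semicontinuity of $\hat\pi(\cdot,H_n)$ and the Portmanteau theorem for closed sets) and then constructs an explicit recovery sequence $s^\epsilon=(s_A-2\epsilon,s_B+\epsilon,s_C+\epsilon)$ to get a liminf inequality (using the Portmanteau theorem for open sets), finally citing standard $\Gamma$-convergence results to extract continuity of the value and upper hemicontinuity of the argmax. You instead upgrade \Cref{Lemma-USC} to full joint continuity of $\hat\pi$ at $(s,H,d)$, and your key structural observation --- that continuity of the uniform marginals forces joint continuity of the bivariate CDF $H$, via $|H(x,y)-H(x',y')|\leq|H_{\tau_B}(x)-H_{\tau_B}(x')|+|H_{\tau_C}(y)-H_{\tau_C}(y')|$ --- is correct and is not used by the paper. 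With $H$ continuous everywhere, every rectangle is an $H$-continuity set, so the inclusion--exclusion decomposition of $\Lambda$ together with the monotone sandwich argument does deliver $\Lambda(s_{n,B},s_{n,C},G_n)\to\Lambda(s_B,s_C,H)$, and the rest is the classical Berge/closed-graph argument. Your route is more elementary (no $\Gamma$-convergence machinery) but leans on the specific regularity of the limit $H$; the paper's route would survive limits with atoms so long as \Cref{Assumption-Thresholds} and upper semicontinuity hold. One small omission: to connect maximizers of $\hat\pi$ to maximizers of the true payoff $\pi$ along the sequence, you still need the paper's Step 1, namely that for $n$ large the approximating beliefs satisfy $H_n(x,x)>0$ for some $x<(1-d_n)/2$ (so that \Cref{Lemma-BoundingOfferSpace} applies); this follows from the Portmanteau theorem applied to the open square below $\bigl(\tfrac{1-d-\epsilon}{2},\tfrac{1-d-\epsilon}{2}\bigr)$ and should be stated explicitly.
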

\begin{proof}
    Consider a sequence $(H_n,d_n)$ where $H_n\rightarrow_W H$ and $d_n\rightarrow d$.\medskip
    
\noindent\underline{Step 1}: Fix a small $\epsilon>0$. Because $d_n\rightarrow d$, there 
exists $N_d$ such that for all $n\geq N_d$, $d_n<d+\epsilon$. \Cref{Assumption-Thresholds} assures that for every $x \in \left(0, \frac{1-d-\epsilon}{2}\right)$, $H(x,x)>0$. Therefore, the open set $O\equiv (-\infty,\frac{1-d-\epsilon}
{2})\times (-\infty,\frac{1-d-\epsilon}{2})$ is of strictly positive $H$-measure. The Portmanteau Theorem \citep[Theorem 2.1]{billingsley1999} then implies that there exists $N_H$ such that for every $n\geq N_H$, $Pr_{H_n}(O)$ is strictly positive. Now consider $n\geq \max\{N_d,N_H\}$. For any such bargaining problem $(H_n,d_n)$, there exists $x<\frac{1-d-\epsilon}{2}<\frac{1-d_n}{2}$ such that $H_n(x,x)>0$. Therefore, the assumption that there exists $x<\frac{1-d_n}{2}$ such that $G(x,x)>0$ applies to all $G=H_n$ for $n\geq \max\{N_d,N_H\}$. We therefore restrict attention to such distributions and invoke \Cref{Lemma-BoundingOfferSpace,Lemma-USC,Lemma-Existence} going forward.  

\medskip

\noindent\underline{Step 2}: We show that for every $s\in S$, and for every sequence $(s^n)_{n=1,2,\ldots}$ where $s^n\rightarrow s$, $\limsup_n \hat\pi(s^n,H_n,d_n)\leq \hat\pi(s,H,d)$. To establish this claim, observe that 
    \begin{align*}
        &\limsup_n \hat\pi(s^n,H_n)-\hat\pi(s,H)\\\leq &\limsup_n [\hat\pi(s^n,H_n)- \hat\pi(s,H_n)]+\limsup_n[\hat\pi(s^n,H_n,d_n)-\hat\pi(s^n,H_n,d)]\\&+\limsup_n [\hat\pi(s,H_n)- \hat\pi(s,H)].
    \end{align*}
The first term on the RHS is weakly negative because $\hat\pi(\cdot,H_n)$ is upper semicontinuous by \Cref{Lemma-USC}; the second term is zero because $\hat\pi$ is continuous in its third argument, and the final term on the RHS is weakly negative by the Portmanteau Theorem because $\hat\pi$ is upper semicontinuous and bounded from above.\medskip 

\noindent\underline{Step 3}: We show that for every $\epsilon>0$, for every $s\in S$, there exists a sequence $(s^n)_{n=1,2,\ldots}$ such that $\liminf_{n} \pi(s^n,H_n,d_n)\geq \pi(s,H,d)-\epsilon$.

For $s$ in which $s_A\leq d$, we obtain that $\hat\pi(s,H,d)=d$. This step then holds trivially as it suffices to take any $s^n$ in which $s^n_A=0$.

Now consider $s_A>d$. Let $s^\epsilon=(s_A-2\epsilon,s_B+\epsilon,s_C+\epsilon)$ for $\epsilon\in (0,\frac{s_A-d}{2})$. Also let
\begin{align*}
	    \Lambda^o(s_B,s_C,H)&\equiv \int_{\tau_B\times \tau_C} \mathbb{1}_{\max\{s_B-\tau_B,s_C-\tau_C\}> 0}\,dH,
\end{align*}
denote the probability of the open set $\left[(-\infty,s_B)\times \Re\right]\cup\left[\Re\times (-\infty,s_C)\right]$ under measure $H$. Because $\Lambda^o\left(s_B+\epsilon,s_C+\epsilon,H^n\right)\leq \Lambda\left(s_B+\epsilon,s_C+\epsilon,H^n\right)$,
\begin{align*}
	\hat\pi(s^\epsilon,H_n,d_n)&= d_n+\max\{s_A-\epsilon-d_n,0\}\Lambda\left(s_B+\epsilon,s_C+\epsilon,H_n\right)\\
	&\geq d_n+\max\{s_A-\epsilon-d_n,0\}\Lambda^o\left(s_B+\epsilon,s_C+\epsilon,H_n\right).
\end{align*}
 Given that $H_n\rightarrow_W H$ and $\left[(-\infty,s_B)\times \Re\right]\cup\left[\Re\times (-\infty,s_C)\right]$ is an open set, the Portmanteau Theorem implies that 
 \begin{align}\label{Inequality-PortmanteauOpen}
 \liminf_n \Lambda^o(s_B+\epsilon,s_C+\epsilon,H^n)\geq \Lambda^o(s_B+\epsilon,s_3+\epsilon,H).
 \end{align}
Therefore,, 
\begin{align*}
\liminf_n	\hat\pi(s^\epsilon,H_n,d_n)&\geq d+ \max\{s_A-\epsilon-d,0\}\Lambda^o\left(s_B+\epsilon,s_C+\epsilon,H\right)\\
&\geq d+\max\{s_A-\epsilon-d,0\}\Lambda(s_B,s_C,H)\\
&\geq \hat\pi(s,H)-\epsilon,
\end{align*}
where the first inequality follows from $d_n\rightarrow d$ and \eqref{Inequality-PortmanteauOpen}, the second inequality follows from $\Lambda^o\left(s_B+\epsilon,s_C+\epsilon,H\right)\geq \Lambda(s_B,s_C,H)$, and the final inequality follows from $\Lambda(s_B,s_C,H)\leq 1$.\medskip

\noindent\underline{Step 4}: The desired conclusion then follows from \cite[Propositions 7.4-7.5]{santambrogio2023}: $\hat\pi(\cdot,H_n,d_n)$ then $\Gamma$-converges to $ \hat\pi(\cdot,H,d)$, which then implies that $\pi^*(H_n,d_n)$ converges to $\pi^*(H,d)$ and the correspondence $s^*(\cdot)$ is upper hemicontinuous at $(H,d)$.  
\end{proof}

\newpage

\section*{Online Appendices}

Our Online Appendices are organized as follows:
\begin{itemize}
    \item \Cref{Appendix-TablesBar} shows the tables supporting the bar graphs.
    \item \Cref{Appendix-AdditionalTables} shows additional tables and figures.
    \item \Cref{Appendix-AllMatches} shows exhibits for all matches.
    \item \Cref{Appendix-Instructions} contains some sample instructions.
\end{itemize}

\pagebreak
\section{Tables Supporting Bar Graphs}\label{Appendix-TablesBar}

\begin{table}[!h] 
\centering
\caption{Table For \Cref{fig:FirstProposer}}
\label{table:fig1-realized-share-first-proposer}
\centering
\begin{tabular}[t]{llcc}
\toprule
Location & Treatment & Mean Payoff (\%) & Std\\
\midrule
Caltech & 1-Perfect & 70.00 & 7.07\\
Caltech & 2-Perfect & 60.23 & 2.87\\
Caltech & 3-Perfect & 50.19 & 2.85\\
UCI & 1-Perfect & 60.94 & 2.94\\
UCI & 2-Perfect & 49.54 & 0.83\\
UCI & 3-Perfect & 42.82 & 0.92\\
UCI & 3-Partial-Incl & 44.11 & 0.98\\
UCI & 3-Partial-Excl & 43.11 & 1.33\\
UCI & 3-None & 41.11 & 0.93\\
\bottomrule
\end{tabular}
\end{table}

\begin{table}[!h]
\centering
\caption{t-statistics and p-values by Treatment and Location}
\centering
\begin{tabular}[t]{llccc}
\toprule
Location & Treatment & t-stat & \makecell{p-value \\ $H_0: \mu= 0.5$ \\ $H_1: \mu > 0.5$} & 
\makecell{p-value\\$H_0: \mu= 0.5$ \\ $H_1: \mu \ne 0.5$}\\
\midrule
Caltech & 1-Perfect & 23.300 & 0.000 & 0.000\\
Caltech & 2-Perfect & 28.500 & 0.000 & 0.000\\
Caltech & 3-Perfect & 0.627 & 0.266 & 0.532\\
UCI & 1-Perfect & 42.100 & 0.000 & 0.000\\
UCI & 2-Perfect & -5.650 & 1.000 & 0.000\\
UCI & 3-None & -101.000 & 1.000 & 0.000\\
UCI & 3-Partial-Excl & -34.700 & 1.000 & 0.000\\
UCI & 3-Partial-Incl & -54.000 & 1.000 & 0.000\\
UCI & 3-Perfect & -87.700 & 1.000 & 0.000\\
\bottomrule
\end{tabular}
\end{table}

\begin{table}[!h] 
\centering
\caption{Table for \Cref{fig:Coalitions}}\label{table:fig-3-coaliution-types}
\centering
\begin{tabular}[t]{llccccc}
\toprule
Location & Treatment  & \makecell{Grand\\ Coalition} & Dictatorial & \makecell{MWC \\ Weak}& \makecell{MWC\\ Strong} & \makecell{MWC\\ NA}\\
\midrule
Caltech & 1-Perfect & 11.36 & 2.27 & 86.36 & -- & --\\
Caltech & 2-Perfect & 12.50 & 1.14 & 76.14 & 10.23 & --\\
Caltech & 3-Perfect-Excl & 23.08 & -- & 57.69 & 19.23 & --\\
Caltech & 3-Perfect-Incl & 13.46 & 1.92 & -- & -- & 84.62\\
UCI & 1-Perfect & 30.92 & 0.66 & 67.76 & 0.66 & --\\
UCI & 2-Perfect & 25.66 & -- & 25.66 & 48.68 & --\\
UCI & 3-Perfect-Excl & 47.71 & -- & 28.44 & 23.85 & --\\
UCI & 3-Perfect-Incl & 37.25 & 1.96 & -- & -- & 60.78\\
UCI & 3-None & 64.29 & -- & -- & -- & 35.71\\
UCI & 3-Partial-Excl & 41.82 & -- & -- & -- & 58.18\\
UCI & 3-Partial-Incl & 40.00 & 0.95 & 18.10 & 40.95 & --\\
\bottomrule
\end{tabular}
\end{table}

\begin{table}[!h]
\centering
\caption{Table for \Cref{fig:egalitarian}}\label{table:fig-4-egal-coalition}
\centering
\begin{tabular}[t]{llccc}
\toprule
Location & Treatment & \makecell{Non \\Egalitarian} & \makecell{MWC\\ Egalitarian} & \makecell{Grand Coalition \\ Egalitarian}\\
\midrule
Caltech & 1-Perfect & 85.23 & 7.95 & 6.82\\
Caltech & 2-Perfect & 75.00 & 14.77 & 10.23\\
Caltech & 3-Perfect-Excl & 39.74 & 44.87 & 15.38\\
Caltech & 3-Perfect-Incl & 67.31 & 23.08 & 9.62\\
UCI & 1-Perfect & 86.18 & 7.89 & 5.92\\
UCI & 2-Perfect & 48.68 & 46.71 & 4.61\\
UCI & 3-Perfect-Excl & 17.43 & 46.79 & 35.78\\
UCI & 3-Perfect-Incl & 29.41 & 47.06 & 23.53\\
UCI & 3-Partial-Excl & 18.18 & 49.09 & 32.73\\
UCI & 3-Partial-Incl & 25.71 & 51.43 & 22.86\\
UCI & 3-None & 37.86 & 19.29 & 42.86\\
\bottomrule
\end{tabular}
\end{table}

\begin{table}[!h]
\centering
\caption{Table for \Cref{{fig:MWCegalitarian}}}\label{table:fig-5-egal-share-offers}
\centering
\begin{tabular}[t]{llcc}
\toprule
Location & Treatment & \% Egalitarian Offers & Mean of proposer share\\
\midrule
Caltech & 1-Perfect & 9.21 & 77.11\\
Caltech & 2-Perfect & 7.46 & 69.43\\
Caltech & 3-Perfect-Excl & 62.22 & 52.69\\
UCI & 1-Perfect & 11.65 & 69.78\\
UCI & 2-Perfect & 33.33 & 55.82\\
UCI & 3-Perfect-Excl & 90.32 & 51.61\\
UCI & 3-Partial-Incl & 84.21 & 50.94\\
\bottomrule
\end{tabular}
\end{table}

\clearpage

\section{Additional Tables and Graphs}\label{Appendix-AdditionalTables}

\subsection{Distribution of Offers and Acceptances}\label{Appendix-AdditionalTables1}
\Cref{fig:acceptedrejected} omitted some treatments; we show results from those treatments below.

\begin{figure}[h!]
    \centering
    \includegraphics[width=5in]{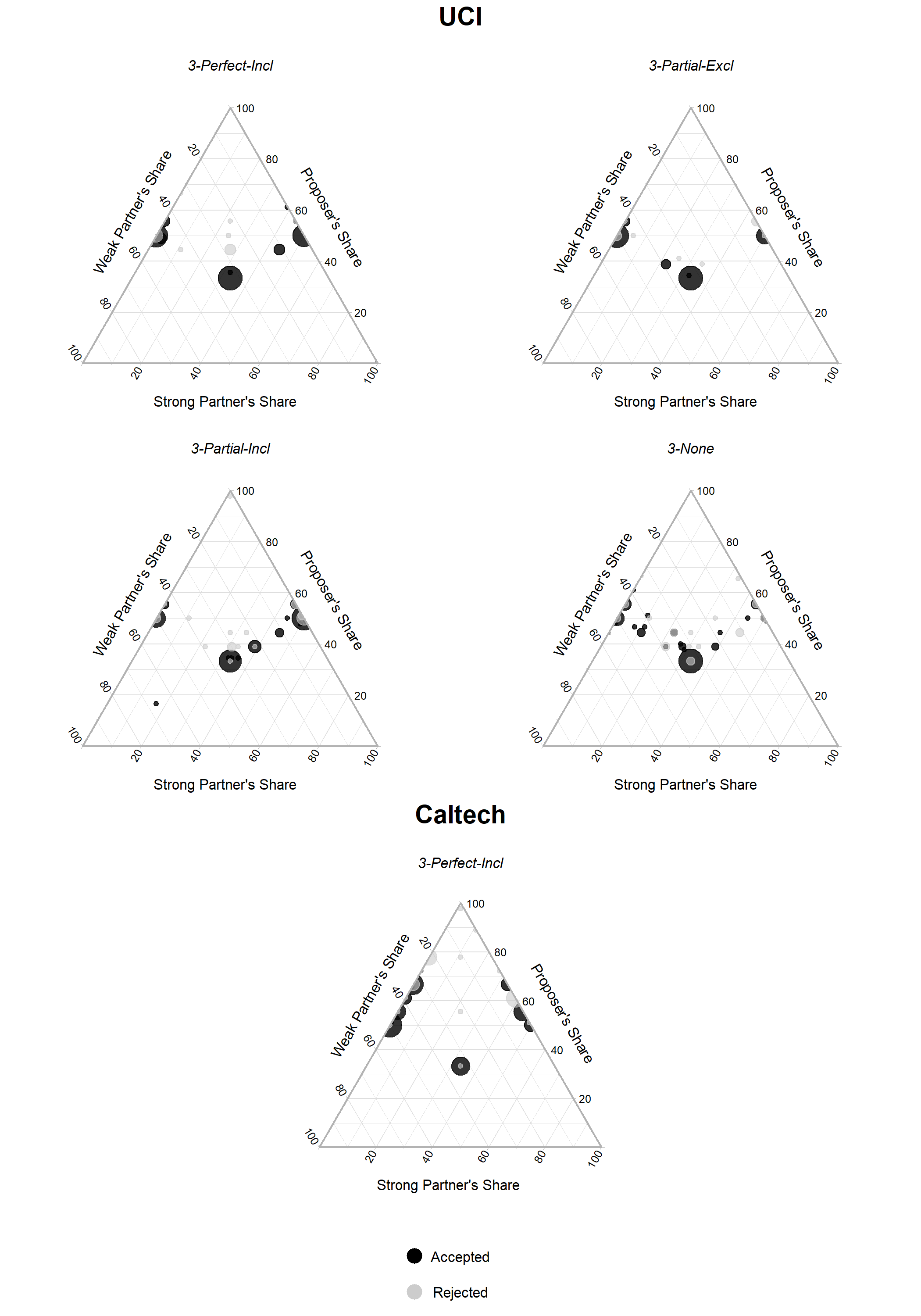}
    \caption{Distributions of Offers and Acceptances, Additional Treatments}  \label{fig:appendixacceptreject}
\end{figure}

\subsection{Mapping from First Proposals to Expected Payoffs}\label{Appendix-AdditionalTables2}
\Cref{fig:expectedpayoff} omitted some treatments; we show results from those treatments below.
\begin{figure}[h!]
    \centering
    \includegraphics[width=5in]{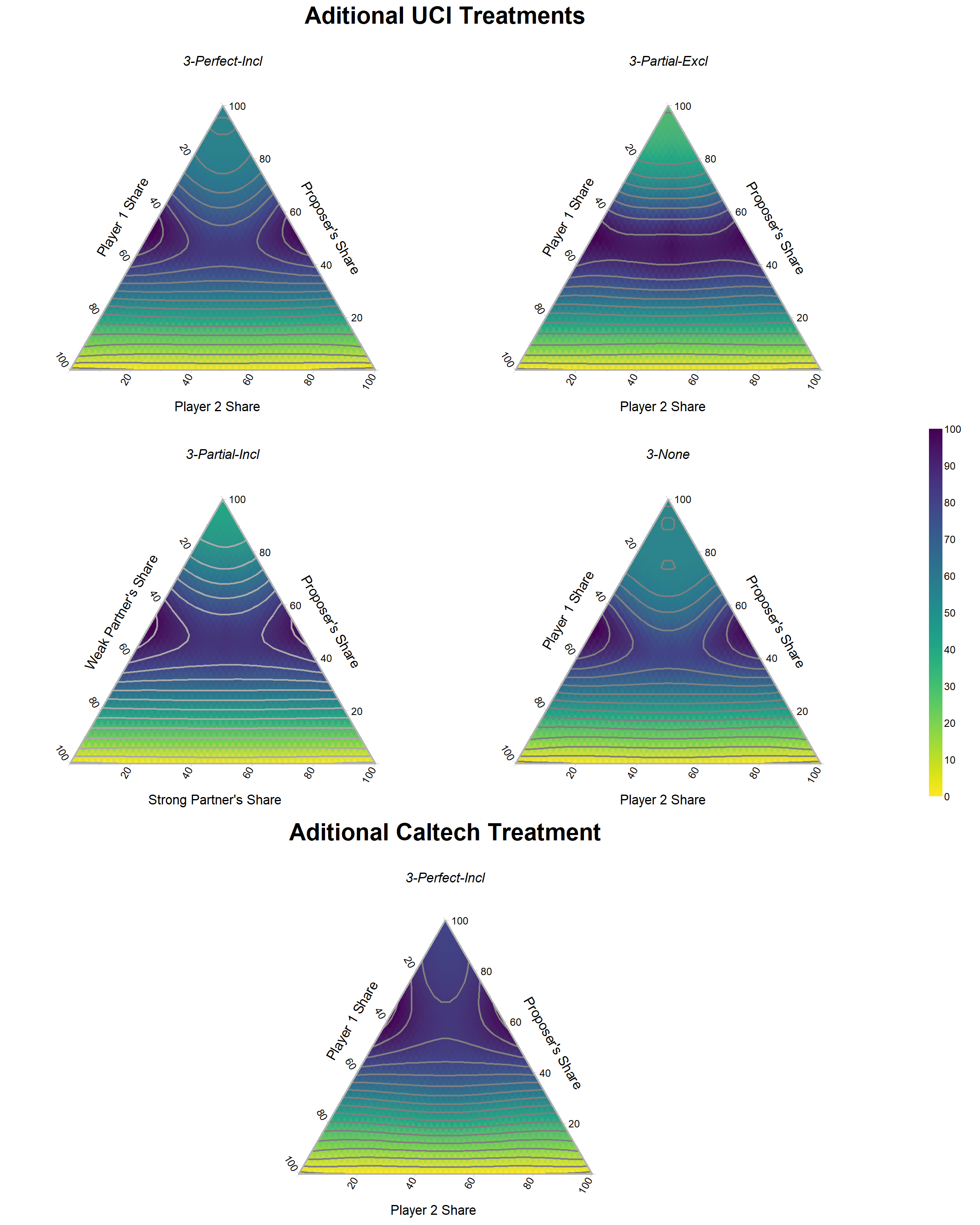}
    \caption{Mappings from First Proposals to Expected Payoffs, Additional Treatments}   \label{fig:appendixheatmap}
\end{figure}\clearpage

\subsection{Alternate Figures for \Cref{sec:OutcomesFirstProposer}}\label{Appendix:AlternateFigs}

\Cref{fig:FirstProposer,fig:Distr} used the data from experienced matches in which the first offer was accepted. Below, we produce the corresponding figures of final payoffs across \textit{all} experienced matches, \textit{including matches where the first proposal was rejected}.

\begin{figure}[h!]
    \centering
    \includegraphics[width=1\linewidth]{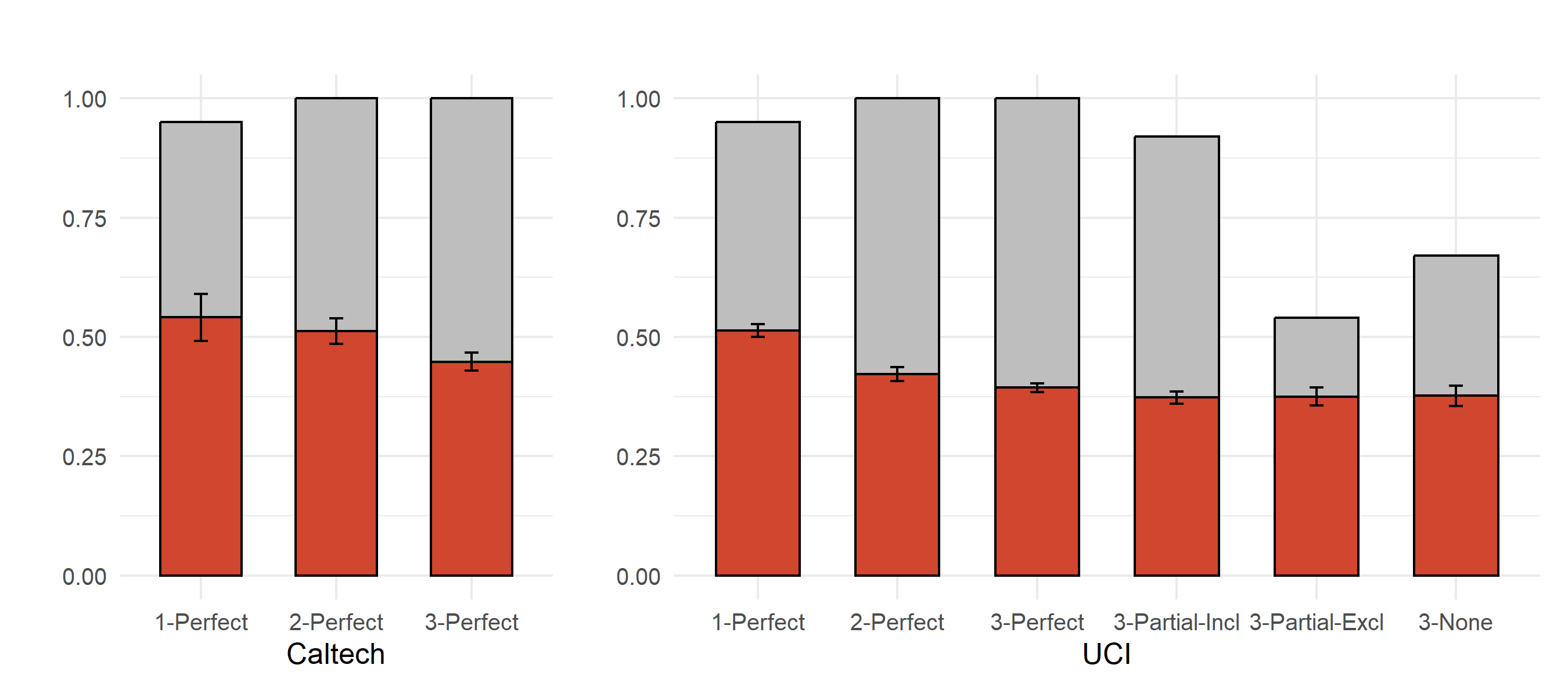}
    \caption{Realized Final Share of First Proposer Across All Experienced Matches (Including Matches Where the First Proposal was Rejected).}
    \label{fig:fig1-alt-all-offers}
\end{figure}

\begin{figure}[h!]
    \centering
    \includegraphics[width=1\linewidth]{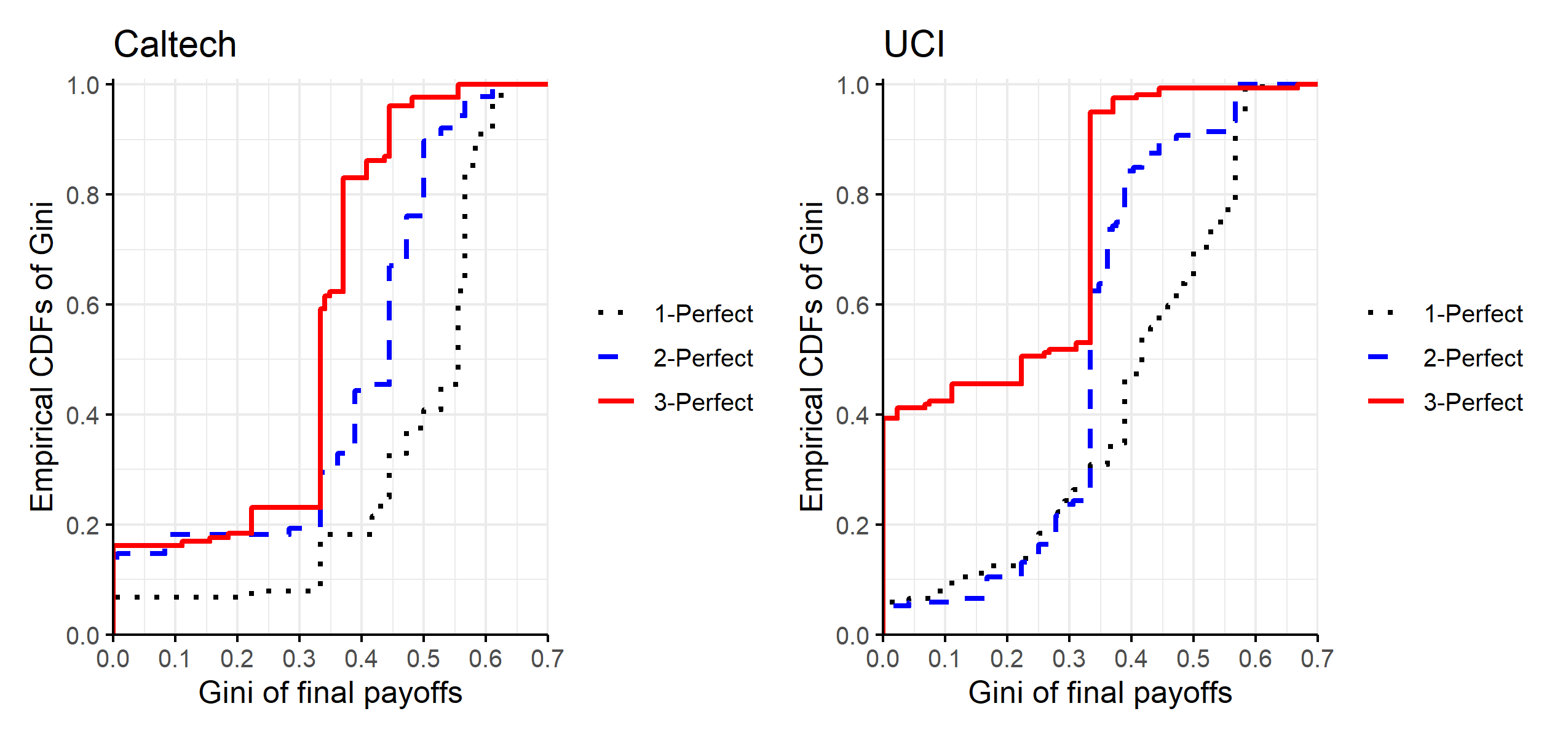}
    \caption{CDFs of Gini Coefficients of Final Payoffs Across All Experienced Matches (Including Matches Where the First Proposal was Rejected)}
    \label{fig:fig2-alt-all-offers}
\end{figure}

\clearpage

\subsection{Tables for \Cref{sec:Optimality}}\label{Appendix-Optimality}

First, we display the mean rejection payoffs for each treatment, alongside the rejection rates of the first offer. These are used in the computation of our two measures of proposer optimization.

\begin{table}[!h]\label{table:MRP}
\centering
\caption{Mean Rejection Payoffs}
\centering
\begin{tabular}[t]{llcc}
\toprule
Location & Treatment & Rejection rate & Mean Rejection Payoff\\
\midrule
Caltech & 1-Perfect & 22.73 & 5.00\\
Caltech & 2-Perfect & 27.27 & 27.24\\
Caltech & 3-Perfect-Excl & 29.49 & 26.09\\
Caltech & 3-Perfect-Incl & 38.46 & 43.00\\
UCI & 1-Perfect & 15.79 & 5.00\\
UCI & 2-Perfect & 31.58 & 26.28\\
UCI & 3-Perfect-Excl & 21.10 & 26.96\\
UCI & 3-Perfect-Incl & 21.57 & 25.76\\
UCI & 3-None & 20.71 & 24.52\\
UCI & 3-Partial-Excl & 18.18 & 12.22\\
UCI & 3-Partial-Incl & 23.81 & 15.47\\
\bottomrule
\end{tabular}
\end{table}

The table below displays what the main text refers to as the second measure, capturing the ratio (in percentages) of the actual first proposer payoff to the predicted optimal payoff, both on the aggregate (``Agg'' column) and per coalition type. Some entries exceed 100\% because the actual payoff accrued by proposers for some coalition types exceeded the predicted optimal payoffs. 

\begin{table}[h!] 
\centering
\caption{Proposer optimization rates: Measure two}\label{table:measure2}
\begin{tabular}[t]{>{}llccccccc}
\toprule
Location & Treatment & \makecell{Optimal\\Payoff}  & \makecell{MWC\\Weak} & \makecell{MWC\\Strong} & \makecell{MWC\\NA} & \makecell{Grand\\Coalition} & Dictatorial & Agg\\
\midrule
\textbf{Caltech} & 1-Perfect & 64 & 93 & -- & -- & 49 & 8 & 86\\
\textbf{Caltech} & 2-Perfect & 60 & 94 & 63 & -- & 60 & 56 & 86\\
\textbf{Caltech} & 3-Perfect-Excl & 46 & 104 & 62 & -- & 75 & -- & 89\\
\textbf{Caltech} & 3-Perfect-Incl & 55 & -- & -- & 96 & 62 & 89 & 91\\
\textbf{UCI} & 1-Perfect & 65 & 86 & 8 & -- & 66 & 140 & 80\\
\textbf{UCI} & 2-Perfect & 49 & 92 & 86 & -- & 78 & -- & 86\\
\textbf{UCI} & 3-Perfect-Excl & 44 & 97 & 97 & -- & 78 & -- & 88\\
\textbf{UCI} & 3-Perfect-Incl & 47 & -- & -- & 94 & 75 & 0 & 85\\
\textbf{UCI} & 3-None & 45 & -- & -- & 98 & 75 & -- & 83\\
\textbf{UCI} & 3-Partial-Excl & 41 & -- & -- & 104 & 76 & -- & 92\\
\textbf{UCI} & 3-Partial-Incl & 42 & 105 & 100 & -- & 71 & 0 & 88\\
\bottomrule
\end{tabular}
\end{table}

The following table presents the fraction of the optimal payoff an equal split offer to a grand coalition would produce on average for each location-treatment.

\begin{table}[!h]
\centering
\caption{Optimization rates for Equal-split Grand Coalition offers}\label{table:GCEgalitarian}
\begin{tabular}[t]{>{}llcc}
\toprule
Location & Treatment & \makecell{Optimal\\Payoff} & \makecell{GC \\ Egalitarian}\\
\midrule
\textbf{Caltech} & 1-Perfect & 64 & 50\\
\textbf{Caltech} & 2-Perfect & 60 & 55\\
\textbf{Caltech} & 3-Perfect-Excl & 46 & 70\\
\textbf{Caltech} & 3-Perfect-Incl & 55 & 66\\
\textbf{UCI} & 1-Perfect & 65 & 50\\
\textbf{UCI} & 2-Perfect & 49 & 65\\
\textbf{UCI} & 3-Perfect-Excl & 44 & 75\\
\textbf{UCI} & 3-Perfect-Incl & 47 & 70\\
\textbf{UCI} & 3-None & 45 & 72\\
\textbf{UCI} & 3-Partial-Excl & 41 & 80\\
\textbf{UCI} & 3-Partial-Incl & 42 & 75\\
\bottomrule
\end{tabular}
\end{table}

Finally, the following table presents the fraction of the optimal payoff that the first proposer obtains if her offer is rejected.

\begin{table}[!h]
\centering
\caption{Optimization rates for mean rejected payoff}\label{tb:MRPoptim}
\begin{tabular}[t]{>{}llcc}
\toprule
Location & Treatment & Optimal Payoff & MRP\\
\midrule
\textbf{Caltech} & 1-Perfect & 64 & 8\\
\textbf{Caltech} & 2-Perfect & 60 & 46\\
\textbf{Caltech} & 3-Perfect-Excl & 46 & 56\\
\textbf{Caltech} & 3-Perfect-Incl & 55 & 78\\
\textbf{UCI} & 1-Perfect & 65 & 8\\
\textbf{UCI} & 2-Perfect & 49 & 53\\
\textbf{UCI} & 3-Perfect-Excl & 44 & 61\\
\textbf{UCI} & 3-Perfect-Incl & 47 & 55\\
\textbf{UCI} & 3-None & 45 & 54\\
\textbf{UCI} & 3-Partial-Excl & 41 & 30\\
\textbf{UCI} & 3-Partial-Incl & 42 & 37\\
\bottomrule
\end{tabular}
\end{table}

\newpage


\pagebreak

\section{Results from all matches}\label{Appendix-AllMatches}

The following figures replicate Figures 1-6 using data from all matches.

\begin{figure}[h!]
    \centering
    \caption{Realized Proposer Share of Accepted First-round Offers}
    \includegraphics[width=1\linewidth]{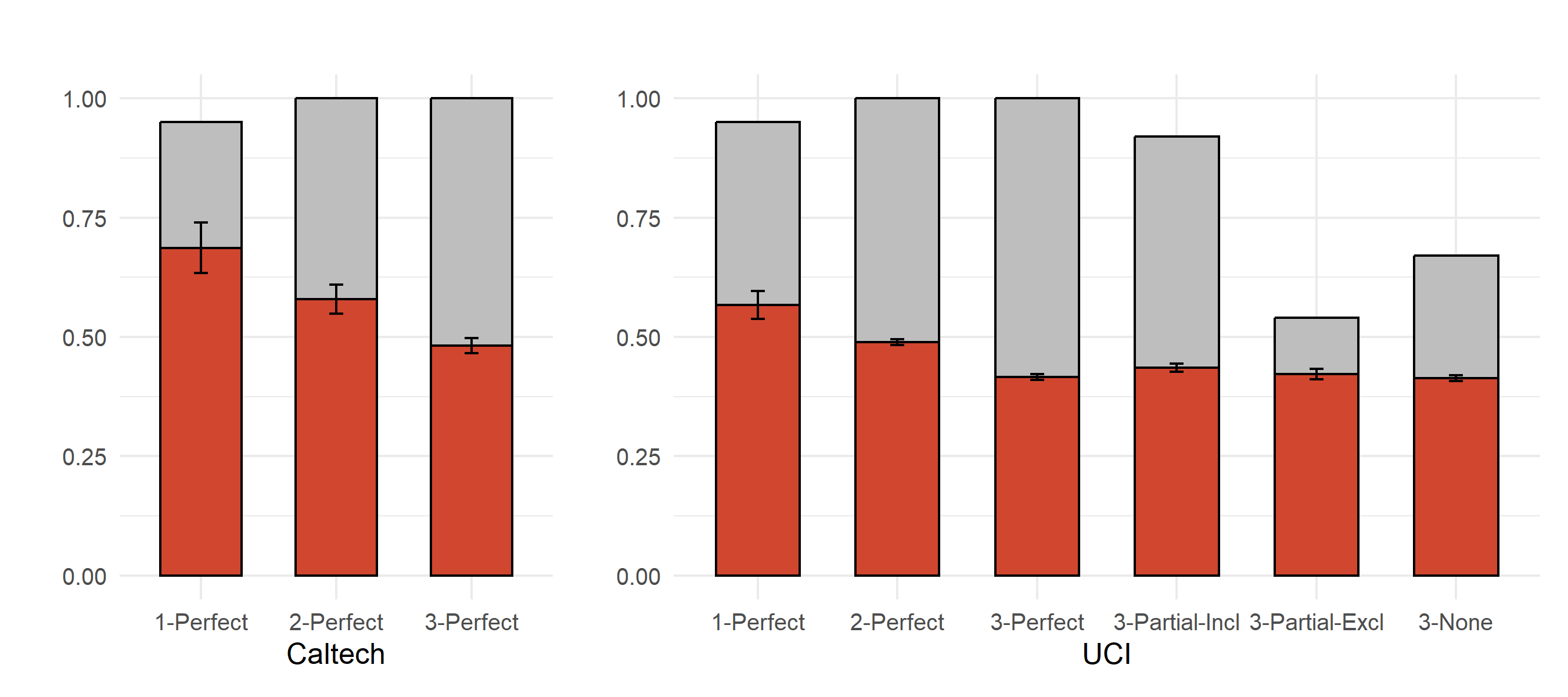}
\end{figure}

\begin{figure}[h!]
    \centering
    \caption{CDFs of Gini Coefficients of Final Payoffs of First-round Accepted Offers}
    \includegraphics[width=1\linewidth]{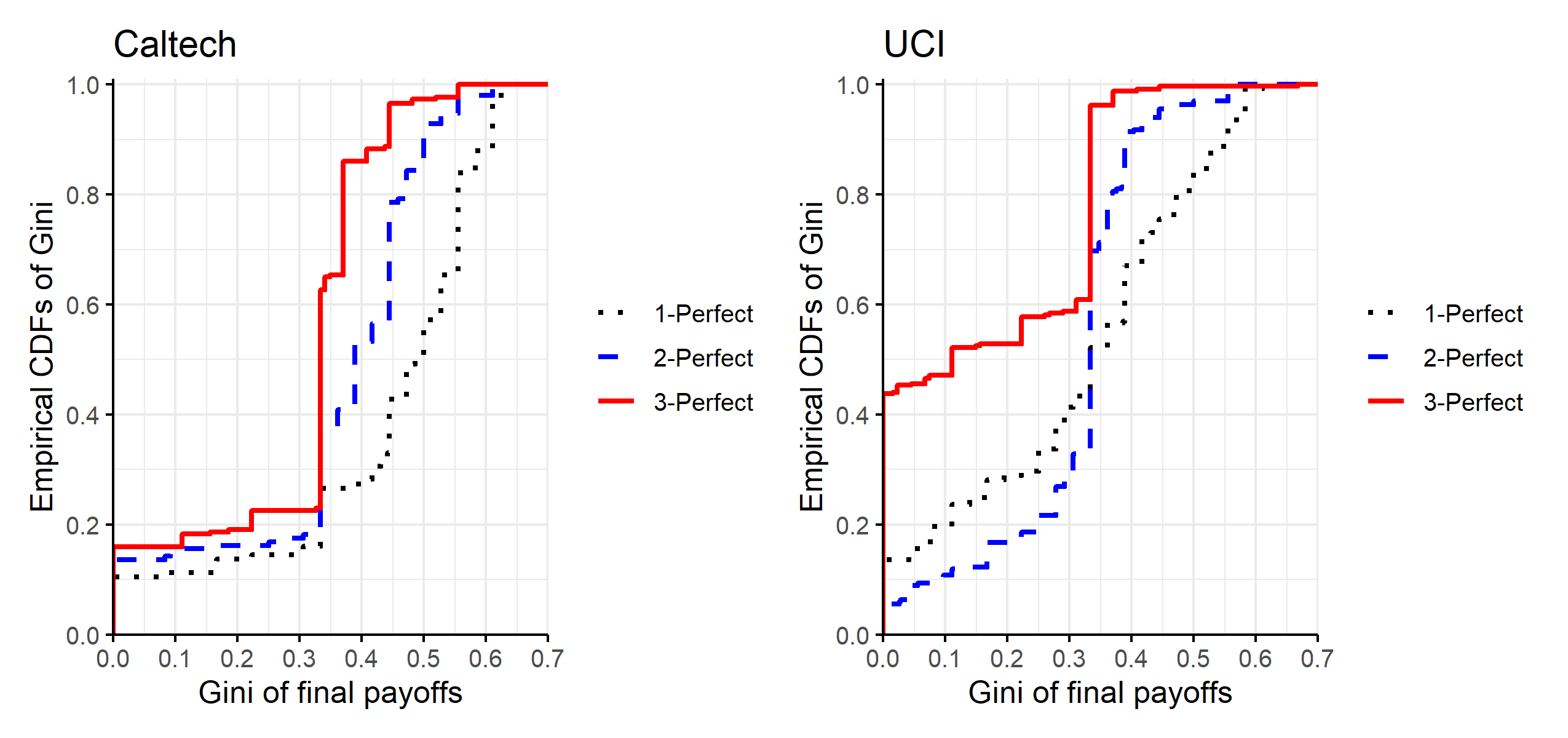}  
\end{figure}


\begin{figure}[h!]
    \centering
    \caption{Coalition types in first-round proposals}
    \includegraphics[width=5in]{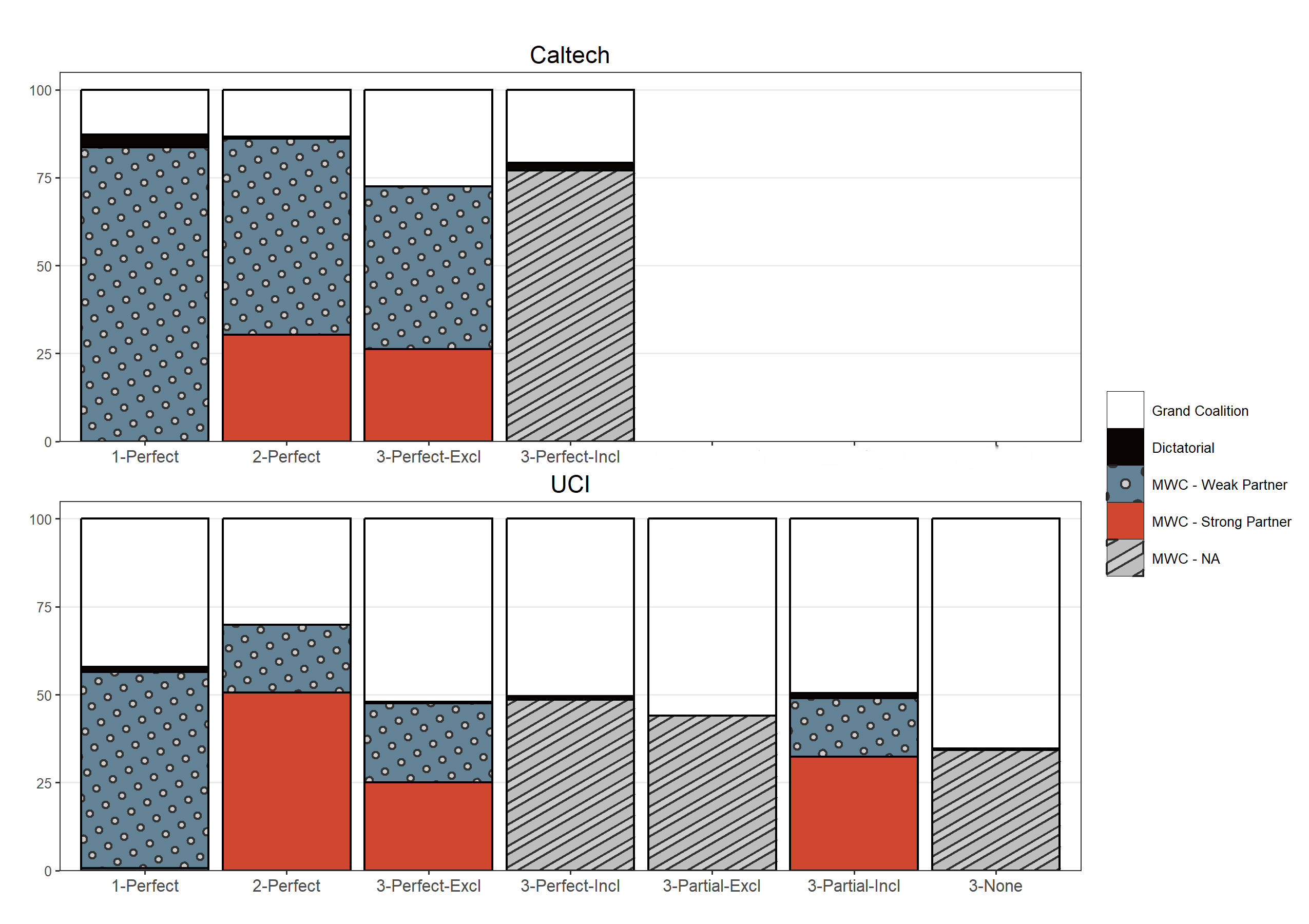}
\end{figure}


\begin{figure}[h!]
    \centering
    \caption{The Frequency of Egalitarian Offers}
    \includegraphics[width=5in]{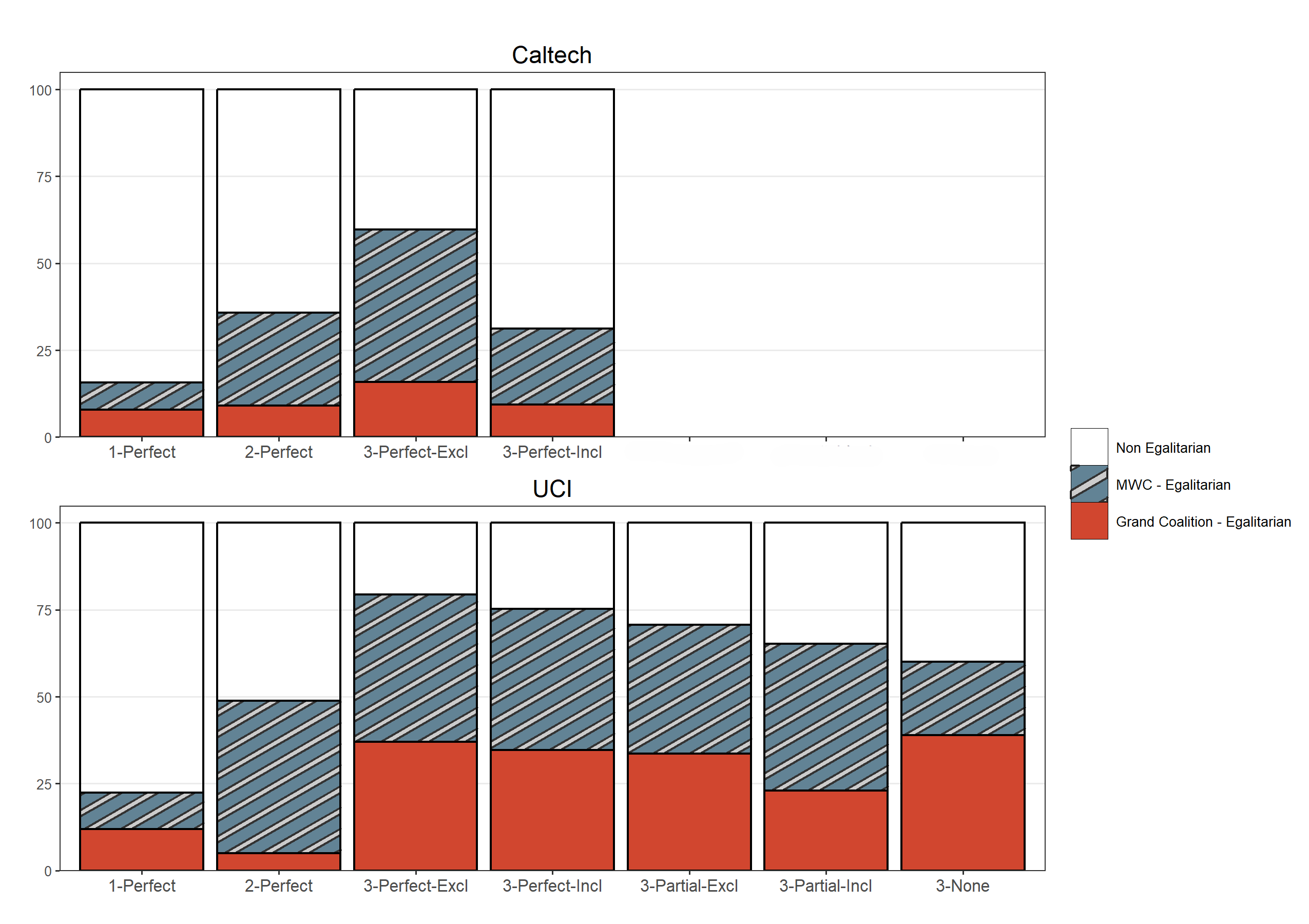}
\end{figure}


\begin{figure}[h!]
    \centering
    \caption{Frequency of Egalitarianism Among MWC Offers to Weak Partners}
    \includegraphics[width=5in]{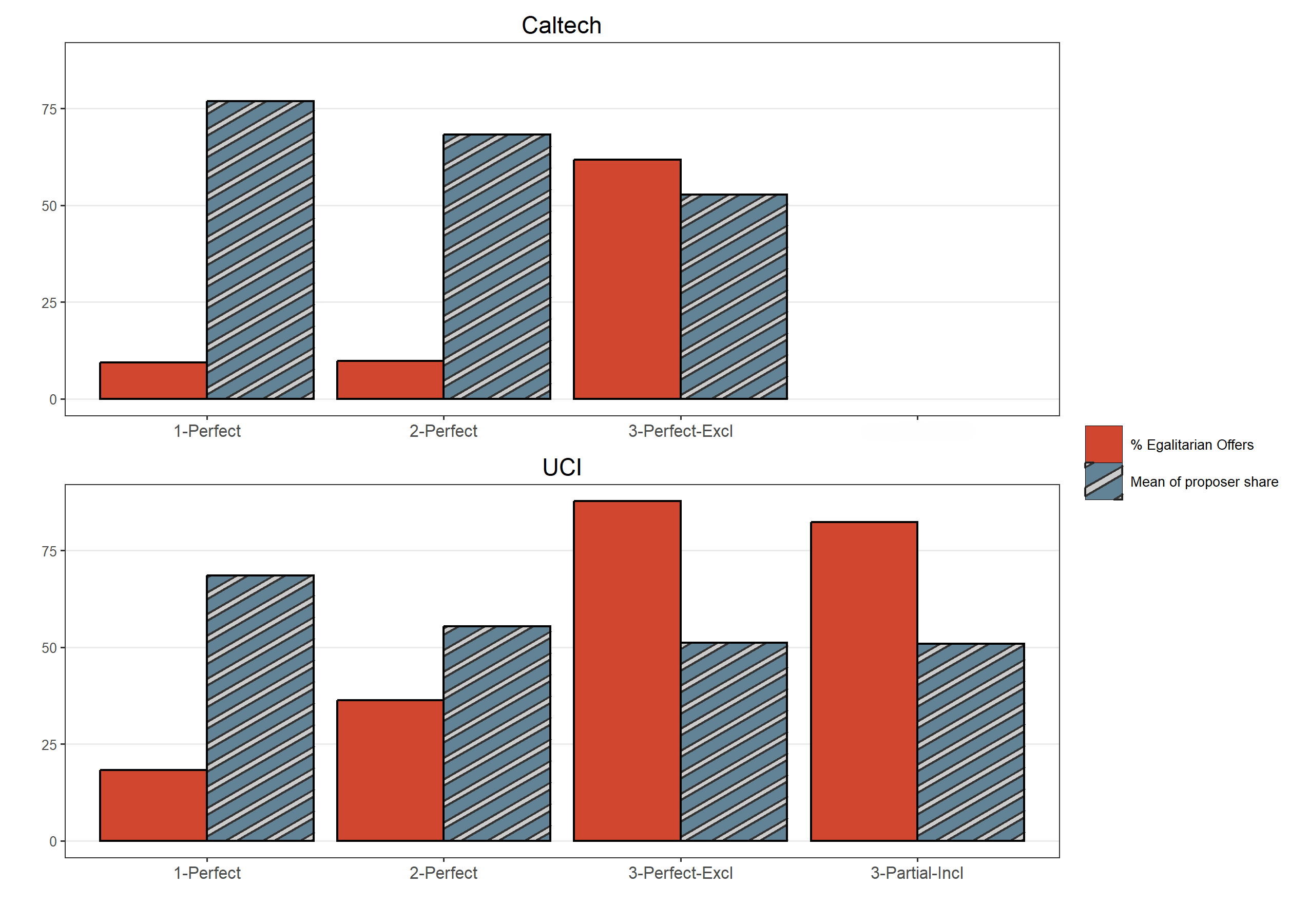}
\end{figure}

\clearpage


\begin{figure}[h]
    \centering
       \caption{Distribution of Offers and Acceptances}
    \includegraphics[width=5.7in]{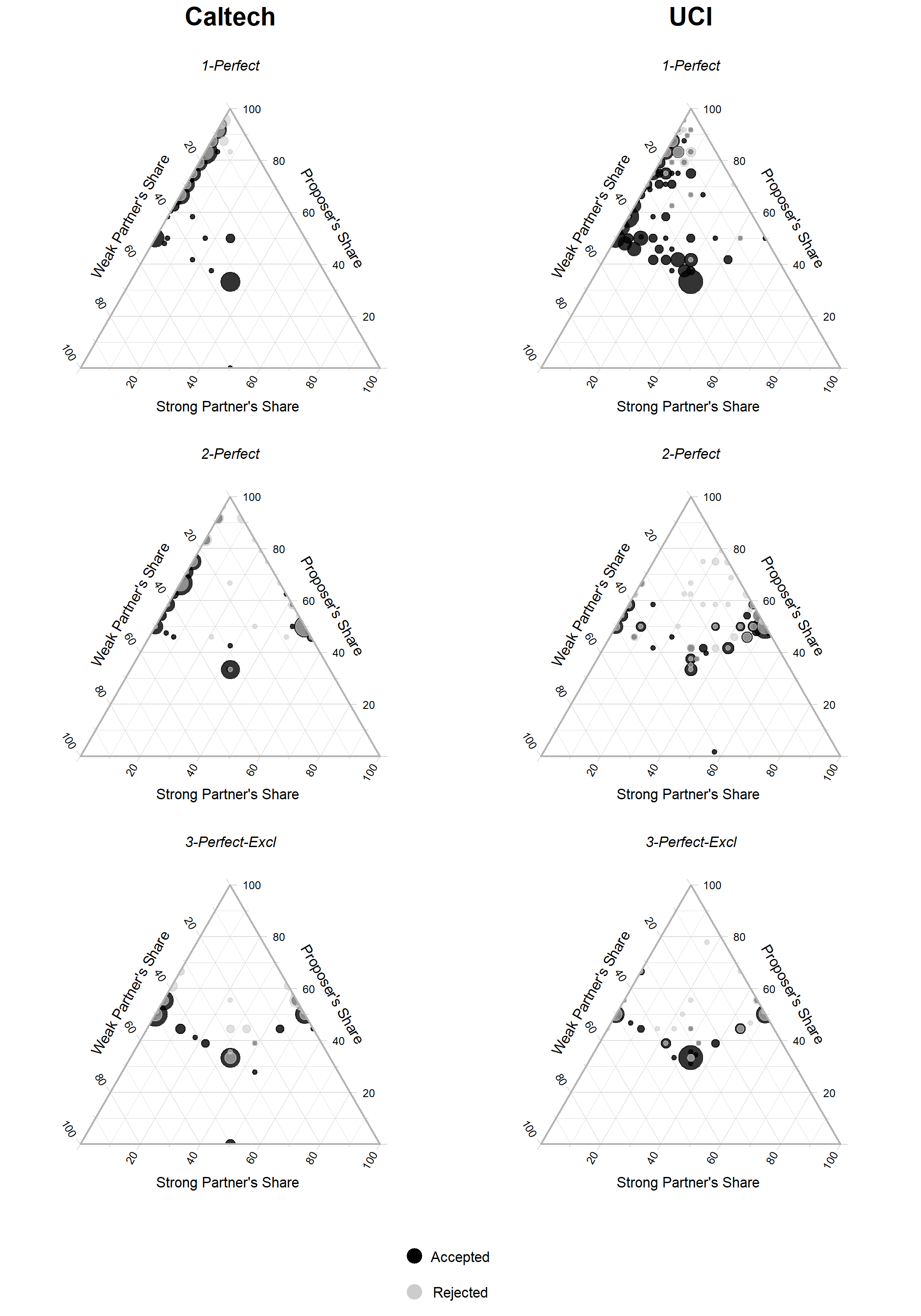}
\end{figure}

\clearpage

\section{Sample Instructions (3-Perfect treatment)}\label{Appendix-Instructions}

The following instructions were read aloud to subjects at the beginning of a 3-Perfect session:

Thank you for agreeing to participate in this group decision-making experiment. During the experiment we require your complete, undistracted attention, and ask that you follow instructions carefully.  You may not open other applications on your computer, talk with other students, or engage in other distracting activities, such as using your phone, reading books, etc.	

You will be paid for your participation in cash, at the end of the experiment.  Different participants may earn different amounts.  What you earn depends partly on your decisions, partly on the decisions of others, and partly on chance. Your earnings during the experiment are denominated in points. At the end of the experiment, the points that you earn will be converted into US dollars using the rate: 1 point = 10 cents.

The entire experiment will take place through computer terminals, and all interaction between you will take place through the computers. If you have any questions during the instruction period, raise your hand and your question will be answered out loud so everyone can hear.  If you have any questions after the experiment has begun, raise your hand, and an experimenter will come and assist you.

You will make choices over a sequence of 15 matches. At the end of the session, the computer will randomly select one match for payment. Each match is equally likely to be selected for payment. You will be paid what you have earned in this selected match, plus the show-up fee of $\$15$. Everyone will be paid in private and you are under no obligation to tell others how much you earned.

WHAT HAPPENS IN EVERY MATCH

At the beginning of every Match, all subjects are randomly divided into 3-member groups.  In addition, each member is randomly assigned an ID number, either 1, 2, or 3. The group assignment and the ID assignments are random, so you will not know the identity of the subjects you are matched with and your group-members will not know your identity. The groups are completely independent of each other and payoffs and decisions in one group have no effect on payoffs and decisions in other groups.

Each group will decide how to allocate budget among the 3 members in your group. Proposals will be voted up or down (accepted or rejected) by majority rule. That is, whenever at least 2 out of 3 members approve a proposal, it passes.

Each Match consists of potentially several Rounds. Your ID number will stay the same in all Rounds of a Match and will not be re-assigned until the next Match. At the beginning of Round 1, your group has a budget of 240 points and needs to decide how to allocate these 240 points among the 3 members in your group.  One of 3 members in your group will be randomly chosen to make a proposal of how to allocate 240 points among the 3 members. We will call this member the proposer. Each member has the same chance of being selected as the proposer. Allocations to each member must be between 0 and 240 points. All allocations must add up to 240 points.

After the selected proposer has made his/her proposal, this proposal will be posted on your computer screens with the proposed allocation to you and the other members clearly indicated. You will then have to vote either to accept or to reject this proposal.
If the proposal passes (gets 2 or more yes votes), the proposed allocation is implemented and the Match is over. If the proposal is defeated (gets 1 or less votes), there will be another Round of the same Match. However, the available budget will be reduced by 5
	
In every Round, each group member is equally likely to be selected as the proposer. However, at the beginning of each Round will be told both the ID number of the proposer in the current Round and the ID number of the proposer in the next Round should your group reach the next Round. This process will repeat itself until a proposed allocation passes (gets 2 or more yes votes).

To summarize, the steps of the process will work as follows:
1.	At the beginning of each Round, one member is randomly selected to be the proposer. The proposer makes a proposal of how to split available budget. In the very first Round, the budget is 240 points.
2.	A vote is held (each member of the group votes to accept or reject the proposal).
3.	If 2 or more members of the group accept it (vote yes), then the proposal passes and the Match is over. If the proposal is rejected, then the budget shrinks by 5

Recall, that in each Match, you will be randomly re-matched into groups of 3 members each. Each member of the group will be assigned an ID number (from 1 to 3), which is displayed on the top of the screen. Once the Match is over, you will be randomly re-matched to form new groups of 3 members and you will be assigned a NEW ID for the next Match. Please make sure you know your ID number when making your decisions. Since ID numbers will be randomly assigned prior to the start of each Match, everyone’s ID number in a Match is only temporary, and will usually change from Match to Match.

COMMUNICATION

In each Round, after the proposer is selected but before he/she submits his/her proposal, members of a group will have the opportunity to communicate with each other using the chat box. The communication is structured as follows. On the top of the screen, each member of the group will be told his/her ID number. You will also know the ID number of the member who is currently selected to make a proposal (proposer). Below you will see a box, in which you will see all messages sent to either all members of your group or to you personally. You will not see the chat messages that are sent privately to other members. In the box below that one, you can type your own message and send it either to the entire group or to particular members of your group. To select members that will receive your message, simply click on the buttons that correspond to the ID numbers of the members who you want to receive this message and hit SEND. You can send message to all members of your group by clicking the SELECT ALL button.

The chat option will be available until the proposer submits her proposal. At this moment the chat option will be disabled.
SCREEN LAYOUT

On the left side of your screen, above the chat box, there is a table, which displays information about the current Match. In this table, you will be shown information about the past Rounds of the current Match, the current round of this Match as well as information about the next Round proposer, which will be relevant if the current Round proposal fails to obtain a majority of votes.

Take a look at this table. The first row of the table refers to Round 1. Specifically, the first row of the table displays the available budget in Round 1 (240 points) as well as the ID number of the proposer. The table also displays some information about Round 2 of the game (the second row of the table) in case the Round 1 proposal fails. In particular, you can see that the budget that would be available in Round 2 is 228 points and  you  can  also  see  the  ID  number  of  the  second  round  proposer  should  the  group  reach  second round. Recall that in each Round, the proposer is chosen randomly among the three members of your group.  However, in every Round you will be told which member was selected to be the proposer in the next Round should the current Round proposal fail. The ID number of the next round proposer is displayed in the last column of the table labeled Next Round Proposers.

After the Round 1 proposer submits his/her proposal, this proposal will be displayed in the table in the column “Proposal” with your own allocation highlighted in RED. Similarly, after all members of your group submit their votes, all the votes will be listed in this table in the column “Votes” with your vote highlighted in RED.

At the beginning of Round 2, a third row will appear in this table with information about the available budget and potential proposers in Round 3 should you reach this Round. Similarly, at the beginning of Round 3, a fourth row will appear in the table, etc...

In addition, there is a history box at the bottom of the screen. This history box lists the final allocation reached by your group in each Match as well as votes of all group members.

REVIEW

1.	The experiment will consist of 15 Matches. There may be several Rounds in each Match.
2.	Prior to each Match, you will be randomly divided into groups of 3 members each. Each member in a group will be assigned an ID number.
3.	At the start of each Match, one member in your group will be randomly selected to make a proposal of how to allocate 240 points among the three of you. Before he/she submits his/her proposal, members of the group can use the chat box to communicate with each other. You may send public messages that will be delivered to all members of your group as well private messages that will be delivered to members that you specify explicitly.
4.	Proposals to each member must be greater than or equal to 0 points.
5.	If a simple majority accepts the proposal (2 or more members), the Match ends.
6.	If a simple majority rejects the proposal then the group moves on to Round 2, in which the available budget is reduced by 5
7.	Each of the three group members is equally likely to be chosen as the proposer in each Round of each Match. However, at the beginning of each Round, the group members will be told the ID number of the current Round proposer as well as the ID number of the next Round proposer should your group reach the next Round.

Are there any questions? We will now go slowly through one practice Match to familiarize you with the screen. After the practice Match is over, we will start the experiment, in which you will play 15 Matches.

\end{document}